\documentclass[final]{amsart}

\usepackage[T1]{fontenc}
\usepackage[utf8]{inputenc}
\usepackage{lmodern}
\usepackage{amsmath,amssymb}
\usepackage{mathrsfs}
\usepackage{amsfonts,amsthm,latexsym}
\usepackage[arrow, matrix, curve]{xy}
\usepackage{mathtools}
\usepackage{nicefrac}
\usepackage{tikz}
\usepackage{bracket-hack}
\usepackage{ming}
\usepackage[english]{babel}
\usepackage{graphicx}
\usepackage{color}
\usepackage{paralist}
\usepackage[final]{hyperref}
\usepackage[colorinlistoftodos,obeyDraft]{todonotes}
\usepackage[notcite,notref]{showkeys}
\usepackage{marginnote}
\usepackage{ifdraft}
\usepackage[a4paper,twoside]{geometry}

\mathtoolsset{showonlyrefs=true}
\mathtoolsset{centercolon}
\setdefaultenum{i)}{}{}{}
\usepackage{auto-eqlabel}

\theoremstyle{definition}
\newtheorem{defi}{Definition}[section]

\newtheorem{rema}[defi]{Remark}

\theoremstyle{plain}
\newtheorem{prop}[defi]{Proposition}
\newtheorem{theo}[defi]{Theorem}

\newtheorem{lemm}[defi]{Lemma}

\newenvironment{subaligned}{\left\{\aligned}{\endaligned\right.}


\newcommand{\R}{{\mathcal R}}

\newcommand{\D}{{\mathcal D}}

\newcommand{\K}{{\mathcal K}}

\renewcommand{\O}{{\mathcal O}}
\newcommand{\C}{{\mathcal C}}

\renewcommand{\S}{{\mathcal S}}

\newcommand{\J}{{\mathcal J}}

\newcommand{\maths}[1]{{\ensuremath{\mathbb #1}}}
\newcommand{\RR}{\maths{R}}
\newcommand{\NN}{\maths{N}}



\newcommand{\ie}{i.\,e.\ }
\newcommand{\st}{such that }


\newcommand{\rhs}{right-hand side}
\newcommand{\lhs}{left-hand side}

\newcommand{\equ}{equation}
\newcommand{\equs}{equations}


\newcommand{\mf}{mani\-fold}

\newcommand{\riem}{Rie\-mann\-ian}

\newcommand{\liegr}{Lie group}

\newcommand{\fundform}{second fun\-da\-men\-tal form}
\newcommand{\curv}{curvature}

\newcommand{\ogon}{or\-tho\-go\-nal}

\newcommand{\mghd}{ma\-xi\-mal glo\-bally hy\-per\-bo\-lic de\-ve\-lop\-ment}
\newcommand{\desitter}{de~Sitter}


\newcommand{\eps}{\varepsilon}

\newcommand{\absval}[1]{\lvert #1 \rvert}





\newcommand{\kasnerparabola}{\K}

\begin{document}

\title[Bianchi~B stiff fluids initial singularity]{Orthogonal Bianchi~B stiff fluids close to the initial singularity}
\author[K. Radermacher]{Katharina Radermacher}
\address{Department of Mathematics, KTH Royal Institute of Technology, SE-10044 Stockholm, Sweden}
\curraddr{}
\email{kmra@kth.se}
\urladdr{}
\dedicatory{}
\date{\today}
\translator{}
\keywords{}

\renewcommand{\S}{{\mathcal S}}
\newcommand{\slimit}{\operatorname s}
\newcommand{\rfactorstiff}{\tilde{\operatorname r}}
\newcommand{\integraltwominusq}{\R}
\newcommand{\betaN}{\beta_{N_+}}
\newcommand{\betaD}{\beta_\Delta}
\newcommand{\stiffalphalimit}{\J}
\newcommand{\stiffalphalimitminus}{\stiffalphalimit_{-}}
\newcommand{\stiffalphalimitzero}{\stiffalphalimit_{\, 0}}
\newcommand{\stiffalphalimitplus}{\stiffalphalimit_+}
\newcommand{\taubone}{\operatorname{T1}}
\newcommand{\taubtwo}{\operatorname{T2}}
\newcommand{\bparamk}{\kappa}
\newcommand{\binvparam}{\eta}
\newcommand{\functiondecaylemma}{\zeta}

\newcommand{\stildelimit}{\tilde{\operatorname{s}}}
\newcommand{\centreunstable}{M}

\newcommand{\considerconvtoD}{Consider a solution to \equs ~\eqref{eqns_evolutionbianchib}--\eqref{eqn_evolutionomega} converging to $(\slimit,\stildelimit,0,0,0)\in\stiffalphalimit$ as~$\tau\rightarrow{-}\infty$}
\newcommand{\considerconvtoaxisD}{Consider a solution to \equs ~\eqref{eqns_evolutionbianchib}--\eqref{eqn_evolutionomega} converging to $(\slimit,0,0,0,0)\in\stiffalphalimit$ as~$\tau\rightarrow{-}\infty$}
\newcommand{\considerconvoffaxisD}{Consider a solution to \equs ~\eqref{eqns_evolutionbianchib}--\eqref{eqn_evolutionomega} converging to $(\slimit,\stildelimit,0,0,0)\in\stiffalphalimit$ as~$\tau\rightarrow{-}\infty$, where~$\stildelimit>0$}

\begin{abstract}
	In our previous article~\cite{radermacher_sccogonbianchibperfectfluidsvacuum}, we investigated the asymptotic behaviour of orthogonal Bianchi class B perfect fluids close to the initial singularity and proved the Strong Cosmic Censorship conjecture in this setting. In several of the statements, the case of a stiff fluid had to be excluded. The present paper fills this gap.
	
	We work in expansion-normalised variables introduced by Hewitt--Wainwright and find that solutions converge, but show a convergence behaviour very different from the non-stiff case: All solutions tend to limit points in the two-dimensional Jacobs set. A set of full measure, which is also a countable intersection of open and dense sets in the state space, yields convergence to a specific subset of the Jacobs set.
\end{abstract}

\maketitle

\section{Introduction}

When modelling idealised distributions of matter in General Relativity, for example in a homogeneous model of the universe, a common choice of matter model is that of a perfect fluid, which includes dust and radiation as specific parameter choices.
The extremal parameter choice of a stiff fluid in particular is sometimes used to model the universe at very early times.
In the present paper, we consider stiff fluids which additionally exhibit Bianchi symmetry.

Besides this exterior motive to investigate stiff fluids, the present paper serves an additional purpose: In our earlier paper~\cite{radermacher_sccogonbianchibperfectfluidsvacuum}, we have investigated the Strong Cosmic Censorship conjecture in certain Bianchi perfect fluid models. While our affirmative answer to the conjecture extends to the case of a stiff fluid, many of our additional results on the asymptotic behaviour towards the initial singularity do not. The present paper fills this gap by treating in detail the asymptotic behaviour of stiff fluid \ogon\ Bianchi class~B solutions.

\subsection{Bianchi stiff fluid spacetimes}

We consider four-dimensional spacetimes~$(M,g)$ which satisfy Einstein's field equations
\begin{equation}
	R_{\alpha \beta} -\frac12 S g_{\alpha \beta} = T_{\alpha \beta}
\end{equation}
for the stress-energy tensor of a perfect fluid
\begin{equation}
\label{eqn_perfectfluid}
	T_{\alpha \beta} = \mu u_\alpha u_\beta +p (g_{\alpha \beta} + u_\alpha u_\beta).
\end{equation}
We assume that the pressure~$p$ and the energy density~$\mu$ are related by a linear equation of state~$p=(\gamma-1)\mu$ with a parameter~$\gamma\in[0,2]$. By setting this parameter to the extremal value~$\gamma=2$, in other words
\begin{equation}
	p=\mu,
\end{equation}
we restrict the discussion to the case of a stiff fluid.

In terms of symmetry, we assume that the spacetimes under consideration arise from initial data with a left-invariant metric and \fundform, meaning that they are invariant under the action of a three-dimensional \liegr~$G$. 
The three-dimensional \liegr~$G$ then acts on the \mghd~$(M,g)$,
\begin{equation}
	G\times M\rightarrow M,
\end{equation}
and leaves the four-dimensional metric~$g$ invariant.
We further assume the unit timelike vector field~$u$ appearing in~\eqref{eqn_perfectfluid} to be \ogon\ to the orbits of the group action, which defines non-tilted perfect fluids.

Three-dimensional \liegr s have been investigated and classified by Bianchi in the 1900s, and the corresponding spacetimes have inherited the name. For a historical account on the classification of the groups, we refer the reader to~\cite{krasinskibehrschueckingestabrookwahlquistellisjantzenkundt_bianchiclass}. Lie groups are called of class~A or~B depending on whether they are unimodular or not. Each class is further separated into several types:
\begin{itemize}
	\item Unimodular, \ie of Bianchi class~A: types~I, II, VI$_0$, VII$_0$, VIII, IX;
	\item Non-unimodular, \ie of Bianchi class~B: types~IV, V, VI$_{\binvparam}$, VII$_{\binvparam}$, with a parameter~$\binvparam\in\RR\setminus\{0\}$.
\end{itemize}
For the current paper, we are interested in \liegr s of Bianchi class~B. Types~I and~II appear as boundary cases.

The setting of \ogon\ Bianchi class~B perfect fluid spacetimes, but with a general parameter~$\gamma$, is what we have investigated in~\cite{radermacher_sccogonbianchibperfectfluidsvacuum}. We have shown there, in Section~11, that we find spacetimes solving Einstein's equations with the correct stress-energy tensor and having the requested symmetry if we start with the \liegr~$G$ as initial data \mf, equipped with a metric and second fundamental form invariant under its own action, together with a constant~$\mu_0$ such that Einstein's constraint equations are satisfied. The \mghd\ then has the form
\begin{equation}
\label{eqn_mghdBianchi}
	(I\times G,\, g={-}dt^2 + {}^tg),
\end{equation}
with~$I$ an interval and~$\{{}^tg\}_{t\in I}$ a family of \riem\ metrics on~$G$ which are invariant under the action of~$G$.
\begin{rema}
\label{rema_excludeexceptionalBianchi}
	Certain spacetimes where the \liegr\ is of Bianchi type~VI$_{{-}1/9}$ are called 'exceptional', as Einstein's constraint equations in these cases allow for an additional degree of freedom in the initial data, see~\cite[Lemma~11.13]{radermacher_sccogonbianchibperfectfluidsvacuum}. As the evolution equations we use in the following do not apply in the 'exceptional' cases, we restrict our discussion to the '\ogon ' setting, which denotes the remaining, 'non-exceptional' cases. We refer to~\cite[Rem.~11.14]{radermacher_sccogonbianchibperfectfluidsvacuum} for more details on this naming.
\end{rema}
We have found in~\cite[Prop.~11.24]{radermacher_sccogonbianchibperfectfluidsvacuum} that for Bianchi class~B spacetimes as in~\eqref{eqn_mghdBianchi} other than Minkowski or \desitter, the lower bound of the maximal interval of existence~$I=(t_-,t_+)$ is finite, $t_->{-}\infty$.
The focus of interest in this paper is to investigate the properties of timeslices~$(\{t\}\times G,{}^tg)$ as~$t$ tends towards the past and approaches this initial singularity.

\subsection{Expansion-normalised evolution equations}

In the setting described in the previous subsection, finding the \mghd\ for given initial data is equivalent to finding the solution curve through a given point for a specific ordinary differential equation with constraints in~$\RR^5$. The five-dimensional variables and evolution equations have been introduced in~\cite{hewittwainwright_dynamicalsystemsapproachbianchiorthogonalB}, and they result from an algebraic manipulation of quantities defined on each timeslice~$\{t\}\times G$, followed by a normalisation with the mean \curv\ of the timeslice. They are called expansion-normalised variables, and their counterpart in Bianchi class~A is known as Wainwright-Hsu variables and was first introduced in~\cite{wainwrighthsu_dynamicalsystemsapproachbianchiorthogonalA}. Equivalence between the geometric setting of finding the \mghd\ and the dynamical setting of solving the five-dimensional ODE has been proven in~\cite[Sect.~11]{radermacher_sccogonbianchibperfectfluidsvacuum}.

In our situation of a stress-energy tensor for a stiff fluid, the dynamical setting is the following:
For a fixed parameter~$\bparamk\in\RR$, the evolution equations for the variables~$(\Sigma_+,\tilde\Sigma,\Delta,\tilde A,N_+)$ are
\begin{equation}
\label{eqns_evolutionbianchib}
\begin{subaligned}
	\Sigma_+'={}&(q-2)\Sigma_+-2\tilde N,\\
	\tilde\Sigma'={}&2(q-2)\tilde\Sigma-4\Sigma_+\tilde A-4\Delta N_+,\\
	\Delta'={}&2(q+\Sigma_+-1)\Delta+2(\tilde\Sigma-\tilde N)N_+,\\
	\tilde A'={}&2(q+2\Sigma_+)\tilde A,\\
	N_+'={}&(q+2\Sigma_+)N_++6\Delta.
\end{subaligned}
\end{equation}
In these equations, one denotes by~$'$ differentiation ${d}/{d\tau}$ with respect to the newly introduced time~$\tau$ which is related to the original time~$t$ via~$dt/d\tau=3/\theta$, 
where~$\theta$ is the mean curvature of the timeslice~$(\{t\}\times G,{}^tg)$.
In these evolution equations, the deceleration parameter~$q$ satisfies
\begin{equation}
\label{eqn_qwithaandn}
	q=2-2\tilde A-2\tilde N,
\end{equation}
and one defines
\begin{equation}
\label{eqn_definitiontilden}
	\tilde N=\frac13(N_+^2-{\bparamk}\tilde A).
\end{equation}
The evolution is constrained to the subset of~$\RR^5$ which satisfies
\begin{equation}
\label{eqn_constraintgeneralone}
	\tilde\Sigma\tilde N-\Delta^2-\Sigma_+^2\tilde A=0,
\end{equation}
as well as
\begin{equation}
\label{eqn_constraintgeneraltwo}
	 \tilde\Sigma\ge0, \qquad \tilde A\ge0, \qquad\tilde N\ge0, \qquad \Sigma_+^2+\tilde\Sigma+\tilde A+\tilde N< 1.
\end{equation}
Equations~\eqref{eqn_constraintgeneralone} and~\eqref{eqn_constraintgeneraltwo} are preserved by the evolution equations~\eqref{eqns_evolutionbianchib}, see~\cite[Remark~3.2]{radermacher_sccogonbianchibperfectfluidsvacuum}.
The density parameter defined by
\begin{equation}
\label{eqn_omegageneral}
	\Omega=1-\Sigma_+^2-\tilde\Sigma-\tilde A-\tilde N
\end{equation}
satisfies~$\Omega>0$ due to the constraints~\eqref{eqn_constraintgeneraltwo} and, as a consequence of the differential equations~\eqref{eqns_evolutionbianchib}, evolves according to
\begin{equation}
\label{eqn_evolutionomega}
	\Omega'=(2q-4)\Omega.
\end{equation}
This setting is a special case of the expansion-normalised variables for perfect fluid \ogon\ Bianchi class~B solutions which we recall in Appendix~\ref{appendix_nonstiffevolution}.

The sign of the parameter~$\bparamk$ is used to distinguish between the different Bianchi types of class~B:
\begin{itemize}
	\item $\bparamk<0$: Bianchi type~VI$_{\binvparam}$ with~$\binvparam=1/\bparamk$,
	\item $\bparamk>0$: Bianchi type~VII$_{\binvparam}$ with~$\binvparam=1/\bparamk$,
	\item $\bparamk=0$: Bianchi type~IV or type V.
\end{itemize}
More precisely, the different Bianchi types of class~B occupy different subsets of the set defined by equations~\eqref{eqn_constraintgeneralone}--\eqref{eqn_constraintgeneraltwo}, and these subsets are invariant under the evolution equations~\eqref{eqns_evolutionbianchib}. For the decomposition of the state space into different parts, we refer the reader to~\cite{radermacher_sccogonbianchibperfectfluidsvacuum}. Table~1 in this reference lists the subsets corresponding to the four Bianchi types of class~B. The same subsets also appear in the statement of Theorem~\ref{theo_fullmeasurelimitset} further down in the present paper. Two Bianchi types of class~A, namely type~I and~II, are contained in the state space as subsets in which~$\tilde A=0$ holds, and they are given in Table~2 of~\cite{radermacher_sccogonbianchibperfectfluidsvacuum}. In these Bianchi types, the parameter~$\bparamk$ is of no importance. Several solutions to the evolution equations having additional symmetry are listed in Table~3 of~\cite{radermacher_sccogonbianchibperfectfluidsvacuum}.

In the present paper, we are interested in the behaviour of solution curves as~$\tau\rightarrow{-}\infty$, which according to~\cite[Sect.~11]{radermacher_sccogonbianchibperfectfluidsvacuum} is equivalent to the limit~$t\rightarrow t_-$ in the \mghd.

\subsection{Previous results}

The evolution equations with constraints, equations~\eqref{eqns_evolutionbianchib}--\eqref{eqn_evolutionomega}, have been introduced in~\cite{hewittwainwright_dynamicalsystemsapproachbianchiorthogonalB} in the general case, encompassing all orthogonal Bianchi class~B perfect fluid solutions, not only the stiff fluid case. In particular, in this reference a number of equilibrium sets are identified and investigated, where an equilibrium set denotes a set of points satisfying the constraint equations~\eqref{eqn_constraintgeneralone}--\eqref{eqn_constraintgeneraltwo} such that the \rhs\ of the evolution equations~\eqref{eqns_evolutionbianchib} is zero. For our discussion of stiff fluid solutions, the Jacobs set~$\stiffalphalimit$ is of central importance, as it governs the asymptotic behaviour as~$\tau\rightarrow{-}\infty$.
\begin{defi}
\label{defi_setD}
The Jacobs set~$\stiffalphalimit$ is defined by
\begin{equation}
    \Sigma_+^2+\tilde\Sigma<1,\qquad \Delta=\tilde A=N_+=0.
\end{equation}
\end{defi}
We remark that in~\cite{hewittwainwright_dynamicalsystemsapproachbianchiorthogonalB} the Jacobs set was denoted by~$\D$, see also Table~4 in that reference.

One easily checks that all points in~$\stiffalphalimit$ satisfy the constraint equations~\eqref{eqn_constraintgeneralone}--\eqref{eqn_constraintgeneraltwo} and that the \rhs\ of the evolution equations~\eqref{eqns_evolutionbianchib} vanishes. Further, the Jacobs set~$\stiffalphalimit$ is characterised by~$q=2$, which is easily deduced from the definition of~$q$, equation~\eqref{eqn_qwithaandn}, and the constraint equation~\eqref{eqn_constraintgeneralone}.

\begin{figure}[ht]
\begin{tikzpicture}[xscale=2.5,yscale=2.5]
	\begin{scope}
		\clip[domain=-1:1] plot (\x, {(\x+1)*(1-\x)});
		\fill [fill=gray!50!white] (-1.2,0) -- (1.2,0) -- (1.2,1) -- (-1.2,1);
	\end{scope}
	\draw [->] (-1.2,0) -- (1.2,0);
	\draw [->] (0,0) -- (0,1.2);
	\draw[gray, domain=-1:1] plot (\x, {(\x+1)*(1-\x)});
	\draw[fill] (-1,0) circle [radius=0.025];
	\node [below] at (0,0) {$0$};
	\draw (1,-0.03) -- (1,0.04);
	\node [below] at (1,0) {$1$};
	\node [below] at (-1,0) {${-}1$};
	\draw (-0.04,1) -- (0.04,1);
	\node [above left] at (0,1) {$1$};
	\node [right] at (1.2,0) {$\Sigma_+$};
	\node [above] at (0,1.2) {$\tilde\Sigma$};
	\node [above right] at (0.2,0.3) {\textcolor{black}{$\stiffalphalimit$}};
	\end{tikzpicture}
	\caption{The Jacobs set~$\stiffalphalimit$, projected to the $\Sigma_+\tilde\Sigma$-plane.}
	\label{figure_setJ}
\end{figure}

When considering the linearisation of the evolution equations~\eqref{eqns_evolutionbianchib} without constraining them to equations~\eqref{eqn_constraintgeneralone}--\eqref{eqn_constraintgeneraltwo}, one can draw conclusions about the local stability. The explicit form of this vector field is given in Appendix~\ref{appendix_linearisedevolution}, its eigenvalues are\footnote{Note that there appears to be a typo in~\cite{hewittwainwright_dynamicalsystemsapproachbianchiorthogonalB} where these eigenvalues are stated. The same typo also occurs for the eigenvalues on the Kasner parabola. In the latter case, we have given a corrected version in \cite[App~A]{radermacher_sccogonbianchibperfectfluidsvacuum}. In both cases, the typos do however not affect the signs of the eigenvalues.}
\begin{equation}
	0 \qquad 2(1+\Sigma_+\pm\sqrt{3\tilde\Sigma}) \qquad 4(1+\Sigma_+) \qquad 0.
\end{equation}
The double occurence of the eigenvalue~$0$ reflects the fact that~$\stiffalphalimit$ is a set of equilibrium points of dimension~two.
Two of the non-zero eigenvalues are strictly positive in~$\stiffalphalimit$, while the remaining one, the eigenvalue~$2(1+\Sigma_+-\sqrt{3\tilde\Sigma})$, can have either sign. We divide the Jacobs set~$\stiffalphalimit$ into three subsets according to the sign of this eigenvalue.
\begin{defi}
	The subsets~$\stiffalphalimitminus$, $\stiffalphalimitzero$, and~$\stiffalphalimitplus$ of the Jacobs set~$\stiffalphalimit$ are defined by
	\begin{align}
		\stiffalphalimitminus\coloneqq{}& \stiffalphalimit \cap \{(1+\Sigma_+)^2<3\tilde\Sigma\},\\
		\stiffalphalimitzero\coloneqq{}& \stiffalphalimit \cap \{(1+\Sigma_+)^2=3\tilde\Sigma\},\\
		\stiffalphalimitplus\coloneqq{}& \stiffalphalimit \cap \{(1+\Sigma_+)^2>3\tilde\Sigma\}.
	\end{align}
\end{defi}
\begin{figure}[ht]
\begin{tikzpicture}[xscale=2.5,yscale=2.5]
	\begin{scope}
		\clip[domain=-1:1] plot (\x, {(\x+1)*(1-\x)});
		\clip[domain=-1:0.5] plot (\x, {(1+\x)^2/3}) -- (1.1,1) -- (-1.1,1);
		\fill [fill=green!50!black!60!white] (-1.2,0) -- (1.2,0) -- (1.2,1) -- (-1.2,1);
	\end{scope}
	\begin{scope}
		\clip[domain=-1:1] plot (\x, {(\x+1)*(1-\x)});
		\clip[domain=-1:0.5] plot (\x, {(1+\x)^2/3}) -- (1.1,1) -- (1.1,0);
		\fill [fill=orange!60!white] (-1.2,0) -- (1.2,0) -- (1.2,1) -- (-1.2,1);
	\end{scope}
	\draw [->] (-1.2,0) -- (1.2,0);
	\draw [->] (0,0) -- (0,1.2);
	\draw[gray, , domain=-1:1,] plot (\x, {(\x+1)*(1-\x)});
	\draw[black, thick, domain=-1:0.5,] plot (\x, {(1+\x)^2/3});
	\draw[fill] (-1,0) circle [radius=0.025];
	\node [below] at (0,0) {$0$};
	\draw (1,-0.03) -- (1,0.04);
	\node [below] at (1,0) {$1$};
	\node [below] at (-1,0) {${-}1$};
	\draw (-0.04,1) -- (0.04,1);
	\node [above left] at (0,1) {$1$};
	\node [right] at (1.2,0) {$\Sigma_+$};
	\node [above] at (0,1.2) {$\tilde\Sigma$};
	\node [above left] at (-1,0) {$\taubone$};
		\draw [dashed] (0.5,0) -- (0.5,0.75);
		\node [below] at (0.5,0) {$1/2$};
	\draw[fill] (0.5,0.75) circle [radius=0.025];
	\node [above right] at (0.5,0.75) {$\taubtwo$};
	\node [above left] at (-0.25, 0.15) {$\stiffalphalimitzero$};
	\node [above left] at (-0.45, 0.7) {\textcolor{green!50!black}{$\stiffalphalimitminus$}};
	\node [above right] at (0.77, 0.3) {\textcolor{orange}{$\stiffalphalimitplus$}};
	\end{tikzpicture}
	\caption{Inside the Jacobs set~$\stiffalphalimit$, the arc~$\stiffalphalimitzero$ joins the two Taub points~$\taubone$ and~$\taubtwo$. The subsets~$\stiffalphalimitminus$ (green) and~$\stiffalphalimitplus$ (orange) are situated above and below of~$\stiffalphalimitzero$, respectively.}
	\label{figure_subsetsD}
\end{figure}
Using the signs of the eigenvalues, Hewitt--Wainwright in~\cite{hewittwainwright_dynamicalsystemsapproachbianchiorthogonalB} identify points in~$\stiffalphalimitplus$ as local sources and points in~$\stiffalphalimitminus$ as saddles. On~$\stiffalphalimitzero$, three eigenvalues vanish, and no statement about the local stability can be given this way.
Applying dynamical system methods, they further show that all~$\alpha$-limit sets of solutions to~\eqref{eqns_evolutionbianchib}--\eqref{eqn_evolutionomega} are contained in the Jacobs set~$\stiffalphalimit$. We give a refined proof of this last statement in Proposition~\ref{prop_alphalimitsets_stiff}.

\subsection{New results: convergence behaviour at the initial singularity}

The objective of the present paper is to investigate in more detail the asymptotic behaviour of solutions to the evolution equations~\eqref{eqns_evolutionbianchib}--\eqref{eqn_evolutionomega} as~$\tau\rightarrow{-}\infty$. After having determined the~$\alpha$-limit set in Proposition~\ref{prop_alphalimitsets_stiff}, we strengthen this statement in Proposition~\ref{prop_convergencetoD}: For every solution, there is exactly one~$\alpha$-limit point~$(\slimit,\stildelimit,0,0,0)\in\stiffalphalimit$ to which the solution converges as~$\tau\rightarrow{-}\infty$.

The different parts of the Jacobs set~$\stiffalphalimit$ have very different qualitative properties regarding which and how many solutions converge to a limit point contained in them. We identify the subset~$\stiffalphalimitminus$ as unstable in the sense that only very specific solutions and limit points can occur: All solutions which are not of Bianchi class~A have limit points situated on a specific arc in~$\stiffalphalimitminus$, and the solutions are contained in a~$C^1$ sub\mf\ of positive codimension, see Theorem~\ref{theo_unstablepartfamilymanifolds}. 
A similar statement holds for convergence towards the special arc~$\stiffalphalimitzero$: Solutions of Bianchi class~B can only converge to one of at most two points on the arc~$\stiffalphalimitzero$, see Proposition~\ref{prop_specialarc} and Remark~\ref{rema_atmosttwopointsstiffzero}. The exact location of the possible Bianchi class~B limit points depends on the choice of parameter~$\bparamk$. 
This restriction to specific limit points 
is captured in the following proposition which applies to convergence to~$\stiffalphalimitminus$ and~$\stiffalphalimitzero$ and is proven in Proposition~\ref{prop_convergencemain_unstablepart}.
\begin{prop}
	Consider a solution to equations~\eqref{eqns_evolutionbianchib}--\eqref{eqn_evolutionomega} converging to~$(\slimit,\stildelimit,0,0,0)\in\stiffalphalimit$ as~$\tau\rightarrow{-}\infty$. If~$(1+\slimit)^2\le3\stildelimit$, then
	\begin{equation}
		\tilde A(3\slimit^2+{\bparamk}\stildelimit)=0
	\end{equation}
	holds along the whole solution.
\end{prop}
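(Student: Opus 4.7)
The plan is to combine the constraint~\eqref{eqn_constraintgeneralone} with a comparison of exponential decay rates derived from the evolution equations near the limit point.

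First, I observe a dichotomy for $\tilde A$: since the evolution equation $\tilde A'=2(q+2\Sigma_+)\tilde A$ is linear in $\tilde A$, either $\tilde A\equiv 0$ along the whole solution, in which case the claim is immediate, or $\tilde A>0$ for all $\tau$. I work henceforth in the second case and aim to show $3\slimit^2+\bparamk\stildelimit=0$.

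Next, substituting the definition~\eqref{eqn_definitiontilden} of~$\tilde N$ into the constraint~\eqref{eqn_constraintgeneralone} and rearranging yields, along the entire solution, the algebraic identity
\begin{equation*}
\tilde A\,(3\Sigma_+^2+\bparamk\tilde\Sigma)=\tilde\Sigma N_+^2-3\Delta^2.
\end{equation*}
Dividing by $\tilde A>0$, the left-hand side tends to $3\slimit^2+\bparamk\stildelimit$ by continuity, so the task reduces to proving $(\tilde\Sigma N_+^2-3\Delta^2)/\tilde A\to 0$ as $\tau\to{-}\infty$.

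For this I compare the exponential decay rates. Integrating the evolution equation for~$\tilde A$ and using $q+2\Sigma_+\to 2(1+\slimit)$ shows that $\tilde A$ decays at rate~$4(1+\slimit)$ up to arbitrarily small corrections. The $2\times 2$ linearisation of the $(\Delta,N_+)$-system at the limit point has eigenvalues $\lambda_\pm=2(1+\slimit)\pm 2\sqrt{3\stildelimit}$, with $\lambda_+$-eigenvector proportional to $(\sqrt{\stildelimit/3},1)$. The hypothesis $(1+\slimit)^2\le 3\stildelimit$ forces $\lambda_-\le 0$; since the trajectory converges as $\tau\to{-}\infty$, no non-trivial component in the $\lambda_-$-direction can survive, so $(\Delta,N_+)$ must align asymptotically with the $\lambda_+$-eigenvector. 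In particular $\Delta^2/N_+^2\to\stildelimit/3$ and hence $\tilde\Sigma N_+^2-3\Delta^2=o(N_+^2)$. Outside the Taub points, the strict inequality $2\lambda_+>4(1+\slimit)$ further yields $N_+^2/\tilde A\to 0$, which combined with the previous observation gives the required vanishing.

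The main obstacle will be a rigorous treatment of the degenerate regimes. On the arc $\stiffalphalimitzero$ the eigenvalues $\lambda_\pm$ coalesce with $\lambda_-=0$, and at the Taub point $\taubone$ the whole non-central part of the linearisation collapses. In these cases the classical unstable manifold theorem does not apply directly, and a more careful argument is needed, for instance based on centre-manifold techniques or on integrating the evolution equation
\begin{equation*}
W'=2(q-2)W-4(\Sigma_+ N_+^2+\bparamk\Delta N_+)
\end{equation*}
satisfied by $W\coloneqq 3\Sigma_+^2+\bparamk\tilde\Sigma$, with sufficient control on the forcing term to conclude $W\to 0$ whenever $\tilde A$ does not vanish identically.
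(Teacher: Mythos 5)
Your overall strategy is the same as the paper's: rewrite the constraint as $\tilde A(3\Sigma_+^2+\bparamk\tilde\Sigma)=\tilde\Sigma N_+^2-3\Delta^2$, bound $\tilde A$ from below by $e^{(4+4\slimit+\eps)\tau}$ (Lemma~\ref{lemm_decaytildea}), and show the right-hand side decays at the strictly faster rate $4+4\slimit+4\sqrt{3\stildelimit}$, which is exactly the mechanism of Lemma~\ref{lemm_deltanplussamesign_stiff}. The decisive rate comparison is correctly identified, and the dichotomy $\tilde A\equiv0$ versus $\tilde A>0$ is fine. (Your worry about the Taub points is moot: since the limit point lies in~$\stiffalphalimit$ one has $\slimit>-1$, and the hypothesis then forces $\stildelimit\ge(1+\slimit)^2/3>0$, so $2\lambda_+>4(1+\slimit)$ always holds here.)

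The genuine gap is the step ``no non-trivial component in the $\lambda_-$-direction can survive, so $(\Delta,N_+)$ aligns with the $\lambda_+$-eigenvector.'' First, even when $\lambda_-<0$, eigencomponents of the frozen linearisation are not well-defined objects for the nonlinear flow; the paper first establishes that $\Delta N_+$ eventually has a fixed sign (Lemma~\ref{lemm_deltanplussigns_stiff}, via the equation for $(\Delta N_+)'$), which is what makes the quotient in the equation for $(\Delta\pm\rfactorstiff N_+)'$ controllable and turns your heuristic into a proof. That part is repairable. The part that is not repaired is the boundary case $(1+\slimit)^2=3\stildelimit$, i.e.\ limit points on~$\stiffalphalimitzero$, where $\lambda_-=0$: a zero eigenvalue is perfectly compatible with convergence to zero (the ``$\lambda_-$-component'' could decay sub-exponentially, with $\absval{\Delta},\absval{N_+}\ge e^{\eps\tau}$ for every $\eps>0$), so convergence alone excludes nothing. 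Ruling this out is the hardest piece of the paper's argument (Lemma~\ref{lemm_deltanplusoppositesign_stiff}): it needs the individual lower bounds, a Taylor expansion of the constraint, a linear combination of $\Sigma_+$ and $\tilde\Sigma^{1/2}$ eliminating the integral $\integraltwominusq$, and finally the monotonicity $\Sigma_+'\le0$ to show that $\absval{e^{-F_4}N_+}$ increases as $\tau\rightarrow{-}\infty$, contradicting $N_+\rightarrow0$. Your proposed fallback via $W=3\Sigma_+^2+\bparamk\tilde\Sigma$ does not close this: integrating $\frac{d}{d\tau}(e^{2\integraltwominusq}W)={-}4e^{2\integraltwominusq}N_+(\Sigma_+N_++\bparamk\Delta)$ only reproduces the limit $W\rightarrow3\slimit^2+\bparamk\stildelimit$ and, in the problematic regime, the forcing decays slower than any exponential, so no contradiction or vanishing is obtained. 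Since the proposition is stated with ``$\le$'', this case must be handled, and the proof as proposed is incomplete there.
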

We find that convergence to a limit point on the~$\Sigma_+$-axis imposes a restriction in a similar fashion: If the solution is of Bianchi class~B, then the only possible limit point on the axis is the origin, see Lemma~\ref{lemm_decaystiff_axis} and Remark~\ref{rema_limitpointsaxis}.
\begin{prop}
	Consider a solution to equations~\eqref{eqns_evolutionbianchib}--\eqref{eqn_evolutionomega} converging to~$(\slimit,0,0,0,0)\in\stiffalphalimit$ as~$\tau\rightarrow{-}\infty$. Then
	\begin{equation}
		\tilde A\slimit=0
	\end{equation}
	holds along the whole solution.
\end{prop}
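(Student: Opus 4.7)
The plan is to argue by contradiction. Suppose that $\tilde A\not\equiv 0$ along the given solution. Since $\tilde A\ge 0$ and the evolution equation $\tilde A' = 2(q+2\Sigma_+)\tilde A$ is linear and homogeneous in $\tilde A$, this forces $\tilde A(\tau)>0$ for all~$\tau$. It therefore suffices to show that $\tilde A>0$ combined with convergence to the axis forces $\slimit=0$; equivalently, I assume $\slimit\neq 0$ and derive a contradiction.

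The starting point is an algebraic identity obtained by combining~\eqref{eqn_constraintgeneralone} with~\eqref{eqn_definitiontilden}: multiplying the constraint by~$3$ and substituting $3\tilde N = N_+^2 - \bparamk\tilde A$ yields
\begin{equation}
	\tilde\Sigma N_+^2 - 3\Delta^2 = \tilde A\bigl(3\Sigma_+^2 + \bparamk\tilde\Sigma\bigr).
\end{equation}
Along our solution the bracket on the right tends to $3\slimit^2>0$, so $\tilde\Sigma N_+^2 - 3\Delta^2$ has the same sign and asymptotic rate as $\tilde A$ near $\tau=-\infty$.

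Since $q - 2 = -2(\tilde A + \tilde N)$ is absolutely integrable at~$-\infty$ (from the convergence of $\log\Omega$ implied by~\eqref{eqn_evolutionomega} together with $\Omega\to 1-\slimit^2>0$), the evolution equations for $\tilde A$, $\Delta$, and $N_+$ can be integrated explicitly to extract their leading-order asymptotics: $\tilde A(\tau)\sim C\,e^{4(1+\slimit)\tau}$ with $C>0$, $\Delta(\tau)\sim E\,e^{2(1+\slimit)\tau}$, and $N_+(\tau)\sim (D+6E\tau)\,e^{2(1+\slimit)\tau}$ as $\tau\to-\infty$. The linear factor in~$\tau$ in the expansion of $N_+$ reflects the Jordan-block coupling $N_+'=(q+2\Sigma_+)N_++6\Delta$.

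Substituting these rates into the identity and distinguishing cases according to the leading-order coefficients $D$ and~$E$ will give the desired contradiction. If $E\neq 0$, then $N_+^2\sim 36E^2\tau^2 e^{4(1+\slimit)\tau}$ dominates $\Delta^2$ and the identity forces $\tilde\Sigma\sim c/\tau^2$ for some $c>0$, so $\tilde\Sigma'$ is of polynomial order $|\tau|^{-3}$; this is incompatible with the right-hand side of $\tilde\Sigma' = 2(q-2)\tilde\Sigma - 4\Sigma_+\tilde A - 4\Delta N_+$, whose terms are all at most of order $|\tau|\,e^{4(1+\slimit)\tau}$, hence exponentially small. If $E=0$ but $D\neq 0$, then $N_+^2\sim D^2 e^{4(1+\slimit)\tau}$ has the same rate as $\tilde A$ and the identity yields $\tilde\Sigma\to 3\slimit^2 C/D^2>0$, contradicting $\tilde\Sigma\to 0$. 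Finally, if $D=E=0$, the invariance of $\{\Delta=0=N_+\}$ under the flow forces $\Delta\equiv 0\equiv N_+$ identically, and the constraint reduces to $\tilde\Sigma = 3\Sigma_+^2/|\bparamk|$ (necessarily with $\bparamk<0$), which again cannot tend to~$0$ when $\slimit\neq 0$. The main technical obstacle will be the careful rate matching in the first case, in particular the correct treatment of the polynomial factor appearing in the Jordan-block expansion of~$N_+$ and the uniform handling of the error terms.
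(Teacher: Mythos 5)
There is a genuine gap, and it sits exactly where the real work is. Your whole case analysis rests on the asserted leading-order expansions $\Delta(\tau)\sim E\,e^{2(1+\slimit)\tau}$ and $N_+(\tau)\sim(D+6E\tau)\,e^{2(1+\slimit)\tau}$ with well-defined constants and controlled errors, but you never justify them, and they do not follow from the integrability of $q-2$ alone. The pair $(\Delta,N_+)$ solves a linear system whose limiting matrix is a Jordan block, perturbed by terms such as the off-diagonal coefficient $2(\tilde\Sigma-\tilde N)$ in the $\Delta'$ equation. Asymptotic integration of a perturbed Jordan block (Levinson/Hartman--Wintner type results) requires integrability, in fact first-moment integrability, of the perturbation; here the perturbation is governed by the decay rate of $\tilde\Sigma$, which is precisely one of the unknowns you are trying to pin down — so invoking such an expansion is circular. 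The paper explicitly remarks, just before Lemma~\ref{lemm_decaystiff_axis}, that in this axis case one cannot even exclude that $\Delta$ and $N_+$ change sign infinitely often as $\tau\rightarrow{-}\infty$; your expansion with $(D,E)\neq(0,0)$ would force eventually constant signs, i.e.\ you are assuming something strictly stronger than what the paper is able to establish. In addition, your third sub-case is a non sequitur: $D=E=0$ means only that the leading coefficients of a (hypothetical) expansion vanish, not that $\Delta$ and $N_+$ vanish identically, so the invariance of $\{\Delta=0=N_+\}$ gives you nothing there. Finally, $\tilde A\sim C e^{4(1+\slimit)\tau}$ with $C>0$ also needs first-moment integrability of $q-2$ and $\Sigma_+-\slimit$, not just integrability; the paper deliberately works with the weaker $\eps$-bounds of Lemma~\ref{lemm_decaytildea}.

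For comparison, the paper's proof (Lemma~\ref{lemm_decaystiff_axis}) avoids any precise asymptotics for $\Delta$ and $N_+$. It controls $N_+$ via the substitution $X=N_+/\tilde A^{1/2}$, for which $X'=6\Delta/\tilde A^{1/2}$, the constraint-based estimate of Lemma~\ref{lemm_estimatedeltaconstraint}, and a Gronwall bound on $Y=X^2+1$ using only $\tilde\Sigma\rightarrow0$ — an argument insensitive to sign changes. This yields $N_+=\O(e^{(2+2\slimit-\eps)\tau})$, hence $\tilde N,\,q-2=\O(e^{(4+4\slimit-\eps)\tau})$ and then $\tilde\Sigma=\O(e^{(4+4\slimit-\eps)\tau})$ from its evolution equation. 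The conclusion is then the same kind of rate mismatch you were aiming for, but run through the sign of the constraint: $0\le\Delta^2=\tilde\Sigma\tilde N-\Sigma_+^2\tilde A\le{-}\slimit^2e^{(4+4\slimit+\eps)\tau}+\O(e^{(8+8\slimit-2\eps)\tau})$, which forces $\slimit=0$. If you want to rescue your approach, you would have to replace the claimed expansions by upper bounds of this Gronwall type; as written, the proposal assumes the hard part.
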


Considering all possible solutions to equations~\eqref{eqns_evolutionbianchib}--\eqref{eqn_evolutionomega} as a whole, we find that almost all converge to~$\stiffalphalimitplus$ and possibly some additional special points. 'Almost all' in this setting is well-defined both in a measure-theoretic and a topological way. This is the statement of the following theorem, which constitutes the main result of the present paper. The proof is given in Section~\ref{section_proofmaintheorem}.
\begin{theo}
\label{theo_fullmeasurelimitset}
	The following holds for solutions to the evolution \equs~\eqref{eqns_evolutionbianchib}--\eqref{eqn_evolutionomega}:
	\begin{itemize}
		\item Consider the set describing Bianchi type~VI$_{\binvparam}$, \ie the set
		\begin{equation}
			B(VI_{\binvparam}) ={}\{\eqref{eqn_constraintgeneralone}-\eqref{eqn_constraintgeneraltwo}\text{ hold},\,{\bparamk}=\nicefrac1{\binvparam}<0,\, \tilde A>0\}.
		\end{equation}
		Then the subset of points such that the corresponding solution converges to a point in~$\stiffalphalimitplus$ or to a point in~$\stiffalphalimitzero\cap\{3\Sigma_+^2+\bparamk\tilde\Sigma=0\}$, as~$\tau\rightarrow{-}\infty$, is
		of full measure and a countable intersection of open and dense sets
		in B(VI$_{\binvparam}$).
		\item Consider the set describing Bianchi type~VII$_{\binvparam}$, \ie the set
		\begin{equation}
			B^\pm(VII_{\binvparam}) ={}\{\eqref{eqn_constraintgeneralone}-\eqref{eqn_constraintgeneraltwo}\text{ hold},\, {\bparamk}=\nicefrac1{\binvparam}>0,\,\tilde A>0,\,N_+>0\textit{ or }N_+<0\}.
		\end{equation}
		Then every solution converges to a point in~$\stiffalphalimitplus$ as~$\tau\rightarrow{-}\infty$.
		\item Consider the set describing Bianchi type~IV, \ie the set
		\begin{equation}
			B^\pm(IV) =\{\eqref{eqn_constraintgeneralone}-\eqref{eqn_constraintgeneraltwo}\text{ hold},\, {\bparamk}=0,\,\tilde A>0,\,N_+>0\textit{ or }N_+<0\}.
		\end{equation}
		Then every solution converges to a point in~$\stiffalphalimitplus$ or to a point in~$\stiffalphalimit\cap\{\Sigma_+=0\}$ as~$\tau\rightarrow{-}\infty$.
		\item Consider the set describing Bianchi type~V, \ie the set
		\begin{equation}
			B(V)=\{\eqref{eqn_constraintgeneralone}-\eqref{eqn_constraintgeneraltwo}\text{ hold},\,
			{\bparamk}=0,\,\tilde A>0,\,\Sigma_+=\Delta=N_+=0\}.
		\end{equation}
		Then every solution converges to a point in~$\stiffalphalimit\cap\{\Sigma_+=0\}$ as~$\tau\rightarrow{-}\infty$.
	\end{itemize}
\end{theo}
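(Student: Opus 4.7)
The plan is to combine Proposition~\ref{prop_convergencetoD}, which provides a unique $\alpha$-limit point in $\stiffalphalimit$ for every orbit of~\eqref{eqns_evolutionbianchib}--\eqref{eqn_evolutionomega}, with the two obstructions established just before the theorem: convergence to $\stiffalphalimitminus \cup \stiffalphalimitzero$ (the region $(1+\Sigma_+)^2\le 3\tilde\Sigma$) forces $\tilde A(3\Sigma_+^2+\bparamk\tilde\Sigma)\equiv 0$ along the entire orbit, while convergence to the $\Sigma_+$-axis forces $\tilde A\Sigma_+\equiv 0$. On each of the four defining sets $\tilde A$ is either identically zero (only in $B(V)$) or strictly positive and preserved by $\tilde A'=2(q+2\Sigma_+)\tilde A$, so these obstructions reduce to purely algebraic constraints on $\Sigma_+$ and $\tilde\Sigma$ along the orbit.

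The three easier cases follow directly. For $B(V)$ the subset $\{\Sigma_+=\Delta=N_+=0\}$ is easily seen from~\eqref{eqns_evolutionbianchib} (using $\bparamk=0$) to be invariant, so $\Sigma_+\equiv 0$ and the limit lies in $\stiffalphalimit\cap\{\Sigma_+=0\}$. For $B^\pm(IV)$ with $\bparamk=0$ the first obstruction reduces to $\Sigma_+^2\equiv 0$, so any $\alpha$-limit in $\stiffalphalimitminus\cup\stiffalphalimitzero$ lies on the axis $\{\Sigma_+=0\}$; the axis obstruction yields nothing new, and the remaining orbits converge to $\stiffalphalimitplus$. For $B^\pm(VII_\eta)$ with $\bparamk>0$ and $\tilde\Sigma\ge 0$ the condition $3\Sigma_+^2+\bparamk\tilde\Sigma\equiv 0$ forces $\Sigma_+\equiv\tilde\Sigma\equiv 0$; the $\alpha$-limit would then be the origin, which however lies in $\stiffalphalimitplus$ since $(1+0)^2>3\cdot 0$, a contradiction. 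Hence in this case every orbit converges to $\stiffalphalimitplus$.

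The substantive case is $B(VI_\eta)$ with $\bparamk<0$. The first obstruction confines orbits with $\alpha$-limit in $\stiffalphalimitminus\cup\stiffalphalimitzero$ to the invariant algebraic hypersurface $\{3\Sigma_+^2+\bparamk\tilde\Sigma=0\}$. Intersecting this with $\stiffalphalimitzero$ gives the quadratic $(\bparamk+9)\Sigma_+^2+2\bparamk\Sigma_++\bparamk=0$, which has at most two solutions, each automatically satisfying $3\Sigma_+^2+\bparamk\tilde\Sigma=0$. Therefore any $\alpha$-limit on $\stiffalphalimitzero$ already sits in the special subset appearing in the theorem, and the only remaining ``exceptional'' orbits are those with $\alpha$-limit in $\stiffalphalimitminus$. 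By Theorem~\ref{theo_unstablepartfamilymanifolds} these orbits are contained in a family of $C^1$ submanifolds of positive codimension, indexed by points of an arc in $\stiffalphalimitminus$. Selecting a countable dense subset of the arc produces a countable collection of $C^1$ submanifolds whose union exhausts the exceptional set; this union has Lebesgue measure zero, and a decomposition into relatively compact, locally closed pieces makes it a countable union of closed nowhere dense subsets of $B(VI_\eta)$, so its complement is a residual full-measure set.

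The main obstacle is the final step in $B(VI_\eta)$: Theorem~\ref{theo_unstablepartfamilymanifolds} gives a stable submanifold at each base point of the arc, but the passage from ``one submanifold per base point'' to a countable covering with the meagre and measure-zero properties requires some control of how the submanifolds vary with the base point. Concretely, one needs the $\alpha$-limit map to depend sufficiently regularly on the initial data that a dense countable sample of base points exhausts the whole exceptional set, and one must replace each submanifold in the countable family by (countably many) closed nowhere dense pieces. The Lebesgue measure statement is routine once a countable $C^1$ cover is in hand; the residuality statement is the delicate part and will rely on the continuity properties built into the proof of Theorem~\ref{theo_unstablepartfamilymanifolds}.
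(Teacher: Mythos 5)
Your overall skeleton matches the paper's: Proposition~\ref{prop_convergencetoD} gives a unique limit point in~$\stiffalphalimit$, Proposition~\ref{prop_convergencemain_unstablepart} (together with Lemma~\ref{lemm_decaystiff_axis}) supplies the algebraic obstruction at the limit point, and the cases VII$_{\binvparam}$, IV and V are dispatched exactly as in the paper. Two intermediate claims are misstated, though. Proposition~\ref{prop_convergencemain_unstablepart} asserts $\tilde A\,(3\slimit^2+\bparamk\stildelimit)=0$ with $\slimit,\stildelimit$ the \emph{limit} values; for $\tilde A>0$ it pins down the location of the limit point, but it does not confine the orbit to the hypersurface $\{3\Sigma_+^2+\bparamk\tilde\Sigma=0\}$, which is not invariant (its derivative contains the term $-4N_+(\Sigma_+N_++\bparamk\Delta)$). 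This does not damage your conclusions, since you only ever use the location of the limit point. Also, all four sets in the theorem have $\tilde A>0$ by definition, so ``identically zero (only in $B(V)$)'' is a slip; for $B(V)$ what you actually need, and correctly supply, is invariance of $\{\Sigma_+=\Delta=N_+=0\}$.

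The genuine gap is the one you flag yourself, and it stems from misreading Theorem~\ref{theo_unstablepartfamilymanifolds}. That theorem does not produce ``one submanifold per base point of the arc'': its proof fixes a countable exhaustion of the arc by compact subarcs~$K_m$, attaches to each $K_m$ a \emph{single} four-dimensional centre-unstable manifold valid on a whole neighbourhood of~$K_m$, intersects it transversally with the constraint surface, and pushes the result forward by the integer-time flow. Its conclusion is therefore already a countable union of $C^1$ submanifolds of codimension at least one containing \emph{all} Bianchi~VI$_{\binvparam}$ orbits converging to~$\stiffalphalimitminus$ with $\Delta$, $N_+$ not both vanishing, and the paper's proof of Theorem~\ref{theo_fullmeasurelimitset} simply invokes this. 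Your proposed substitute --- sampling a countable dense subset of base points and appealing to regularity of the $\alpha$-limit map --- would not close the gap you describe: an orbit converging to an unsampled limit point has no reason to lie in the submanifold attached to a nearby sampled point, and no continuity of the limit map in the initial data is established anywhere in the paper. Separately, Theorem~\ref{theo_unstablepartfamilymanifolds} is a dichotomy, and you omit its first alternative, the invariant set $\{\tilde A>0,\ \Delta=0=N_+,\ 3\Sigma_+^2+\bparamk\tilde\Sigma=0\}$; this must also be observed to have positive codimension (it does: it is two-dimensional inside the four-dimensional $B(VI_{\binvparam})$) before the complement can be declared co-null and residual. Your measure-zero/meagre argument for a countable union of positive-codimension $C^1$ submanifolds is fine as stated.
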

\begin{rema}
	Recall that we have excluded 'exceptional' Bianchi type~VI$_{{-}1/9}$ spacetimes, see Remark~\ref{rema_excludeexceptionalBianchi}. As a consequence, our statement in Theorem~\ref{theo_fullmeasurelimitset} does not include all Bianchi type~VI$_\binvparam$ spacetimes, even though it applies to the choice of parameter~$\bparamk={{-}9}$ in the evolution equations~\eqref{eqns_evolutionbianchib}--\eqref{eqn_evolutionomega}.
\end{rema}

\begin{rema}
	For solutions of Bianchi type~IV and~V, the previous theorem allows for convergence to limit points with~$\Sigma_+=0$. In case of Bianchi~V, this is the only option, while in Bianchi~IV this might be a relict of the choice of coordinates, turning the surface defined by the constraint equation~\eqref{eqn_constraintgeneralone} singular in these points.
\end{rema}

\subsection{Comparison to the results in the non-stiff fluid setting}

The results in this paper should be read alongside those of~\cite{radermacher_sccogonbianchibperfectfluidsvacuum}, where we investigated the asymptotic behaviour in the case of \ogon\ Bianchi class~B perfect fluid solution, but had to exclude the case of a stiff fluid. The present paper completes this discussion.

In terms of expansion-normalised variables, the stiff fluid case merely corresponds to setting one parameter, called~$\gamma$, to its extremal value. The full set of evolution equations in the general perfect fluid setting is given as equations~(5)--(11) in~\cite{radermacher_sccogonbianchibperfectfluidsvacuum} and was developed in~\cite{hewittwainwright_dynamicalsystemsapproachbianchiorthogonalB}. In Appendix~\ref{appendix_nonstiffevolution} we give more details on how to obtain the stiff fluid case from the general case.

Even though the evolution equations are very closely related, the asymptotic behaviour in the case of stiff fluids differs significantly from that in non-stiff fluids. As~$\tau\rightarrow{-}\infty$, stiff fluid solutions converge to a limit point in~$\stiffalphalimit$ and~$\Omega$ tends to a positive limit value. In the non-stiff fluid setting, at least for parameters~$\gamma\in[0,2/3)$ where this could be shown, limit points of non-constant solutions are contained in one of two arcs: the Kasner parabola~$\kasnerparabola$ defined by
\begin{equation}
	\Sigma_+^2+\tilde\Sigma=1,\qquad \Delta=\tilde A=N_+=0,
\end{equation}
see~\cite[Def.~1.15]{radermacher_sccogonbianchibperfectfluidsvacuum},
which is contained in the boundary of the Jacobs set~$\stiffalphalimit$, and the plane wave equilibrium equilibrium points. Both arcs are contained in vacuum~$\Omega=0$. For a definition of the plane wave equilibrium set as well as the statement, we refer to Definition~1.17 as well as Proposition~4.2, Proposition~4.4 and Proposition~6.1 in~\cite{radermacher_sccogonbianchibperfectfluidsvacuum}.

Despite the different locations of the limit points, there are striking similarities between the Jacobs set~$\stiffalphalimit$ for stiff fluids and the Kasner parabola~$\kasnerparabola$ for non-stiff fluids. More precisely, the different subset~$\stiffalphalimitminus$, $\stiffalphalimitzero$ and~$\stiffalphalimitplus$ have counterparts in~$\kasnerparabola$ with similar behaviour of the linearisation of the evolution equations, compare~\cite[App.~A.1]{radermacher_sccogonbianchibperfectfluidsvacuum}:
\begin{itemize}
	\item In~$\stiffalphalimitminus$ (stiff) as well as in the subset~$\kasnerparabola\cap\{{-}1<\Sigma_+<1/2\}$ (non-stiff), the number of zero eigenvalues equals the dimension of the equilibrium set. One eigenvalue is negative, all others are positive.
	\item In~$\stiffalphalimitzero$ (stiff) as well as in the point Taub~2, which is the point~$\kasnerparabola\cap\{\Sigma_+=1/2\}$ (non-stiff), the number of zero eigenvalues exceeds the dimension of the equilibrium set by one. All others eigenvalues are positive.
	\item In~$\stiffalphalimitplus$ (stiff) as well as in the subset~$\kasnerparabola\cap\{\Sigma_+>1/2\}$ (non-stiff), the number of zero eigenvalues equals the dimension of the equilibrium set. All other eigenvalues are positive.
\end{itemize}
These similaritites carry through to similarities of the asymptotic behaviour of solutions to the full evolution equations:
\begin{itemize}
	\item A point in~$\stiffalphalimitminus\cup\stiffalphalimitzero$ (stiff) as well as a point in the subset~$\kasnerparabola\cap\{{-}1<\Sigma_+\le1/2\}$ (non-stiff) can be the limit point of solutions of class~B ($\tilde A>0$) only if it satisfies a specific relation which includes the parameter~$\bparamk$. For the statement in the non-stiff fluid case, see~\cite[Prop.~1.19]{radermacher_sccogonbianchibperfectfluidsvacuum}.
	\item The set~$\stiffalphalimitplus$ (stiff) as well as the arc~$\kasnerparabola\cap\{\Sigma_+>1/2\}$  (non-stiff) are the sets of points to which 'most' solutions converge, in a well-defined measure-theoretic and topological sense. In the non-stiff setting, this is the statement of~\cite[Thm~1.18]{radermacher_sccogonbianchibperfectfluidsvacuum}.
	
	Convergence to limit points with~$\Sigma_+=0$ for solutions of Bianchi type~IV might be a relict of the choice of coordinates both in Theorem~\ref{theo_fullmeasurelimitset} in the present paper and in~\cite[Thm~1.18]{radermacher_sccogonbianchibperfectfluidsvacuum}.
\end{itemize}
From a geometric point of view, the different limit sets in stiff and non-stiff fluids have more in common than one might judge from a first glance.
In fact, both the Kasner parabola~$\kasnerparabola$ and the Jacobs set~$\stiffalphalimit$ are contained in the invariant subset describing Bianchi type~I, see also~\cite[Table~2]{radermacher_sccogonbianchibperfectfluidsvacuum}. Neglecting for a moment the existence of (negligibly many, see our comparison above or~\cite[Thm~10.3]{radermacher_sccogonbianchibperfectfluidsvacuum}) non-stiff solutions converging to plane wave equilibrium points, both stiff and non-stiff perfect fluid solutions converge to Bianchi type~I limit points as~$\tau\rightarrow{-}\infty$. In terms of initial data to Einstein's field equations, this implies that the data induced on Cauchy hypersurfaces~$\{t\}\times G$ approaches Bianchi type~I initial data as~$t\rightarrow t_-$.

The techniques we employ in the present paper ressemble that of~\cite{radermacher_sccogonbianchibperfectfluidsvacuum}: The main work is carried out in an extensive discussion of decay and convergence rates of the individual variables, depending in part on the location of the limit point.
Once this is achieved in Section~\ref{section_asymptoticconvergence}, a more detailed investigation of the global behaviour of solutions converging to each of the different subsets ensues, building on the relations between the different decay rates. For some of the subsets, we additionally employ dynamical systems techniques.

We note at this point that the expression for~$q$ in stiff fluids, equation~\eqref{eqn_qwithaandn}, is identical to the form we have available in vacuum models, see~\cite[eq.~(14)]{radermacher_sccogonbianchibperfectfluidsvacuum}.
This expression for~$q$ is the reason why many statement that hold in vacuum strongly ressemble those found in stiff fluid.

\subsection*{Acknowledgements}

The author wishes to express gratitude 
towards Hans Ringström for fruitful discussions and ongoing supervision 
as well as numerous comments on the article.

The author would like to acknowledge the support of the Göran Gustafsson Foundation for
Research in Natural Sciences and Medicine.
This research was supported by the Swedish Research Council,
Reference number 621-2012-3103.

\section{Recollection of basic properties}

The evolution equations~\eqref{eqns_evolutionbianchib}--\eqref{eqn_evolutionomega} are a special case of the setting discussed in~\cite{hewittwainwright_dynamicalsystemsapproachbianchiorthogonalB} and~\cite{radermacher_sccogonbianchibperfectfluidsvacuum}, namely when choosing~$\gamma=2$ and~$\Omega>0$, see Appendix~\ref{appendix_nonstiffevolution}. In this section, we recall a number of results obtained in the general setting which we make use of in the present paper.

\begin{rema}
\label{rema_statespacecompact}
	According to~\cite[Rem.~3.3]{radermacher_sccogonbianchibperfectfluidsvacuum}, the state space of the evolution is compact, and the individual variables are contained in the following intervals:
	\begin{equation}
		\Sigma_+ \in[{-}1,1],\quad
		\tilde\Sigma \in[0,1],\quad
		\tilde A \in[0,1],\quad
		\tilde N \in[0,1],\quad
		\Omega \in[0,1],\quad
		q \in[{-}1,2].
	\end{equation}
	For the stiff fluid evolution equations~\eqref{eqns_evolutionbianchib}--\eqref{eqn_evolutionomega}, the bounds on~$q$ can be further improved: 
	Combining the definition of~$q$, equation~\eqref{eqn_qwithaandn}, with the definition of~$\Omega$, equation~\eqref{eqn_omegageneral}, we find that
	\begin{equation}
	\label{eqn_qwithsigma}
		q=2(\Sigma_+^2+\tilde\Sigma+\Omega).
	\end{equation}
	This yields
	\begin{equation}
		q\in[0,2],
	\end{equation}
	and one also finds
	\begin{equation}
		\tilde A+\tilde N\le1.
	\end{equation}
\end{rema}
\begin{rema}
\label{rema_symmetryevolution}
	The evolution equations~\eqref{eqns_evolutionbianchib}--\eqref{eqn_evolutionomega} as well as the additional expression for~$q$, equation~\eqref{eqn_qwithsigma}, are invariant under the symmetry
	\begin{equation}
		(\Delta,N_+)\mapsto {-}(\Delta,N_+) .
	\end{equation}
\end{rema}
\begin{rema}
	The function
	\begin{equation}
	\label{eqn_fctZ}
		Z=(1+\Sigma_+)^2-\tilde A,
	\end{equation}
	introduced in~\cite{hewittwainwright_dynamicalsystemsapproachbianchiorthogonalB}, satisfies~$Z\ge0$. The set~$Z=0$ is contained in vacuum~$\Omega=0$.
	Restricting to the setting of stiff fluids, \ie $\gamma=2$ and~$\Omega>0$, this function has derivative
	\begin{equation}
	\label{eqn_fctZderivative}
		Z'={-}2(2-q)Z.
	\end{equation}
	For a more detailed discussion, see~\cite[p.~19]{radermacher_sccogonbianchibperfectfluidsvacuum}.
\end{rema}
\begin{rema}
\label{rema_singularconstraintequ}
	At several instances in this paper, we restrict our attention to the evolution \equ s~\eqref{eqns_evolutionbianchib}--\eqref{eqn_definitiontilden} in~$\RR^5$, \ie without assuming the constraint \equ s~\eqref{eqn_constraintgeneralone}--\eqref{eqn_constraintgeneraltwo}. We call this dynamical system the \textit{evolution \equ s in the extended state space}, see also~\cite[Rem~3.5]{radermacher_sccogonbianchibperfectfluidsvacuum}.
	
	When constraining the evolution to the physical state space, \ie to the set defined by the constraint \equ s~\eqref{eqn_constraintgeneralone}--\eqref{eqn_constraintgeneraltwo}, we recover the original evolution equations.
	The first constraint equation is invariant under the evolution~\eqref{eqns_evolutionbianchib}--\eqref{eqn_definitiontilden}. As long as the gradient of its \lhs
	\begin{equation}
	\label{eqn_gradientconstraintequ}
		({-}2\Sigma_+\tilde A\,,\,\frac13(N_+^2-{\bparamk}\tilde A)\,,\,{-}2\Delta\,,\,{-}(\Sigma_+^2+\frac13{\bparamk}\tilde\Sigma)\,,\,\frac23\tilde\Sigma N_+)
	\end{equation}
	does not vanish, the hypersurface defined by the constraint equation~\eqref{eqn_constraintgeneralone} is non-singular and thus a sub\mf. The only exceptions to a non-vanishing gradient are:
	\begin{itemize}
		\item ${\bparamk}>0$, $\Sigma_+=\tilde\Sigma=\Delta=0$, $N_+^2={\bparamk}\tilde A$, which defines an invariant set of dimension one.
		\item ${\bparamk}=0$, $\Sigma_+=N_+=\Delta=0$, which defines an invariant set of dimension two.
		\item ${\bparamk}<0$, $\Delta=\tilde A=N_+=0$, ${\bparamk}\tilde\Sigma+3\Sigma_+^2=0$, which defines an invariant set of dimension one.
	\end{itemize}
\end{rema}

\section{Recollection of decay lemmata}

When discussing detailed properties of solutions close to their limit points, we apply two general decay lemmata which we already made use of in~\cite{radermacher_sccogonbianchibperfectfluidsvacuum}, where they appeared as Lemma~4.5 and Lemma~4.7. For completeness, we also state them here.
\begin{lemm}
\label{lemm_decaylemmageneral}
	Consider a positive function $M:\RR\rightarrow(0,\infty)$ satisfying $M'=\functiondecaylemma M$, with $\functiondecaylemma:\RR\rightarrow\RR$ and $\functiondecaylemma(\tau) \rightarrow\iota$ as $\tau\rightarrow{-}\infty$. Then for all~$\eps>0$ there is a~$\tau_\eps>{-}\infty$ such that~$\tau\le \tau_\eps$ implies
	\begin{equation}
		e^{(\iota+\eps)\tau} \le  M(\tau)\le e^{(\iota-\eps)\tau}.
	\end{equation}
\end{lemm}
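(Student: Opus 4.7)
Since $M$ is positive, the ODE $M' = \functiondecaylemma M$ can be rewritten as $(\log M)' = \functiondecaylemma$. The plan is to integrate this and then bound the integral using the assumed limit $\functiondecaylemma(\tau)\to\iota$. For any fixed $\tau_0$ and $\tau<\tau_0$, integration yields
\begin{equation}
\log M(\tau) = \log M(\tau_0) - \int_\tau^{\tau_0} \functiondecaylemma(s)\,ds.
\end{equation}

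Given $\eps>0$, I would first pick an auxiliary $\eps'\in(0,\eps)$. By the convergence hypothesis, there exists $\tau_0>{-}\infty$ such that $|\functiondecaylemma(s)-\iota|<\eps'$ for all $s\le\tau_0$. Consequently, for $\tau\le\tau_0$ one obtains the two-sided bound
\begin{equation}
(\iota-\eps')(\tau_0-\tau) \le \int_\tau^{\tau_0}\functiondecaylemma(s)\,ds \le (\iota+\eps')(\tau_0-\tau),
\end{equation}
which after substitution and rearrangement gives $(\iota+\eps')\tau + C_- \le \log M(\tau) \le (\iota-\eps')\tau + C_+$, where $C_\pm$ are constants depending only on $\tau_0$ and $M(\tau_0)$.

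To pass from the slightly weaker exponents $\iota\pm\eps'$ (with constants) to the exponents $\iota\pm\eps$ (without constants) as stated in the lemma, I would exploit that $\tau$ is negative and can be chosen arbitrarily far from zero. Concretely, the upper bound $(\iota-\eps')\tau + C_+ \le (\iota-\eps)\tau$ is equivalent to $C_+ \le (\eps'-\eps)\tau$, and since $\eps'-\eps<0$ the right-hand side tends to $+\infty$ as $\tau\to{-}\infty$; hence the inequality holds for $\tau$ sufficiently negative. The analogous argument applies to the lower bound. Taking $\tau_\eps$ to be the minimum of the three thresholds thus produced yields the claim.

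The main conceptual point, and the only mildly subtle step, is the constant-absorption trick in the last paragraph: one must first relax the target exponents from $\iota\pm\eps$ to $\iota\pm\eps'$ in order to obtain clean bounds from the asymptotic control on $\functiondecaylemma$, and then exploit the sign of $\tau$ to reabsorb the multiplicative constants. Everything else is a direct integration of a one-dimensional linear ODE, so I do not anticipate any serious obstacle.
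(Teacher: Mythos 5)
Your argument is correct, and it is the standard one: the present paper gives no proof of this lemma (it is quoted from Lemma~4.5 of~\cite{radermacher_sccogonbianchibperfectfluidsvacuum}), and the integration of $(\log M)'=\functiondecaylemma$ followed by absorbing the additive constants using $\tau\rightarrow{-}\infty$ is exactly the expected route. No gaps.
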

\begin{lemm}
\label{lemm_decaylemmageneralimproved}
	Consider a positive function $M:\RR\rightarrow(0,\infty)$ satisfying $M'=\functiondecaylemma M$, with $\functiondecaylemma:\RR\rightarrow\RR$ and $\functiondecaylemma(\tau) =\iota+\O(e^{\xi\tau})$ as~$\tau\rightarrow{-}\infty$, for some constants~$\xi>0$, $\iota\in\RR$. Then there are constants~$c_M,C_M>0$ and a~$\tau_0$ such that~$\tau\le \tau_0$ implies
	\begin{equation}
		c_M e^{\iota\tau} \le  M(\tau)\le C_M e^{\iota\tau}.
	\end{equation}
\end{lemm}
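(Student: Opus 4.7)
The plan is to exploit the fact that $M$ is strictly positive to pass to the logarithm and reduce the differential equation to a pure integration. Writing $(\log M)' = \functiondecaylemma$ and integrating from $\tau$ up to some fixed $\tau_0$ in the region where the asymptotic expansion of $\functiondecaylemma$ is under control, I obtain
\begin{equation}
    \log M(\tau_0) - \log M(\tau) = \int_\tau^{\tau_0} \functiondecaylemma(s)\,ds.
\end{equation}
The goal is to show that the right-hand side equals $\iota(\tau_0-\tau) + O(1)$ as $\tau\to-\infty$, which after exponentiation yields the two-sided bound $c_M e^{\iota\tau}\le M(\tau)\le C_M e^{\iota\tau}$.

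To establish this, I would first use the hypothesis $\functiondecaylemma(s)=\iota+\O(e^{\xi s})$ to pick a $\tau_0$ and a constant $C>0$ with $|\functiondecaylemma(s)-\iota|\le C e^{\xi s}$ for all $s\le \tau_0$. Splitting $\functiondecaylemma=\iota+h$ with $|h(s)|\le Ce^{\xi s}$ and using $\xi>0$, the remainder integral satisfies
\begin{equation}
    \Bigl|\int_\tau^{\tau_0} h(s)\,ds\Bigr| \le \int_\tau^{\tau_0} C e^{\xi s}\,ds = \frac{C}{\xi}\bigl(e^{\xi\tau_0}-e^{\xi\tau}\bigr) \le \frac{C}{\xi}e^{\xi\tau_0},
\end{equation}
so this contribution is uniformly bounded in $\tau$, while the leading term gives $\int_\tau^{\tau_0} \iota\,ds = \iota(\tau_0-\tau)$. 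Rearranging leaves $\log M(\tau) = \iota\tau + \bigl(\log M(\tau_0)-\iota\tau_0\bigr) + \O(1)$ for $\tau\le \tau_0$, and exponentiating gives the desired bounds with constants $c_M,C_M$ depending only on $M(\tau_0)$, $\iota$, $\tau_0$, $C$, and $\xi$.

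There is no serious obstacle here; the statement is a standard improvement of Lemma~\ref{lemm_decaylemmageneral}, and the essential ingredient is simply that $e^{\xi s}$ is integrable at $-\infty$ whenever $\xi>0$, which allows the error term $\functiondecaylemma-\iota$ to contribute only a bounded (rather than a diverging) correction to the exponent. I would just take care to state explicitly that $\tau_0$ is chosen small enough so that the asymptotic control on $\functiondecaylemma$ is quantitative, and to absorb the additive constant $\log M(\tau_0)-\iota\tau_0\pm\frac{C}{\xi}e^{\xi\tau_0}$ into the multiplicative constants $c_M$ and $C_M$.
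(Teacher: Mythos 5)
Your argument is correct and complete: integrating $(\log M)'=\functiondecaylemma$ and using that $e^{\xi s}$ is integrable at $-\infty$ for $\xi>0$ turns the error term into a uniformly bounded additive correction, which exponentiates to the two-sided bound. The present paper does not reproduce a proof of this lemma (it only recalls it from Lemma~4.7 of~\cite{radermacher_sccogonbianchibperfectfluidsvacuum}), and your argument is exactly the standard one that statement admits, so there is nothing to add.
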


\section{The Jacobs set \texorpdfstring{$\stiffalphalimit$}{J}}

In this section, we show that the Jacobs set~$\stiffalphalimit$ given in Definition~\ref{defi_setD} dominates the behaviour of solutions to the evolution equations~\eqref{eqns_evolutionbianchib}--\eqref{eqn_evolutionomega} at early times. It has been shown in~\cite[Prop.~5.3]{hewittwainwright_dynamicalsystemsapproachbianchiorthogonalB} that~$\stiffalphalimit$ contains the~$\alpha$-limit set of such solutions. We give a refined proof of this statement, using a method of proof similar to that of~\cite[Prop.~4.2]{radermacher_sccogonbianchibperfectfluidsvacuum}.
\begin{prop}[Alpha-limit set in stiff fluid]
\label{prop_alphalimitsets_stiff}
	If $\Gamma(\tau)=(\Sigma_+,\tilde\Sigma,\Delta,\tilde A,N_+)(\tau)$, $\tau\in\RR$, is a solution to \equs ~\eqref{eqns_evolutionbianchib}--\eqref{eqn_evolutionomega}, then the $\alpha$-limit set of~$\Gamma$ satisfies
	\begin{equation}
		\alpha(\Gamma)\subset\stiffalphalimit,
	\end{equation}
	\ie for every sequence~$\tau_n\rightarrow{-}\infty$ the accumulation points of~$\Gamma(\tau_n)$ are contained in~$\stiffalphalimit$.
\end{prop}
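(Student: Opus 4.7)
The plan is to use $\Omega$ as a monotonic quantity governing the evolution, in the spirit of Proposition~4.2 of~\cite{radermacher_sccogonbianchibperfectfluidsvacuum}. First, I would observe that the stiff-fluid identity $q=2(\Sigma_+^2+\tilde\Sigma+\Omega)$ from Remark~\ref{rema_statespacecompact} together with the constraint $\Omega\in[0,1]$ forces $q\le 2$, hence $\Omega'=(2q-4)\Omega\le 0$. Therefore $\Omega$ is non-increasing in $\tau$, and since we are in the stiff-fluid setting with $\Omega>0$, the monotone function $\Omega$ is bounded below by $\Omega(\tau_0)>0$ for all $\tau\le\tau_0$. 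It follows that
\begin{equation}
    \Omega_-\coloneqq\lim_{\tau\to-\infty}\Omega(\tau)\in(0,1]
\end{equation}
exists and is strictly positive.

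Next, I would extract decay of $2-q$ from this convergence. Dividing $\Omega'=(2q-4)\Omega$ by $\Omega$ and integrating gives
\begin{equation}
    \int_{-\infty}^{\tau_0}(4-2q)\,d\tau=\ln\Omega(\tau_0)-\ln\Omega_-<\infty,
\end{equation}
so the non-negative function $4-2q$ is integrable on $(-\infty,\tau_0]$. Since the state space is compact and the right-hand sides of the evolution equations~\eqref{eqns_evolutionbianchib} are smooth, $q$ has bounded derivative, so $4-2q$ is uniformly continuous on $\RR$. An integrable uniformly continuous non-negative function must tend to zero, giving $q(\tau)\to 2$ as $\tau\to-\infty$.

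Now I would translate $q\to 2$ into convergence of the remaining variables to zero, using the algebraic identities available. From $q=2-2\tilde A-2\tilde N$ together with $\tilde A,\tilde N\ge 0$ (constraints~\eqref{eqn_constraintgeneraltwo}), we immediately get $\tilde A\to 0$ and $\tilde N\to 0$. Plugging into the definition~\eqref{eqn_definitiontilden} yields $N_+^2=3\tilde N+\bparamk\tilde A\to 0$, hence $N_+\to 0$. Then the first constraint~\eqref{eqn_constraintgeneralone} gives $\Delta^2=\tilde\Sigma\tilde N-\Sigma_+^2\tilde A$, which tends to zero because $\tilde\Sigma$ and $\Sigma_+^2$ are uniformly bounded, hence $\Delta\to 0$.

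Finally, for any accumulation point $(\slimit,\stildelimit,0,0,0)$ along a sequence $\tau_n\to-\infty$, continuity of $\Omega$ gives
\begin{equation}
    1-\slimit^2-\stildelimit=\Omega_->0,
\end{equation}
so $\slimit^2+\stildelimit<1$ and the accumulation point lies in $\stiffalphalimit$. I do not anticipate a serious obstacle; the argument is short and reduces to the monotonicity of $\Omega$ and the uniform-continuity trick. The only point requiring a little care is the verification that $\Omega_->0$, which is a direct consequence of working in the $\Omega>0$ regime and $\Omega$ being non-increasing in $\tau$.
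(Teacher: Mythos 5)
Your argument is correct, but it is not the route the paper takes for this particular proposition. The paper's proof is a soft monotonicity argument built on the auxiliary function $Z=(1+\Sigma_+)^2-\tilde A$, which satisfies $Z'=-2(2-q)Z$ and $Z\ge 0$: the monotonicity principle places $\alpha(\Gamma)$ inside $\{q=2\}\cup\{Z=0\}$, and the set $\{Z=0\}$ is then excluded because it lies in vacuum $\Omega=0$. What you do instead — monotonicity of $\Omega$, integrability of $2-q$, boundedness of $q'$ to conclude $q\to 2$, and then the algebraic cascade $\tilde A,\tilde N\to 0 \Rightarrow N_+\to 0 \Rightarrow \Delta\to 0$ — is precisely the strategy the paper deploys one proposition later, in the proof of Proposition~\ref{prop_convergencetoD}. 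So your proof is in effect a (correct) front-loading of that argument: it yields more than the stated claim, since it shows genuine convergence of $\Delta,\tilde A,N_+$ to zero rather than merely locating accumulation points, and it handles the boundary case more explicitly than the paper does here — your observation that $1-\slimit^2-\stildelimit=\Omega_->0$ is what rules out accumulation points on the Kasner parabola $\Sigma_+^2+\tilde\Sigma=1$, a point the paper's phrase ``the first set characterises $\stiffalphalimit$'' passes over quickly. The trade-off is that the $Z$-argument is shorter and purely qualitative, while yours requires the uniform-continuity (Barbalat-type) step. One small slip: your integral identity should read $\int_{-\infty}^{\tau_0}(4-2q)\,d\tau=\ln\Omega_--\ln\Omega(\tau_0)$, since $\Omega$ increases as $\tau\to-\infty$ and $4-2q\ge 0$; the finiteness conclusion you draw from it is unaffected.
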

\begin{proof}
	For a given solution to the evolution equations, we consider the function~$Z$ defined as in~\eqref{eqn_fctZ} and conclude from its derivative~\eqref{eqn_fctZderivative} that~$Z$ is strictly monotone decreasing if~$q<2$ and~$Z>0$. As a consequence, the~$\alpha$-limit set of the solution is contained in the union of~$\{q=2\}$ and~$\{Z=0\}$. The first set characterises~$\stiffalphalimit$.
	
	Suppose there is an~$\alpha$-limit point satisfying~$Z=0$. Then one can find a sequence of times~$\tau_n\rightarrow{-}\infty$ such that
	\begin{equation}
		Z(\tau_n)\le\frac1n.
	\end{equation}
	As~$Z\ge0$, monotonicity of~$Z$ implies that~$Z$ vanishes along the whole solution. However, the set~$\{Z=0\}$ is contained in vacuum~$\Omega=0$, a contradiction.
\end{proof}

In the next proposition we strengthen this result by showing that every solution has a unique~$\alpha$-limit point, \ie converges.
We start by showing that the density parameter~$\Omega$ converges.
\begin{lemm}
	Consider a solution to equations~\eqref{eqns_evolutionbianchib}--\eqref{eqn_evolutionomega}.
	Then~$\Omega$ is monotone decreasing and converges to a positive value, say~$\Omega\rightarrow\omega_0>0$ as~$\tau\rightarrow{-}\infty$.
\end{lemm}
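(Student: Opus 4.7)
The plan is to read off both claims directly from the evolution equation for~$\Omega$, namely $\Omega'=(2q-4)\Omega$ (equation~\eqref{eqn_evolutionomega}), combined with the bounds on~$q$ and~$\Omega$ recorded in Remark~\ref{rema_statespacecompact}. There is no real difficulty here, only bookkeeping.

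First I would observe monotonicity. In the stiff fluid setting, the improved bound from Remark~\ref{rema_statespacecompact} gives $q\in[0,2]$, so $2q-4\le 0$. Together with $\Omega\ge 0$ this yields $\Omega'\le 0$, \ie $\Omega$ is monotone decreasing in~$\tau$. Since $\Omega$ is bounded in $[0,1]$ by the same remark, the limit
\begin{equation}
    \omega_0\coloneqq\lim_{\tau\rightarrow{-}\infty}\Omega(\tau)
\end{equation}
exists in $[0,1]$.

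Finally I would argue that $\omega_0>0$. Because $\Omega$ is monotone decreasing, for any fixed $\tau_0\in\RR$ and all $\tau\le \tau_0$ we have $\Omega(\tau)\ge \Omega(\tau_0)$, and consequently $\omega_0\ge\Omega(\tau_0)$. By the strict inequality in the constraint~\eqref{eqn_constraintgeneraltwo} together with the definition~\eqref{eqn_omegageneral}, $\Omega(\tau_0)>0$ holds at every point of the physical state space, so $\omega_0\ge\Omega(\tau_0)>0$. This finishes the proof.

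The only conceptual point worth flagging is that the improved upper bound $q\le 2$ for stiff fluids is essential; in the general perfect fluid setting one only has $q\le 2$ when $\gamma\le 2$ as well, and the monotonicity of~$\Omega$ towards the past relies on $q\le 2$ being enforced everywhere along the flow, which Remark~\ref{rema_statespacecompact} guarantees. Beyond that, the argument is an immediate consequence of the sign of $2q-4$ and the strict positivity of~$\Omega$ in the physical state space, so no further obstacle arises.
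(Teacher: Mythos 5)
Your proof is correct and follows exactly the paper's route: monotonicity of~$\Omega$ from $q\le2$ via equation~\eqref{eqn_evolutionomega}, existence of the limit from boundedness, and positivity from the fact that a decreasing function's past limit dominates its value at any finite time, where $\Omega>0$ holds by~\eqref{eqn_constraintgeneraltwo}. You merely spell out the positivity step that the paper leaves implicit; no differences of substance.
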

\begin{proof}
	As~$q\le2$, the evolution equation of~$\Omega$, equation~\eqref{eqn_evolutionomega}, gives~$\Omega'\le0$. The statement on the limit value follows due to boundedness of~$\Omega$.
\end{proof}
\begin{prop}
\label{prop_convergencetoD}
	Consider a solution to \equs ~\eqref{eqns_evolutionbianchib}--\eqref{eqn_evolutionomega},
	then there is~$\slimit\in({-}1,1)$,~$\stildelimit\in[0,1)$ with~$\slimit^2+\stildelimit<1$ such that
	\begin{equation}
		\lim_{\tau\rightarrow{-}\infty}(\Sigma_+,\tilde\Sigma,\Delta, N_+,\tilde A)=(\slimit,\stildelimit,0,0,0).
	\end{equation}
\end{prop}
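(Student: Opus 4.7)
The plan is to combine the preceding lemma (which gives $\Omega\to\omega_0>0$) with Proposition~\ref{prop_alphalimitsets_stiff} (which says $\alpha(\Gamma)\subset\stiffalphalimit$). Specifically, I will show first that $\Sigma_+(\tau)$ and $\tilde\Sigma(\tau)$ each have a limit as $\tau\to{-}\infty$, and then deduce that the remaining three variables are forced to converge to zero, simply because every $\alpha$-limit point is required by Proposition~\ref{prop_alphalimitsets_stiff} to satisfy $\Delta=\tilde A=N_+=0$.

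The key identity is $q-2={-}2(\tilde A+\tilde N)$, which comes directly from~\eqref{eqn_qwithaandn}. Substituted into~\eqref{eqn_evolutionomega} it turns the evolution of $\Omega$ into $(\log\Omega)'={-}4(\tilde A+\tilde N)$. Since $\Omega(\tau)\to\omega_0>0$, the function $\log\Omega$ is bounded on $(-\infty,\tau_0]$, and integration immediately yields that $\tilde A+\tilde N$ is integrable on $(-\infty,\tau_0]$. Inserting the same identity into the first equation of~\eqref{eqns_evolutionbianchib} and using the uniform bound $|\Sigma_+|\le 1$ from Remark~\ref{rema_statespacecompact}, one finds $|\Sigma_+'|\le 4(\tilde A+\tilde N)$, so $\Sigma_+'\in L^1$ and hence $\Sigma_+$ converges to some $\slimit$.

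The main obstacle is to do the same for $\tilde\Sigma$, where the cross term $\Delta N_+$ in the second equation of~\eqref{eqns_evolutionbianchib} must be controlled. For this I would use the constraint~\eqref{eqn_constraintgeneralone} in the form $\Delta^2=\tilde\Sigma\tilde N-\Sigma_+^2\tilde A\le\tilde N$ together with the identity $N_+^2=3\tilde N+\bparamk\tilde A$ coming from~\eqref{eqn_definitiontilden}. A short AM--GM estimate, handling the two signs of $\bparamk$ separately, produces a constant $C=C(\bparamk)$ with $|\Delta N_+|\le C(\tilde A+\tilde N)$. Combined with the obvious bound $|\Sigma_+\tilde A|\le\tilde A$, this gives $|\tilde\Sigma'|\le C'(\tilde A+\tilde N)\in L^1$, and therefore $\tilde\Sigma$ converges to some $\stildelimit$.

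Finally, Proposition~\ref{prop_alphalimitsets_stiff} forces every accumulation point of $\Gamma$ as $\tau\to{-}\infty$ to lie in $\stiffalphalimit$; since the first two coordinates have already been shown to converge, every such accumulation point must equal $(\slimit,\stildelimit,0,0,0)$, and the strict bounds $\slimit\in({-}1,1)$, $\stildelimit\in[0,1)$, $\slimit^2+\stildelimit<1$ are automatic from the definition of $\stiffalphalimit$. Compactness of the state space (Remark~\ref{rema_statespacecompact}) then upgrades a singleton $\alpha$-limit set to convergence of the whole orbit, which is the claim.
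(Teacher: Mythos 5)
Your argument is correct and follows essentially the same route as the paper: integrability of $q-2$ (equivalently of $\tilde A+\tilde N$) is extracted from the convergence of the positive quantity $\Omega$, the cross term $\Delta N_+$ is controlled through the constraint~\eqref{eqn_constraintgeneralone} together with a Cauchy--Schwarz/AM--GM estimate, and Proposition~\ref{prop_alphalimitsets_stiff} pins down the admissible limit points. The only cosmetic difference is that the paper proves $\tilde A,\tilde N,\Delta,N_+\to0$ directly from integrability plus boundedness of the relevant derivatives, whereas you obtain this for free from the $\alpha$-limit set statement combined with compactness of the state space; both are valid.
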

\begin{proof}
	From the previous lemma, we conclude that we can choose~$\tau_0$ suitably negative such that~$\Omega$ is bounded from below by a positive number on~$({-}\infty,\tau_0)$. Combining this with the evolution equation~\eqref{eqn_evolutionomega} for~$\Omega$ and the knowledge that~$\Omega$ converges as~$\tau\rightarrow{-}\infty$, we conclude that $q-2$ is integrable on~$({-}\infty,\tau_0)$.
	As~$q'$ is bounded due to the boundedness of the state space, Remark~\ref{rema_statespacecompact}, this implies~$q\rightarrow2$.
	Similarly, the derivatives~$\tilde A'$ and~$\tilde N'$ are bounded, and equation~\eqref{eqn_qwithaandn} yields integrability of~$\tilde A$ and~$\tilde N$ on~$({-}\infty,\tau_0)$, therefore both variables converge to~$0$.

	In combination with the integrabilities found so far and boundedness of the state space, integrability of~$N_+^2$ follows from the definition of~$\tilde N$, equation~\eqref{eqn_definitiontilden}, and integrability of~$\Delta^2$ from the constraint equation~\eqref{eqn_constraintgeneralone}. The evolution equations of~$N_+^2$ and~$\Delta^2$ are bounded, which yields convergence of the two squares, and thus of~$N_+$ and~$\Delta$, to zero.

	For the remaining quantities~$\Sigma_+$ and~$\tilde\Sigma$, one finds that all terms in the evolution equations are integrable, where integrability of~$\Delta N_+$ is a consequence of the Cauchy--Schwarz inequality. Consequently, these variables converge as well, to values~$\slimit$ and~$\stildelimit$. The bound on~$\slimit^2+\stildelimit$ follows from the properties of the~$\alpha$-limit set, Prop.~\ref{prop_alphalimitsets_stiff}.
\end{proof}

\section{Convergence properties and asymptotic decay towards the Jacobs set \texorpdfstring{$\stiffalphalimit$}{J}}

\label{section_asymptoticconvergence}

In this section, we discuss in detail the asymptotic properties of solutions converging to~$\stiffalphalimit$. 
To this end, we consider a solution to~\eqref{eqns_evolutionbianchib}--\eqref{eqn_evolutionomega} and assume that for~$\tau\rightarrow{-}\infty$ the solution converges to a point~$(\slimit,\stildelimit,0,0,0)\in\stiffalphalimit$, \ie $\slimit\in({-}1,1)$,~$\stildelimit\in[0,1)$ and~$\slimit^2+\stildelimit<1$. In this setting, we determine the asymptotic convergence properties of the individual variables.
Note that assuming convergence to a limit point in the Jacobs set~$\stiffalphalimit$ is no loss of generality due to Proposition~\ref{prop_convergencetoD}.

In terms of technique, this section is the heart of the present article. In the following sections, we discuss convergence to the different parts of~$\stiffalphalimit$ in a more global sense, and from a dynamical systems point of view. All our arguments are, however, rooted in statements achieved in the present section.

The steps of our discussion in large parts mirror those of~\cite[Section~6]{radermacher_sccogonbianchibperfectfluidsvacuum}, where we assumed convergence to a point on the arc~$(\slimit,1-\slimit^2,0,0,0)$ for~$\slimit\in[{-}1,1]$. This arc is called the Kasner parabola~$\kasnerparabola$. Although this parabola is not contained in the Jacobs set~$\stiffalphalimit$ but in its boundary, the methods of proof can be transferred. Compared with~\cite{radermacher_sccogonbianchibperfectfluidsvacuum}, the limit point in~$\stiffalphalimit$ depends on two parameters, $\slimit$ and~$\stildelimit$, and this is reflected in many of the statements, as we find exponential convergence rates which depend on both values. Additional difficulties arise for solutions converging to a point on the~$\Sigma_+$-axis. 

As the proofs of some of the statements in the present section are very similar to those given in our previous paper, we refrain from stating them explicitly. Instead, we expect the reader to have~\cite{radermacher_sccogonbianchibperfectfluidsvacuum} at hand.

The main statement we achieve in this section is the following proposition, which gives detailed convergence rates of the individual variables for solutions converging to a limit point situated in either~$\stiffalphalimitminus$ or~$\stiffalphalimitzero$. In particular, we find that if such a solution satisfies~$\tilde A>0$, \ie if the solution is of Bianchi class~B, then a specific relation between the limit point~$(\slimit,\stildelimit,0,0,0)$ and the parameter~$\bparamk$ has to hold.
The statement should be compared with~\cite[Prop.~6.2]{radermacher_sccogonbianchibperfectfluidsvacuum}. For the proof, a number of preliminary work is needed, which we carry out in the following.
\begin{prop}
\label{prop_convergencemain_unstablepart}
	\considerconvtoD.
	If $(1+\slimit)^2\le3\stildelimit$, then
	\begin{equation}
		\tilde A(3\slimit^2+{\bparamk}\stildelimit)=0
	\end{equation}
	along the whole solution, and
	\begin{align}
		\Sigma_+={}&\slimit+\O(e^{(4+4\slimit-\eps)\tau}),\\
		\tilde\Sigma={}&\stildelimit+\O(e^{(4+4\slimit-\eps)\tau}),\\
		\tilde N={}&\O(e^{(4+4\slimit-\eps)\tau}),\\
		q={}&2+\O(e^{(4+4\slimit-\eps)\tau}),
	\end{align}
	as $\tau\rightarrow{-}\infty$, for every $\eps>0$.
	Furthermore, the following properties hold:
	\begin{itemize}
		\item Either $\Delta=0=N_+$ holds along the whole solution, or $\Delta N_+>0$ for sufficiently negative times. In the latter case, for all $\eps>0$ there is a $\tau_\eps>{-}\infty$ such that $\tau \le\tau_\eps$ implies
		\begin{align}
			e^{(2+2\slimit+2\sqrt{3\stildelimit}+\eps)\tau}
			\le{}& \absval{\Delta} \le e^{(2+2\slimit+2\sqrt{3\stildelimit}-\eps)\tau},\\
			e^{(2+2\slimit+2\sqrt{3\stildelimit}+\eps)\tau}
			\le{}&\absval{N_+} \le e^{(2+2\slimit+2\sqrt{3\stildelimit}-\eps)\tau}.
		\end{align}
		\item Either $\tilde A=0$ holds along the whole solution, or for all $\eps>0$ there is a $\tau_\eps>{-}\infty$ such that $\tau\le \tau_\eps$ implies
		\begin{equation}
			e^{(4+4\slimit+\eps)\tau} \le\tilde A\le e^{(4+4\slimit-\eps)\tau}.
		\end{equation}
	\end{itemize}
\end{prop}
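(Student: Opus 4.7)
The plan is to mirror the proof of Proposition~6.2 in \cite{radermacher_sccogonbianchibperfectfluidsvacuum}, with the arc $\kasnerparabola$ replaced by the two-dimensional equilibrium set $\stiffalphalimit$. Observe first that since $\slimit>-1$ by Proposition~\ref{prop_convergencetoD}, the hypothesis $(1+\slimit)^2\le 3\stildelimit$ automatically forces $\stildelimit>0$, a fact that will be crucial for deducing the algebraic identity. The strategy is to first extract exponential decay rates for the individual variables via the decay lemmas of Section~3, and then to read off the identity from the Hamiltonian constraint \eqref{eqn_constraintgeneralone}.

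The decay of $\tilde A$ comes directly from $\tilde A'=2(q+2\Sigma_+)\tilde A$: the coefficient $2(q+2\Sigma_+)$ converges to $4+4\slimit>0$, and since $\{\tilde A=0\}$ is invariant, Lemma~\ref{lemm_decaylemmageneral} gives the stated dichotomy together with the asserted two-sided exponential bounds. For the coupled $(\Delta,N_+)$ system, I would use the linearisation at the limit point, whose $(\Delta,N_+)$-block has eigenvalues $\lambda_{\pm}=2(1+\slimit)\pm 2\sqrt{3\stildelimit}$, with $\lambda_-\le 0$ in the regime under consideration. The associated left eigenvectors suggest the combinations $B_{\pm}=\sqrt{3}\,\Delta\pm\sqrt{\stildelimit}\,N_+$, which to leading order satisfy $B_{\pm}'\approx\lambda_{\pm}B_{\pm}$. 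Since the solution converges and $\lambda_-\le 0$, the component $B_-$ must remain negligible compared with $B_+$; applying Lemma~\ref{lemm_decaylemmageneral} to $B_+^2$ (or to an equivalent positive-definite energy) yields the claimed exponential bounds for $|\Delta|$ and $|N_+|$. The invariance of $\{\Delta=N_+=0\}$ gives the other branch of the dichotomy, and the alignment $B_-\ll B_+$ forces $\Delta N_+>0$ for sufficiently negative times.

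With these rates in place, the algebraic identity drops out of the constraint. Substituting $\tilde N=\tfrac{1}{3}(N_+^2-\bparamk\tilde A)$ into \eqref{eqn_constraintgeneralone} rearranges to
\begin{equation}
    \tilde A\,(3\Sigma_+^2+\bparamk\tilde\Sigma)=\tilde\Sigma\,N_+^2-3\Delta^2.
\end{equation}
If $\tilde A>0$ along the solution, divide by $\tilde A$: the right-hand side is controlled by the ratio of the $|\Delta|^2,|N_+|^2$ rates to the $\tilde A$ rate, which is $O(e^{(4\sqrt{3\stildelimit}-3\eps)\tau})$ and tends to $0$ as $\tau\to-\infty$ precisely because $\stildelimit>0$; the left-hand side tends to $3\slimit^2+\bparamk\stildelimit$, which gives the identity. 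Otherwise $\tilde A\equiv 0$ and the identity holds trivially. The remaining rates then follow straightforwardly: $\tilde N$ from its definition and the known bounds on $\tilde A$ and $N_+^2$; $q=2-2\tilde A-2\tilde N$; and $\Sigma_+-\slimit$, $\tilde\Sigma-\stildelimit$ by integrating $\Sigma_+'=(q-2)\Sigma_+-2\tilde N$ and $\tilde\Sigma'=2(q-2)\tilde\Sigma-4\Sigma_+\tilde A-4\Delta N_+$ over $(-\infty,\tau]$, noting that every source term, including $\Delta N_+=O(e^{(4+4\slimit+4\sqrt{3\stildelimit}-2\eps)\tau})$, lies within the claimed asymptotic bound.

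The principal obstacle I expect is the coupled $(\Delta,N_+)$ analysis in the regime $\lambda_-\le 0$, especially on $\stiffalphalimitzero$ where $\lambda_-=0$ and three eigenvalues of the linearisation vanish. A direct application of Lemma~\ref{lemm_decaylemmageneral} is obstructed by the slow or absent decay in the $B_-$ direction, and the nonlinear and variable-coefficient corrections to the linearisation prevent a clean eigenvalue decomposition. Ruling out persistent $B_-$ content is likely to require either a Riccati-type analysis of the ratio $R=N_+/\Delta$, whose formal limit satisfies $R'=6-2\stildelimit R^2$ with fixed points $\pm\sqrt{3/\stildelimit}$, or a tailored quadratic Lyapunov function that dominates the $B_-$ component, similar in spirit to the monotonicity arguments employed in the analogous step of \cite{radermacher_sccogonbianchibperfectfluidsvacuum}.
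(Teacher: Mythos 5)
Your outline reproduces the easy parts of the paper's argument (the decay of~$\tilde A$ from Lemma~\ref{lemm_decaytildea}, the linear combinations $\Delta\pm\sqrt{\stildelimit/3}\,N_+$ of Lemma~\ref{lemm_decaydeltanplus_stiff_notaxis}, reading the identity $\tilde A(3\slimit^2+\bparamk\stildelimit)=0$ off the constraint in the form~\eqref{eqn_constraintforproofs}, and the remaining rates by integration), but it has a genuine gap at exactly the step you flag as ``the principal obstacle'': the assertion that convergence plus $\lambda_-\le0$ forces the $B_-$ component to be negligible, equivalently that $\Delta N_+<0$ cannot persist, is stated without proof, and it is the real content of the proposition. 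Note also that the gap propagates: your derivation of the algebraic identity divides the constraint by~$\tilde A$ and needs the upper bound $|\Delta|,|N_+|\le e^{(2+2\slimit+2\sqrt{3\stildelimit}-\eps)\tau}$, which is only available once the negative-sign branch is excluded; in that branch the rate is $2+2\slimit-2\sqrt{3\stildelimit}$ and the quotient no longer tends to zero.

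The exclusion splits into two cases (Lemma~\ref{lemm_deltanplusoppositesign_stiff}). When $(1+\slimit)^2<3\stildelimit$ it is easy: the lower bound $|\Delta-\sqrt{\stildelimit/3}N_+|\ge e^{(2+2\slimit-2\sqrt{3\stildelimit}+\eps)\tau}$ has a strictly negative exponent and contradicts convergence to zero, which is essentially your argument. On~$\stiffalphalimitzero$, however, the lower bound is only $e^{\eps\tau}$ and there is no contradiction; the paper needs a substantially finer analysis there, combining a Taylor expansion of the constraint, the tracking of $\Sigma_++1$ and $\tilde\Sigma^{1/2}-\sqrt3$ via the function~$\integraltwominusq$, and finally the monotonicity $\Sigma_+'\le0$ (hence $\Sigma_+\le\slimit$) to show that $|e^{-F_4}N_+|$ increases as $\tau\rightarrow{-}\infty$, so $N_+$ cannot tend to zero. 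Neither of your two suggested repairs closes this case as stated: a quadratic Lyapunov function dominating $B_-$ would have to beat a vanishing eigenvalue, and the Riccati equation $R'=6+(2-q)R-2(\tilde\Sigma-\tilde N)R^2$ for $R=N_+/\Delta$ has $R={-}\sqrt{3/\stildelimit}$ as an \emph{attractor} in the backwards time direction, so the ratio analysis alone cannot rule out that branch --- one genuinely needs the extra input from the constraint and the sign of $\Sigma_+'$ that the paper exploits.
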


The following two statements replace Lemma~6.5 and Lemma~6.6 of~\cite{radermacher_sccogonbianchibperfectfluidsvacuum} in the case of a stiff fluid. Note that the decay of~$\tilde A$ only depends on~$\slimit$, not on~$\stildelimit$.
\begin{lemm}
\label{lemm_decaytildea}
	\considerconvtoD. Then either $\tilde A=0$ along the whole solution, or there exists for every $\eps>0$ a $\tau_\eps>{-}\infty$ such that $\tau\le\tau_\eps$ implies
	\begin{equation}
		e^{(4+4\slimit+\eps)\tau}\le \tilde A(\tau)\le e^{(4+4\slimit-\eps)\tau}.
	\end{equation}
\end{lemm}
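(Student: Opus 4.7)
The plan is to apply the general decay lemma (Lemma~\ref{lemm_decaylemmageneral}) directly to $\tilde A$, using its evolution equation from~\eqref{eqns_evolutionbianchib}, namely $\tilde A'=2(q+2\Sigma_+)\tilde A$. Before doing so, I would first dispose of the dichotomy: since the right-hand side of the evolution equation for $\tilde A$ is linear in $\tilde A$ with bounded coefficient $2(q+2\Sigma_+)$ (recall from Remark~\ref{rema_statespacecompact} that the state space is compact, so both $q$ and $\Sigma_+$ are bounded), uniqueness of solutions of linear ODEs implies that $\tilde A$ either vanishes identically or is nowhere zero. Combined with the sign constraint $\tilde A\ge 0$ from~\eqref{eqn_constraintgeneraltwo}, we conclude that either $\tilde A\equiv 0$ along the whole solution, or $\tilde A>0$ throughout.

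In the second case, I would set $M\coloneqq\tilde A$ and $\functiondecaylemma\coloneqq2(q+2\Sigma_+)$, so that $M'=\functiondecaylemma M$ and $M>0$. Using the hypothesis that the solution converges to $(\slimit,\stildelimit,0,0,0)\in\stiffalphalimit$, together with the fact that on the Jacobs set one has $q=2$ (as noted just after Definition~\ref{defi_setD}), we obtain
\begin{equation}
    \functiondecaylemma(\tau)=2(q+2\Sigma_+)(\tau)\longrightarrow 2(2+2\slimit)=4+4\slimit
\end{equation}
as $\tau\rightarrow{-}\infty$. Applying Lemma~\ref{lemm_decaylemmageneral} with $\iota=4+4\slimit$ then yields, for every $\eps>0$, a time $\tau_\eps>{-}\infty$ such that $\tau\le\tau_\eps$ implies
\begin{equation}
    e^{(4+4\slimit+\eps)\tau}\le \tilde A(\tau)\le e^{(4+4\slimit-\eps)\tau},
\end{equation}
which is the desired bound.

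There is no real obstacle here: the proof is essentially a two-line verification once one observes that convergence of $\Sigma_+$ and $q$ to their limits on $\stiffalphalimit$ gives the correct exponent, and that the linear ODE structure enforces the dichotomy. The only subtlety worth double-checking is that $q\to 2$ for any solution converging to $\stiffalphalimit$, which is immediate from $q=2-2\tilde A-2\tilde N$ in~\eqref{eqn_qwithaandn} together with $\tilde A,\tilde N\to 0$ at the limit point. Note that the bound depends only on $\slimit$ and not on $\stildelimit$, reflecting that the evolution of $\tilde A$ does not couple to $\tilde\Sigma$ at leading order.
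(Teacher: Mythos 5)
Your proof is correct and follows exactly the paper's argument: the paper likewise obtains the bounds by applying Lemma~\ref{lemm_decaylemmageneral} to the evolution equation $\tilde A'=2(q+2\Sigma_+)\tilde A$, using that $q+2\Sigma_+\rightarrow 2+2\slimit$ by the convergence assumption. Your explicit treatment of the dichotomy ($\tilde A\equiv 0$ versus $\tilde A>0$ everywhere, via uniqueness for the linear ODE) is left implicit in the paper but is the standard justification.
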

\begin{proof}
	This follows from the evolution equation~\eqref{eqns_evolutionbianchib} for~$\tilde A$ as a  direct consequence of Lemma~\ref{lemm_decaylemmageneral}, as~$\tau\rightarrow{-}\infty$ implies~$q+2\Sigma_+\rightarrow2+2\slimit$ by assumption.
\end{proof}
\begin{lemm}
\label{lemm_functionintegraltwominusq}
	\considerconvtoD. Then the function
	\begin{equation}
		\integraltwominusq\coloneqq\int_{{-}\infty}^\tau(2-q)ds
	\end{equation}
	is well-defined.
\end{lemm}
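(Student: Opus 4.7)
The plan is to exploit the evolution equation of the density parameter~$\Omega$, which turns~$2-q$ into an integrand we can bound by a logarithmic difference of~$\Omega$. Since~$\Omega$ converges to a strictly positive limit~$\omega_0>0$ by the previous lemma, the integral must converge.

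More precisely, I would first recall that~$q\le2$ holds everywhere by Remark~\ref{rema_statespacecompact} (or directly from~\eqref{eqn_qwithsigma} together with the constraint~\eqref{eqn_constraintgeneraltwo}), so that the integrand~$2-q$ is pointwise non-negative. It follows that for any fixed~$\tau$, the map~$\tau_0\mapsto \int_{\tau_0}^\tau(2-q)\,ds$ is monotone non-decreasing as~$\tau_0$ decreases, and hence converges (possibly to~$+\infty$) as~$\tau_0\to{-}\infty$.

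Next I would rewrite the evolution equation~\eqref{eqn_evolutionomega} as~$(\ln\Omega)'=2(q-2)={-}2(2-q)$, which is permissible since~$\Omega$ is strictly positive along the solution. Integrating between~$\tau_0$ and~$\tau$ yields
\begin{equation}
	\int_{\tau_0}^\tau(2-q)\,ds = \tfrac12\bigl(\ln\Omega(\tau_0)-\ln\Omega(\tau)\bigr).
\end{equation}
Letting~$\tau_0\to{-}\infty$ and invoking the previous lemma, according to which~$\Omega(\tau_0)\to\omega_0>0$, the \rhs\ converges to the finite value~$\tfrac12(\ln\omega_0-\ln\Omega(\tau))$. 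Combined with the monotonicity observed above, this gives a finite limit for the integral, showing that~$\integraltwominusq(\tau)=\int_{-\infty}^\tau(2-q)\,ds$ is indeed well-defined.

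There is essentially no obstacle here: the argument hinges entirely on the positivity of the limit~$\omega_0$, which was established in the lemma preceding Proposition~\ref{prop_convergencetoD}, and on the sign~$2-q\ge0$, which fails only on~$\stiffalphalimit$ itself. Had~$\omega_0$ been zero, the logarithmic identity would have given no information and divergence of~$\integraltwominusq$ would have been possible; this is precisely why the stiff fluid case differs from the non-stiff setting of~\cite{radermacher_sccogonbianchibperfectfluidsvacuum}.
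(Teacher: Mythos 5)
Your argument is correct and is essentially the paper's own: the paper likewise derives integrability of $q-2$ from the evolution equation $\Omega'=(2q-4)\Omega$ together with the fact that $\Omega$ is bounded below by a positive constant and converges as $\tau\rightarrow{-}\infty$ (this is done in the proof of Proposition~\ref{prop_convergencetoD}, to which the lemma's proof simply refers). Your version just spells out the logarithmic integration and the sign $2-q\ge0$ explicitly.
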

\begin{proof}
	We have argued in the proof of Proposition~\ref{prop_convergencetoD} that~$q-2$ is integrable on~$({-}\infty,\tau_0)$, for sufficiently negative~$\tau_0$. Thus~$\integraltwominusq$ is well-defined.
\end{proof}
\begin{lemm}
\label{lemm_deltanplussigns_stiff}
	\considerconvtoD, assume additionally that~$\stildelimit>0$. Then one of the following statements holds:
	\begin{enumerate}
		\item $\Delta=N_+=0$ along the whole solution,
		\item There is $\tau_1\in\RR$ \st $\Delta N_+(\tau)>0$ for all $\tau\le\tau_1$,
		\item There is $\tau_1\in\RR$ \st $\Delta N_+(\tau)<0$ for all $\tau\le\tau_1$.
	\end{enumerate}
\end{lemm}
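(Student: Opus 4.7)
The plan is to compute the derivative of the product $\Delta N_+$ and exploit the positivity of $\stildelimit$ to control the sign of this derivative at every zero of $\Delta N_+$.

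Using the evolution equations~\eqref{eqns_evolutionbianchib} for $\Delta$ and $N_+$, a direct computation yields
\begin{equation}
(\Delta N_+)' = (3q+4\Sigma_+-2)\,\Delta N_+ + 2(\tilde\Sigma - \tilde N)\,N_+^2 + 6\Delta^2.
\end{equation}
By Proposition~\ref{prop_convergencetoD} together with the hypothesis $\stildelimit>0$, I choose $\tau_0$ sufficiently negative that $\tilde\Sigma(\tau)-\tilde N(\tau)>0$ for all $\tau\le\tau_0$. On this half-line, the two quadratic terms on the right are non-negative and vanish simultaneously only when $\Delta=N_+=0$. Consequently, at every $\tau\le\tau_0$ at which $\Delta N_+(\tau)=0$, either $(\Delta,N_+)(\tau)=(0,0)$, or $(\Delta N_+)'(\tau)>0$.

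Next I would observe that the subspace $\{\Delta=N_+=0\}$ of the extended state space is invariant under the evolution equations~\eqref{eqns_evolutionbianchib}, since the right-hand sides of $\Delta'$ and $N_+'$ both vanish identically there. Uniqueness of solutions to~\eqref{eqns_evolutionbianchib}--\eqref{eqn_evolutionomega} therefore implies that if $(\Delta,N_+)(\tau_*)=(0,0)$ at any single $\tau_*$, then $\Delta\equiv N_+\equiv 0$ along the whole solution, which is the first alternative. Outside this alternative, every zero of $\Delta N_+$ in $(-\infty,\tau_0]$ has strictly positive derivative $(\Delta N_+)'$, and by continuity of the derivative is therefore isolated.

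The main remaining step is to rule out infinitely many zeros accumulating at $-\infty$. If the set $S$ of zeros in $(-\infty,\tau_0]$ is empty or finite, I pick $\tau_1\le\tau_0$ smaller than any element of $S$; continuity then forces $\Delta N_+$ to have constant sign on $(-\infty,\tau_1]$, giving the second or third alternative. Otherwise, the elements of $S$ can be enumerated as $\tau^{(1)}>\tau^{(2)}>\cdots$ with $\tau^{(k)}\to-\infty$. On each open interval $(\tau^{(k+1)},\tau^{(k)})$, the product $\Delta N_+$ is nowhere zero and thus of constant sign; the inequality $(\Delta N_+)'(\tau^{(k+1)})>0$ forces this sign to be positive. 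But then $\Delta N_+$ approaches $\tau^{(k)}$ from the left as a positive function decreasing to zero, forcing $(\Delta N_+)'(\tau^{(k)})\le0$ and contradicting what was already established. The main obstacle is precisely this sign-bookkeeping at consecutive zeros; once the derivative formula above is in place, the rest of the argument is elementary.
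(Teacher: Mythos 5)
Your proposal is correct and follows essentially the same route as the paper, which also centres the argument on the identity $(\Delta N_+)'=6\Delta^2+(3q+4\Sigma_+-2)\Delta N_++2(\tilde\Sigma-\tilde N)N_+^2$ together with the positivity, for sufficiently negative times, of the coefficients of $\Delta^2$ and $N_+^2$ guaranteed by $\stildelimit>0$. The paper defers the sign bookkeeping at zeros and the invariance of $\{\Delta=N_+=0\}$ to the analogous Lemma~6.4 of~\cite{radermacher_sccogonbianchibperfectfluidsvacuum}; you have simply written those steps out explicitly, and they are correct.
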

\begin{proof}
	The proof of this statement uses the same steps as that of~\cite[Lemma~6.4]{radermacher_sccogonbianchibperfectfluidsvacuum}, which is centered around the evolution equation
	\begin{equation}
		(\Delta N_+)'=6\Delta^2+(3q+4\Sigma_+-2)\Delta N_++2(\tilde\Sigma-\tilde N)N_+^2.
	\end{equation}
	By assuming positivity of~$\stildelimit$, we ensure that the factors in front of~$\Delta^2$ and~$N_+^2$ are positive for sufficiently negative times. This suffices to conclude.
\end{proof}

As a first step towards exponential decay rates for~$\Delta$ and~$N_+$ individually, we discuss a linear combination of these two variables to which we can apply Lemma~\ref{lemm_decaylemmageneral}. The approach is similar to that of~\cite[Lemma~6.7]{radermacher_sccogonbianchibperfectfluidsvacuum}, and as was the case there, we find that the exponent in the decay rate coincides with one of the eigenvalues of the linearised evolution equations in the Jacobs set~$\stiffalphalimit$.

We have to restrict to the case~$\stildelimit>0$, as we apply the previous lemma to make statements about the signs of~$\Delta$ and~$N_+$. The case~$\stildelimit=0$ is treated further down in Lemma~\ref{lemm_decaystiff_axis}.
\begin{lemm}
\label{lemm_decaydeltanplus_stiff_notaxis}
	\considerconvoffaxisD. If $\Delta N_+(\tau)>0$ for all $\tau\le\tau_0$, then for every $\eps>0$ there exists $\tau_\eps>{-}\infty$ such that $\tau\le\tau_\eps$ implies
	\begin{equation}
		e^{(2+2\slimit+2\sqrt{3\stildelimit}+\eps)\tau}\le \absval{\Delta+\sqrt{\frac{\stildelimit}3}N_+}\le e^{(2+2\slimit+2\sqrt{3\stildelimit}-\eps)\tau}.
	\end{equation}
	If on the other hand $\Delta N_+(\tau)<0$ for all $\tau\le\tau_0$, then for every $\eps>0$ there exists $\tau_\eps>{-}\infty$ such that $\tau\le\tau_\eps$ implies
	\begin{equation}
		e^{(2+2\slimit-2\sqrt{3\stildelimit}+\eps)\tau}\le \absval{\Delta-\sqrt{\frac{\stildelimit}3}N_+}\le e^{(2+2\slimit-2\sqrt{3\stildelimit}-\eps)\tau}.
	\end{equation}
\end{lemm}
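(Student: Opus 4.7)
The plan is to identify the linear combinations of $\Delta$ and $N_+$ that diagonalise the linearisation of the $(\Delta,N_+)$-subsystem at the limit point, and to apply Lemma~\ref{lemm_decaylemmageneral} to each of them individually. Using the evolution \equs~\eqref{eqns_evolutionbianchib} with the limiting values $(q,\Sigma_+,\tilde\Sigma,\tilde N)\to(2,\slimit,\stildelimit,0)$, the linearised $(\Delta,N_+)$-system has eigenvalues $2(1+\slimit)\pm 2\sqrt{3\stildelimit}$ with eigenvectors $(\sqrt{\stildelimit/3},\pm 1)^T$. This motivates working with the constant-coefficient combinations
\[
M_{\pm}\coloneqq\Delta\pm\sqrt{\stildelimit/3}\,N_+,
\]
where the prefactor $\sqrt{\stildelimit/3}$ makes sense precisely because $\stildelimit>0$.

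In the first case, where $\Delta N_+>0$ on $({-}\infty,\tau_0)$, I would focus on $M_+$. The sign condition together with $\stildelimit>0$ yields $|M_+|=|\Delta|+\sqrt{\stildelimit/3}|N_+|$, so $M_+$ has a definite sign and $|N_+/M_+|\le\sqrt{3/\stildelimit}$. Differentiating $M_+$ by means of~\eqref{eqns_evolutionbianchib} and collecting terms (using the identity $6\sqrt{\stildelimit/3}=2\sqrt{3\stildelimit}$) gives a decomposition
\[
M_+'=\zeta\,M_++r\,N_+,\qquad \zeta\coloneqq 2(q+\Sigma_+-1)+2\sqrt{3\stildelimit},
\]
where a short calculation shows that the remainder
\[
r=2(\tilde\Sigma-\stildelimit)-2\tilde N+\sqrt{\stildelimit/3}\,(2-q)
\]
is a linear combination of quantities converging to zero as $\tau\to{-}\infty$ by Proposition~\ref{prop_convergencetoD}. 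Dividing by $M_+$ and using the uniform bound on $|N_+/M_+|$ then shows that $|M_+|'/|M_+|$ converges to $2(1+\slimit)+2\sqrt{3\stildelimit}$, at which point Lemma~\ref{lemm_decaylemmageneral} produces the claimed two-sided exponential estimates.

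The case $\Delta N_+<0$ proceeds in a structurally identical manner with $M_-$ in place of $M_+$: the sign condition again forces $|M_-|=|\Delta|+\sqrt{\stildelimit/3}|N_+|$ and hence the bound on $|N_+/M_-|$, and the analogous decomposition delivers the leading rate $2(1+\slimit)-2\sqrt{3\stildelimit}$. The key technical point is choosing the prefactor $\sqrt{\stildelimit/3}$ correctly: it is the unique value that kills the $\Delta$-residual in $M_\pm'-\zeta M_\pm$ at leading order and reduces the remainder $r$ to quantities already known to decay. This diagonalisation is only available when $\stildelimit>0$, so the degenerate case $\stildelimit=0$, where the two eigendirections coincide, falls outside the reach of the present argument and must be handled separately as in Lemma~\ref{lemm_decaystiff_axis}.
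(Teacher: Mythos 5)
Your proposal is correct and follows essentially the same route as the paper: both form the eigendirection combination $\Delta\pm\sqrt{\stildelimit/3}\,N_+$, use the sign condition on $\Delta N_+$ (and $\stildelimit>0$) to bound the quotient $N_+/(\Delta\pm\sqrt{\stildelimit/3}\,N_+)$ so that the logarithmic derivative converges to $2(1+\slimit)\pm2\sqrt{3\stildelimit}$, and then invoke Lemma~\ref{lemm_decaylemmageneral}. The only cosmetic difference is that the paper splits the error into functions $f_1,f_2$ multiplying $\Delta$ and $N_+$ separately and handles the both-negative subcase via the symmetry of Remark~\ref{rema_symmetryevolution}, whereas you isolate a single remainder $r\,N_+$ and work with absolute values throughout.
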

\begin{proof}
	For notational convenience, we set
	\begin{equation}
	\label{eqn_definitionrfactor_stiff}
		\rfactorstiff\coloneqq\sqrt{\frac{\stildelimit}3}
	\end{equation}
	and note that
	\begin{equation}
	\label{eqn_propertyrfactor}
		2\stildelimit=6\rfactorstiff^2.
	\end{equation}
	
	We start with the case~$\Delta N_+>0$. The evolution equations~\eqref{eqns_evolutionbianchib} yield 
	\begin{equation}
		(\Delta+\rfactorstiff N_+)'=(2q+2\Sigma_+-2+6\rfactorstiff)\Delta
		+(\rfactorstiff q+2\rfactorstiff\Sigma_++2\tilde\Sigma-2\tilde N)N_+,
	\end{equation}
	and the assumption on the convergence of the solution gives the limits of the brackets,
	\begin{align}
		&\lim_{\tau\rightarrow{-}\infty}(2q+2\Sigma_+-2+6\rfactorstiff) =2+2\slimit+6\rfactorstiff,\\
		&\lim_{\tau\rightarrow{-}\infty}(\rfactorstiff q+2\rfactorstiff\Sigma_++2\tilde\Sigma-2\tilde N) =2\rfactorstiff+2\rfactorstiff\slimit+2\stildelimit.
	\end{align}
	For this reason, there are functions~$f_1,f_2$ converging to~$0$ as~$\tau\rightarrow{-}\infty$ and such that
	\begin{equation}
	\label{eqn_evolutionrdeltaplusnplus}
	\begin{split}
		(\Delta+\rfactorstiff N_+)'
			={}&(2+2\slimit+6\rfactorstiff+f_1)\Delta+(2\rfactorstiff+2\rfactorstiff\slimit+6\rfactorstiff^2+f_2)N_+\\
			={}&(2+2\slimit+6\rfactorstiff+\frac{f_1\Delta+f_2N_+}{\Delta+\rfactorstiff N_+})(\Delta+\rfactorstiff N_+)\\
			={}&(2+2\slimit+2\sqrt{3\stildelimit}+\frac{f_1\Delta+f_2N_+}{\Delta+\rfactorstiff N_+})(\Delta+\rfactorstiff N_+).
	\end{split}
	\end{equation}
	In this computation, we have made use of the fact that~$\Delta$ and~$N_+$ have the same sign. The quotient in the bracket vanishes asymptotically, and Lemma~\ref{lemm_decaylemmageneral} yields the decay of~$\Delta+\rfactorstiff N_+$ in case both~$\Delta$ and~$N_+$ are positive. If both are negative, the statement follows due to the invariance of the evolution equations~\eqref{eqns_evolutionbianchib}--\eqref{eqn_evolutionomega} under a change of sign in these two variables, see Remark~\ref{rema_symmetryevolution}.
	
	The case~$\Delta N_+<0$ is treated by replacing every occurence of~$\rfactorstiff$ by~${-}\rfactorstiff$.
\end{proof}

In the following, we determine convergence rates for all variables, depending on the sign which~$\Delta N_+$ has asymptotically. We start with the case of $\Delta N_+>0$ and find that the convergence rates for~$\Sigma_+$, $\tilde\Sigma$, $\tilde A$, $\tilde N$ and~$q$ are the same as in the vacuum case of~\cite[Lemma~6.8]{radermacher_sccogonbianchibperfectfluidsvacuum}. In particular, these rates only depend on~$\slimit$, while the decay rates for~$\Delta$ and~$N_+$ also depend on~$\stildelimit$.
As was the case in non-stiff fluid, the arguments revolve around the constraint \equ ~\eqref{eqn_constraintgeneralone} written in the form
\begin{equation}
\label{eqn_constraintforproofs}
	\tilde\Sigma N_+^2-3\Delta^2=(3\Sigma_+^2+{\bparamk}\tilde\Sigma)\tilde A
\end{equation}
\begin{lemm}
\label{lemm_deltanplussamesign_stiff}
	\considerconvoffaxisD.
	Assume that $\Delta N_+(\tau)>0$ for all $\tau\le\tau_0$. Then
	\begin{align}
		\Sigma_+={}&\slimit+\O(e^{(4+4\slimit-\eps)\tau}),\\
		\tilde\Sigma={}&\stildelimit+\O(e^{(4+4\slimit-\eps)\tau}),\\
		\Delta={}&\O(e^{(2+2\slimit+2\sqrt{3\stildelimit}-\eps)\tau}),\\
		N_+={}&\O(e^{(2+2\slimit+2\sqrt{3\stildelimit}-\eps)\tau}),\\
		\tilde N={}&\O(e^{(4+4\slimit-\eps)\tau}),\\
		q={}&2+\O(e^{(4+4\slimit-\eps)\tau}),
	\end{align}
	as $\tau\rightarrow{-}\infty$, for every $\eps>0$.
	Furthermore,
	\begin{equation}
		\tilde A(3\slimit^2+{\bparamk}\stildelimit)=0
	\end{equation}
	holds along the whole solution, and $\tilde A>0$ implies that
	\begin{equation}
		3\Sigma_+^2+{\bparamk}\tilde\Sigma =\O(e^{(4+4\slimit+4\sqrt{3\stildelimit}-\eps)\tau})
	\end{equation}
	as $\tau\rightarrow{-}\infty$, for every $\eps>0$.

	Further, there is for each $\eps>0$ a $\tau_\eps>{-}\infty$ such that $\tau\le \tau_\eps$ implies
	\begin{align}
	\label{eqn_lowerbounddelta}
		e^{(2+2\slimit+2\sqrt{3\stildelimit}+\eps)\tau}\le {}&\absval{\Delta},\\
	\label{eqn_lowerboundnplus}
		e^{(2+2\slimit+2\sqrt{3\stildelimit}+\eps)\tau}\le {}&\absval{N_+}.
	\end{align}
\end{lemm}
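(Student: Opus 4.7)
The plan is a bootstrap argument. Since $\Delta N_+>0$, the sum $\Delta+\rfactorstiff N_+$ has fixed sign and $|\Delta+\rfactorstiff N_+|=|\Delta|+\rfactorstiff|N_+|$ with $\rfactorstiff=\sqrt{\stildelimit/3}$ (cf.~\eqref{eqn_definitionrfactor_stiff}); hence the upper bound of Lemma~\ref{lemm_decaydeltanplus_stiff_notaxis} immediately yields $|\Delta|,|N_+|=\O(e^{(2+2\slimit+2\sqrt{3\stildelimit}-\eps)\tau})$. Lemma~\ref{lemm_decaytildea} controls~$\tilde A$, and combined with the definition~\eqref{eqn_definitiontilden} of $\tilde N$ (the $N_+^2$ contribution being strictly subleading) yields $\tilde N=\O(e^{(4+4\slimit-\eps)\tau})$; then $q-2=-2\tilde A-2\tilde N=\O(e^{(4+4\slimit-\eps)\tau})$ by~\eqref{eqn_qwithaandn}.

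With these controls in place, the evolution equations~\eqref{eqns_evolutionbianchib} for $\Sigma_+$ and $\tilde\Sigma$ have right-hand sides of order $\O(e^{(4+4\slimit-\eps)\tau})$, noting that $|\Delta N_+|$ is even smaller. Integrating from $-\infty$ gives the stated expansions $\Sigma_+=\slimit+\O(e^{(4+4\slimit-\eps)\tau})$ and $\tilde\Sigma=\stildelimit+\O(e^{(4+4\slimit-\eps)\tau})$.

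To obtain $\tilde A(3\slimit^2+\bparamk\stildelimit)=0$, observe that the evolution equation for $\tilde A$ forces $\tilde A$ to be either identically zero or strictly positive along the whole solution. In the latter case, divide the constraint~\eqref{eqn_constraintforproofs} by $\tilde A$: the numerator $\tilde\Sigma N_+^2-3\Delta^2$ is $\O(e^{(4+4\slimit+4\sqrt{3\stildelimit}-\eps)\tau})$ while $\tilde A\ge e^{(4+4\slimit+\eps)\tau}$, so the quotient tends to zero whereas its value equals $3\Sigma_+^2+\bparamk\tilde\Sigma\to 3\slimit^2+\bparamk\stildelimit$; hence the latter is zero. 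The refined estimate on $3\Sigma_+^2+\bparamk\tilde\Sigma$ is then obtained by sharpening the lower bound on $\tilde A$ via Lemma~\ref{lemm_decaylemmageneralimproved} applied to $\tilde A'=2(q+2\Sigma_+)\tilde A$---its coefficient differs from $2+2\slimit$ by an exponentially small amount thanks to the previous paragraph---and reinserting into the rearranged constraint.

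Finally, for the lower bounds on $|\Delta|$ and $|N_+|$ individually, the plan is to show that $|N_+|/|\Delta|\to 1/\rfactorstiff$ asymptotically, so that $|\Delta+\rfactorstiff N_+|\sim 2|\Delta|\sim 2\rfactorstiff|N_+|$ and the lower bound of Lemma~\ref{lemm_decaydeltanplus_stiff_notaxis} transfers to each variable. If $\tilde A\equiv 0$, the constraint reduces to $\tilde\Sigma N_+^2=3\Delta^2$ and the ratio is immediate. If $\tilde A>0$, factor $\tilde\Sigma N_+^2-3\Delta^2=(\sqrt{\tilde\Sigma}N_+-\sqrt{3}\Delta)(\sqrt{\tilde\Sigma}N_+ +\sqrt{3}\Delta)$, bound the product above using the previous estimate on $(3\Sigma_+^2+\bparamk\tilde\Sigma)\tilde A$, and use the lower bound on $|\sqrt{\tilde\Sigma}N_+ +\sqrt{3}\Delta|$ (inherited from Lemma~\ref{lemm_decaydeltanplus_stiff_notaxis}) to conclude that the complementary factor $\sqrt{\tilde\Sigma}N_+-\sqrt{3}\Delta$ decays strictly faster than $|\Delta+\rfactorstiff N_+|$. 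This last step---propagating the individual lower bounds in the regime $\tilde A>0$---is the main technical difficulty, as it requires delicately balancing three exponential rates in the constraint.
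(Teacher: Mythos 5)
Most of your argument tracks the paper's proof: the upper bounds via Lemma~\ref{lemm_decaydeltanplus_stiff_notaxis} and the same-sign identity $\absval{\Delta+\rfactorstiff N_+}=\absval\Delta+\rfactorstiff\absval{N_+}$, the resulting rates for $\tilde N$, $q$, $\Sigma_+$, $\tilde\Sigma$, and the derivation of $\tilde A(3\slimit^2+\bparamk\stildelimit)=0$ by comparing the decay of the two sides of~\eqref{eqn_constraintforproofs} are all correct and essentially identical to what the paper does. The gap is in your route to the refined estimate $3\Sigma_+^2+\bparamk\tilde\Sigma=\O(e^{(4+4\slimit+4\sqrt{3\stildelimit}-\eps)\tau})$. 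Sharpening $\tilde A$ to $\tilde A\ge c_{\tilde A}e^{(4+4\slimit)\tau}$ via Lemma~\ref{lemm_decaylemmageneralimproved} and then dividing the constraint by $\tilde A$ only yields
\begin{equation}
	3\Sigma_+^2+\bparamk\tilde\Sigma=\frac{\tilde\Sigma N_+^2-3\Delta^2}{\tilde A}=\frac{\O(e^{(4+4\slimit+4\sqrt{3\stildelimit}-\eps)\tau})}{c_{\tilde A}e^{(4+4\slimit)\tau}}=\O(e^{(4\sqrt{3\stildelimit}-\eps)\tau}),
\end{equation}
which falls short of the claimed rate by $4+4\slimit>0$ in the exponent. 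The paper instead differentiates the quantity along the flow,
\begin{equation}
	(3\Sigma_+^2+\bparamk\tilde\Sigma)'=2(q-2)(3\Sigma_+^2+\bparamk\tilde\Sigma)-4N_+(\Sigma_+N_++\bparamk\Delta),
\end{equation}
where the source term is quadratic in $(\Delta,N_+)$ and hence already of order $e^{(4+4\slimit+4\sqrt{3\stildelimit}-\eps)\tau}$; integrating with the factor $e^{2\integraltwominusq}$ from Lemma~\ref{lemm_functionintegraltwominusq} and using that the constant of integration $3\slimit^2+\bparamk\stildelimit$ vanishes gives the full rate. You need some version of this dynamical argument; the constraint alone cannot produce it.

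This is not a cosmetic loss, because the weaker rate breaks your final step. For the individual lower bounds in the case $\tilde A>0$, one must show that $(3\Sigma_+^2+\bparamk\tilde\Sigma)\tilde A$ decays \emph{strictly} faster than $(\Delta+\rfactorstiff N_+)^2\ge e^{(4+4\slimit+4\sqrt{3\stildelimit}+2\eps)\tau}$, so that $\rfactorstiff N_+/(\Delta+\rfactorstiff N_+)$ and $\Delta/(\Delta+\rfactorstiff N_+)$ both tend to $1/2$. With the correct refined estimate the product is $\O(e^{(8+8\slimit+4\sqrt{3\stildelimit}-\eps)\tau})$ and the quotient decays like $e^{(4+4\slimit-3\eps)\tau}\rightarrow0$; with your bound the product is only $\O(e^{(4+4\slimit+4\sqrt{3\stildelimit}-\eps)\tau})$, the quotient is merely $\O(e^{-3\eps\tau})$, and no conclusion follows. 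So the ``main technical difficulty'' you flag at the end is real, and the tool you propose for it does not close it. (Your treatment of the case $\tilde A\equiv0$, where the constraint degenerates to $\tilde\Sigma N_+^2=3\Delta^2$, is fine.)
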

\begin{rema}
	Note that we excluded the case~$\stildelimit=0$ in the statement. The reason is that we wish to apply Lemma~\ref{lemm_decaydeltanplus_stiff_notaxis}, which only holds for~$\stildelimit>0$. The case~$\stildelimit=0$ is treated in Lemma~\ref{lemm_decaystiff_axis}.
\end{rema}

\begin{rema}
\label{rema_auxiliarydecayconstraintexpr_stiff}
	It follows from Lemma~\ref{lemm_deltanplussamesign_stiff} that in this setting
	\begin{equation}
		\frac13\tilde\Sigma N_+^2-\Delta^2=(\Sigma_+^2+\frac {\bparamk}3\tilde\Sigma)\tilde A=\O(e^{(8+8\slimit+4\sqrt{3\stildelimit}-\eps)\tau})
	\end{equation}
	holds for both $\tilde A=0$ and $\tilde A>0$.
\end{rema}
\begin{proof}[Proof of Lemma~\ref{lemm_deltanplussamesign_stiff}]
	The upper bounds for~$\Delta$ and~$N_+$ follow from Lemma~\ref{lemm_decaydeltanplus_stiff_notaxis}, where we determined that for sufficiently negative~$\tau$
	\begin{equation}
		\absval{\Delta+\rfactorstiff N_+}\le e^{(2+2\slimit+2\sqrt{3\stildelimit}-\eps)\tau},
	\end{equation}
	with~$\rfactorstiff$ as in \equ ~\eqref{eqn_definitionrfactor_stiff} and therefore positive.
	In combination with the decay $\tilde A=\O(e^{(4+4\slimit-\eps)\tau})$ determined in Lemma~\ref{lemm_decaytildea}, this implies
	\begin{align}
		\tilde N={}&\frac13(N_+^2-{\bparamk}\tilde A)=\O(e^{(4+4\slimit-\eps)\tau}),\\
		q={}&2(1-\tilde A-\tilde N)=2+\O(e^{(4+4\slimit-\eps)\tau}),
	\end{align}
	where we used equation~\eqref{eqn_qwithaandn} for~$q$.
	Inserting these decay rates in the evolution \equs~\eqref{eqns_evolutionbianchib} implies that
	\begin{equation}
		\Sigma_+'=\O(e^{(4+4\slimit-\eps)\tau}),\qquad \tilde\Sigma'=\O(e^{(4+4\slimit-\eps)\tau}),
	\end{equation}
	and together with the assumption on the limit point gives the asymptotic behaviour for~$\Sigma_+$ and~$\tilde\Sigma$.

	\smallskip

	Applying the convergence rates obtained so far, the constraint \equ\ written as in~\eqref{eqn_constraintforproofs} reads
	\begin{equation}
	\label{eqn_constraintinproofsamesign}
		\tilde\Sigma N_+^2-3\Delta^2=(3\Sigma_+^2+{\bparamk}\tilde\Sigma)\tilde A=(3\slimit^2+{\bparamk}\stildelimit+\O(e^{(4+4\slimit-\eps)\tau}))\tilde A,
	\end{equation}
	where the \lhs\ is of order $\O(e^{(4+4\slimit+4\sqrt{3\stildelimit}-\eps)\tau})$. The \rhs\ can either vanish altogether, or it consists of two non-vanishing factors. In the latter case, the decay of~$\tilde A$ is at most as fast as $e^{(4+4\slimit+\eps)\tau}$, see Lemma~\ref{lemm_decaytildea}. For consistency reasons,  equation~\eqref{eqn_constraintinproofsamesign} can therefore only hold if~$\tilde A=0$ or $3\slimit^2+{\bparamk}\stildelimit=0$.
	
	We discuss the first factor on the \rhs\ of~\eqref{eqn_constraintinproofsamesign} in more detail: The evolution equations~\eqref{eqns_evolutionbianchib} together with the definition of~$\tilde N$, equation~\eqref{eqn_definitiontilden}, give
	\begin{equation}
		(3\Sigma_+^2+{\bparamk}\tilde\Sigma)'=6\Sigma_+\Sigma_+'+{\bparamk}\tilde\Sigma'=2(q-2)(3\Sigma_+^2+{\bparamk}\tilde\Sigma)-4N_+(\Sigma_+N_++{\bparamk}\Delta).
	\end{equation}
	Using the function~$\integraltwominusq(\tau)=\int_{{-}\infty}^\tau(2-q)ds$ from Lemma~\ref{lemm_functionintegraltwominusq}, we can compute from this
	\begin{equation}
		\frac{d}{d\tau}(e^{2\integraltwominusq}(3\Sigma_+^2+{\bparamk}\tilde\Sigma))
		={-}e^{2\integraltwominusq}4N_+(\Sigma_+N_++{\bparamk}\Delta),
	\end{equation}
	and therefore
	\begin{align}
		e^{2\integraltwominusq}(3\Sigma_+^2+{\bparamk}\tilde\Sigma)
			={}& 3\slimit^2+{\bparamk}\stildelimit- \int_{{-}\infty}^\tau e^{2\integraltwominusq}4N_+(\Sigma_+N_++{\bparamk}\Delta)ds\\
			={}& 3\slimit^2+{\bparamk}\stildelimit +\O(e^{(4+4\slimit+4\sqrt{3\stildelimit}-\eps)\tau}),
	\end{align}
	where we have applied the decay rates obtained above. Assuming that $\tilde A>0$, the constant term has been shown to vanish, and we conclude that
	\begin{equation}
		3\Sigma_+^2+{\bparamk}\tilde\Sigma=\O(e^{(4+4\slimit+4\sqrt{3\stildelimit}-\eps)\tau}).
	\end{equation}
	As a consequence, the term~$(\Sigma_+^2+\frac {\bparamk}3\tilde\Sigma)\tilde A$ decays exponentially to order at least $8+8\slimit+4\sqrt{3\stildelimit}-\eps$, both in case~$\tilde A=0$ and in case~$\tilde A>0$: In the first case, the term vanishes identically, while in the second one we have determined the decay properties of both factors individually.

	\smallskip

	To conclude the proof, one needs to show the lower bound for~$\Delta$ and~$N_+$. 
	The proof mirrors that of~\cite[Lemma~6.8]{radermacher_sccogonbianchibperfectfluidsvacuum}: From the last argument together with Lemma~\ref{lemm_decaydeltanplus_stiff_notaxis} one finds that 
	\begin{equation}
		\frac13\tilde\Sigma(\frac{N_+}{\Delta+\rfactorstiff N_+})^2-(\frac{\Delta}{\Delta+\rfactorstiff N_+})^2=\frac{(\Sigma_+^2+\frac {\bparamk}3\tilde\Sigma)\tilde A}{({\Delta+\rfactorstiff N_+})^2}\rightarrow0
	\end{equation}
	as $\tau\rightarrow{-}\infty$, and due to our assumption~$\stildelimit>0$, we know that~$\tilde\Sigma$ is bounded away from zero for sufficiently negative times.
\end{proof}

\begin{lemm}
\label{lemm_deltanpluszero_stiff}
	\considerconvtoD.
	Assume that
	$\Delta=N_+=0$ along the solution.
	Then
	\begin{equation}
		\tilde A(3\slimit^2+{\bparamk}\stildelimit)=0
	\end{equation}
	and
	\begin{align}
		\Sigma_+={}&\slimit+\O(e^{(4+4\slimit-\eps)\tau}),\\
		\tilde\Sigma={}&\stildelimit+\O(e^{(4+4\slimit-\eps)\tau}),\\
		\tilde N={}&\O(e^{(4+4\slimit-\eps)\tau}),\\
		q={}&2+\O(e^{(4+4\slimit-\eps)\tau}),\\
	\end{align}
	as $\tau\rightarrow{-}\infty$, for every $\eps>0$.
\end{lemm}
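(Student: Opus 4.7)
The plan is to exploit the drastic simplification afforded by the hypothesis $\Delta \equiv N_+ \equiv 0$, which collapses the definition of $\tilde N$ and the constraint equation to algebraic identities in the remaining three variables. Substituting $N_+ = 0$ into \eqref{eqn_definitiontilden} gives $\tilde N = -\bparamk\tilde A/3$, while the constraint~\eqref{eqn_constraintgeneralone} with $\Delta = 0$ reads $\tilde\Sigma\tilde N = \Sigma_+^2 \tilde A$. Combining these yields
\begin{equation}
    \tilde A\bigl(3\Sigma_+^2 + \bparamk\tilde\Sigma\bigr) = 0
\end{equation}
along the whole solution. Since $\tilde A$ satisfies the linear ODE $\tilde A' = 2(q+2\Sigma_+)\tilde A$ in~\eqref{eqns_evolutionbianchib}, it either vanishes identically or is nowhere zero; in the latter case $3\Sigma_+^2+\bparamk\tilde\Sigma \equiv 0$ along the solution, and passing to the limit $\tau\to-\infty$ forces $3\slimit^2+\bparamk\stildelimit = 0$. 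Either way the asserted algebraic relation $\tilde A(3\slimit^2+\bparamk\stildelimit) = 0$ holds pointwise.

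Next I would invoke Lemma~\ref{lemm_decaytildea} to obtain $\tilde A = \O(e^{(4+4\slimit-\eps)\tau})$ (trivially if $\tilde A\equiv 0$); through the algebraic identity $\tilde N = -\bparamk\tilde A/3$ this immediately gives the claimed decay of $\tilde N$, and then $q-2 = -2\tilde A-2\tilde N$ via \eqref{eqn_qwithaandn} gives the decay of $q-2$. Feeding these bounds into the evolution equations
\begin{equation}
    \Sigma_+' = (q-2)\Sigma_+ - 2\tilde N, \qquad \tilde\Sigma' = 2(q-2)\tilde\Sigma - 4\Sigma_+\tilde A,
\end{equation}
where the $\Delta N_+$ term in $\tilde\Sigma'$ drops out, and using boundedness of $\Sigma_+$ and $\tilde\Sigma$ from Remark~\ref{rema_statespacecompact}, both right-hand sides are $\O(e^{(4+4\slimit-\eps)\tau})$. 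Since $\slimit\in({-}1,1)$ the exponent $4+4\slimit-\eps$ is strictly positive for small $\eps$, so integrating from $-\infty$ to $\tau$ (using $\Sigma_+\to\slimit$ and $\tilde\Sigma\to\stildelimit$ from the standing hypothesis, together with convergence of the resulting improper integrals) yields the announced convergence rates for $\Sigma_+$ and $\tilde\Sigma$.

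I do not foresee a genuine obstacle: the main observation is that eliminating $\Delta$ and $N_+$ reduces the nontrivial dynamics to that of the single variable $\tilde A$, whose decay is already controlled by Lemma~\ref{lemm_decaytildea}, and everything else follows by algebraic substitution and one integration against an exponentially small integrand.
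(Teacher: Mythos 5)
Your proposal is correct and follows essentially the same route as the paper, which simply notes that with $\Delta=N_+=0$ the decay of $\tilde A$ from Lemma~\ref{lemm_decaytildea} determines $\tilde N$, $q$, $\Sigma_+'$ and $\tilde\Sigma'$; your derivation of $\tilde A(3\slimit^2+\bparamk\stildelimit)=0$ from the constraint and the dichotomy for the linear ODE for $\tilde A$ is exactly the intended argument, just spelled out in more detail than the paper's one-line reference to its non-stiff analogue.
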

\begin{proof}
	The proof goes along the lines of that of~\cite[Lemma~6.10]{radermacher_sccogonbianchibperfectfluidsvacuum}: As~$\Delta=N_+=0$, the decay of~$\tilde A$ from Lemma~\ref{lemm_decaytildea} determines the asymptotic behaviour of~$\tilde N$ and~$q$ as well as~$\Sigma_+'$ and~$\tilde\Sigma'$.
\end{proof}

For limit points contained in~$\stiffalphalimitminus$ or~$\stiffalphalimitzero$, \ie satisfying $(1+\slimit)^2\le3\tilde\slimit$, the case $\Delta N_+<0$ can be excluded. This statement should be compared with~\cite[Lemma~6.11]{radermacher_sccogonbianchibperfectfluidsvacuum}, where a negative sign of~$\Delta N_+$ could be excluded upon convergence towards Kasner points to the left of Taub~2.
\begin{lemm}
\label{lemm_deltanplusoppositesign_stiff}
	\considerconvtoD.
	If $\Delta N_+(\tau_k)<0$ for a sequence $\tau_k\rightarrow{-}\infty$, then $(1+\slimit)^2>3\tilde\slimit$.
\end{lemm}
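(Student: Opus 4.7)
The plan is to proceed by contrapositive: assume $(1+\slimit)^2\le 3\stildelimit$ and show that $\Delta N_+(\tau)<0$ cannot hold along a sequence $\tau_k\to-\infty$. Since $\slimit\in({-}1,1)$, the assumption forces $\stildelimit\ge(1+\slimit)^2/3>0$, so Lemma~\ref{lemm_deltanplussigns_stiff} provides the three-way trichotomy. The existence of a sequence $\tau_k\to-\infty$ with $\Delta N_+(\tau_k)<0$ rules out both $\Delta\equiv N_+\equiv 0$ and asymptotic positivity of $\Delta N_+$, so $\Delta N_+(\tau)<0$ for all $\tau\le\tau_1$ for some $\tau_1$. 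I will derive a contradiction in this regime.

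Invoking Lemma~\ref{lemm_decaydeltanplus_stiff_notaxis} in its $\Delta N_+<0$ branch yields, for every $\eps>0$, the lower bound $e^{(2+2\slimit-2\sqrt{3\stildelimit}+\eps)\tau}\le\absval{\Delta-\rfactorstiff N_+}$ at all sufficiently negative times. When the strict inequality $(1+\slimit)^2<3\stildelimit$ holds, the base exponent $2+2\slimit-2\sqrt{3\stildelimit}$ is strictly negative; for $\eps$ small the total exponent is still negative, so $\absval{\Delta-\rfactorstiff N_+}\to\infty$ as $\tau\to-\infty$, contradicting boundedness of the state space (Remark~\ref{rema_statespacecompact}). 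This settles the strict subcase.

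The main obstacle is the boundary subcase $(1+\slimit)^2=3\stildelimit$, where both bounds in Lemma~\ref{lemm_decaydeltanplus_stiff_notaxis} become vacuous. My plan here is to work directly with the evolution equation for $\Delta-\rfactorstiff N_+$, which following the proof of that lemma can be rewritten as
\begin{equation*}
    (\Delta-\rfactorstiff N_+)'=A\,(\Delta-\rfactorstiff N_+)+\bigl[2(\tilde\Sigma-\stildelimit)-2\tilde N+\rfactorstiff(q-2)\bigr]N_+,
\end{equation*}
where $A=2(q+\Sigma_+-1)-6\rfactorstiff$ tends to $0$ at the boundary. In the $\Delta N_+<0$ regime, $\Delta$ and ${-}\rfactorstiff N_+$ share a sign, hence $\absval{N_+}\le\rfactorstiff^{-1}\absval{\Delta-\rfactorstiff N_+}$; the logarithmic derivative of $\absval{\Delta-\rfactorstiff N_+}$ is therefore a bounded combination of terms that vanish as $\tau\to-\infty$. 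The aim is to show that these terms vanish \emph{exponentially}, so that Lemma~\ref{lemm_decaylemmageneralimproved} forces $\absval{\Delta-\rfactorstiff N_+}$ to converge to a strictly positive constant, contradicting $\Delta,N_+\to 0$ from Proposition~\ref{prop_convergencetoD}. Supplying this exponential decay for $q-2,\Sigma_+-\slimit,\tilde\Sigma-\stildelimit,\tilde N$ in the opposite-sign regime (where Lemma~\ref{lemm_deltanplussamesign_stiff} does not apply) is the technical heart: I would start from $\tilde A=\O(e^{(4+4\slimit-\eps)\tau})$ (Lemma~\ref{lemm_decaytildea}), exploit compactness to control $\absval{\Delta-\rfactorstiff N_+}$, and use the constraint~\eqref{eqn_constraintgeneralone} together with~\eqref{eqn_definitiontilden} to bootstrap individual decay rates for $N_+$, $\Delta$ and the remaining quantities.
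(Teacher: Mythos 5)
Your reduction to the contrapositive, the use of Lemma~\ref{lemm_deltanplussigns_stiff} to fix the sign of $\Delta N_+$, the observation that $\stildelimit=0$ makes the conclusion trivial, and the treatment of the strict subcase $(1+\slimit)^2<3\stildelimit$ via the divergent lower bound from Lemma~\ref{lemm_decaydeltanplus_stiff_notaxis} all coincide with the paper's proof and are correct. The algebraic rewriting of $(\Delta-\rfactorstiff N_+)'$ is also correct.

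The gap is in the boundary case $(1+\slimit)^2=3\stildelimit$, and it is not merely a missing computation: the quantities whose exponential decay you propose to establish provably do \emph{not} decay exponentially under the standing hypotheses, so Lemma~\ref{lemm_decaylemmageneralimproved} can never be invoked. Indeed, at the boundary Lemma~\ref{lemm_decaydeltanplus_stiff_notaxis} gives $\absval{\Delta-\rfactorstiff N_+}\ge e^{\eps\tau}$ for every $\eps>0$, and combining the constraint \equ~\eqref{eqn_constraintforproofs} with the fast decay of $\tilde A$ (Lemma~\ref{lemm_decaytildea}) shows $\Delta=-(\rfactorstiff+f_3)N_++\O(e^{(4+4\slimit-2\eps)\tau})$ with $f_3\to0$, whence the lower bound transfers to $N_+$ and $\Delta$ individually: $\absval{N_+}\ge e^{\eps\tau}$ for every $\eps>0$. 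Consequently $\tilde N=\tfrac13(N_+^2-\bparamk\tilde A)$ and $2-q=2\tilde A+2\tilde N$ decay slower than any exponential, so the coefficient $2(\tilde\Sigma-\stildelimit)-2\tilde N+\rfactorstiff(q-2)$ in your decomposition is not $\O(e^{\xi\tau})$ for any $\xi>0$. Your bootstrap cannot close.

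The paper's argument runs in exactly the opposite direction: it \emph{uses} the slow decay. From $2-q\ge e^{\eps\tau}$ and $\Sigma_+'=(q-2)(\Sigma_++1)+2\tilde A$ (with $\tilde A$ decaying fast and $\Sigma_++1$ bounded away from zero) one gets $\Sigma_+'\le0$, hence $\Sigma_+\le\slimit$, asymptotically. A Taylor expansion of the constraint and the elimination of $\integraltwominusq$ from the convergence of $\Sigma_+$ and $\tilde\Sigma^{\nicefrac12}$ then reduce the evolution of $N_+$ to $N_+'=\bigl(\tfrac{6}{\slimit+1}(\Sigma_+-\slimit)+f_4\bigr)N_+$ with $f_4$ integrable, and the sign $\Sigma_+\le\slimit$ forces $\absval{e^{-F_4}N_+}$ to be non-decreasing towards the past, contradicting $N_+\to0$. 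If you want to complete your write-up, you should replace the exponential-decay bootstrap by this (or an equivalent) monotonicity argument.
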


\begin{proof}
	If~$\stildelimit=0$, then~$(1+\slimit)^2>3\stildelimit$ holds trivially, as~$\slimit\in({-}1,1)$. We can therefore assume that~$\stildelimit>0$.
	Due to Lemma~\ref{lemm_deltanplussigns_stiff}, we then know that~$\Delta N_+$ has a fixed sign for sufficiently negative~$\tau$, hence it is enough to show that solutions with $\Delta N_+(\tau)<0$ for $\tau\le\tau_0$ cannot converge to a Kasner point with $(1+\slimit)^2\le3\tilde\slimit$.

	Consider first the case $(1+\slimit)^2<3\tilde\slimit$ and recall from Lemma~\ref{lemm_decaydeltanplus_stiff_notaxis} that
	\begin{equation}
		\absval{\Delta-\rfactorstiff N_+}\ge e^{(2+2\slimit-2\sqrt{3\stildelimit}+\eps)\tau}
	\end{equation}
	for~$\tau$ sufficiently negative and~$\rfactorstiff$ as in equation~\eqref{eqn_definitionrfactor_stiff}.
	The bracket in the exponent is strictly negative for sufficiently small~$\eps$, 
	and as~$\Delta$ and~$-\rfactorstiff N_+$ have the same sign, this contradicts the fact that~$\Delta$ and~$N_+$ converge to zero.

	Let us therefore assume that $(1+\slimit)^2=3\tilde\slimit$. This implies the special value $3\rfactorstiff=1+\slimit$ and therefore
	\begin{equation}
		\absval{\Delta-\rfactorstiff N_+}\ge e^{\eps\tau}
	\end{equation}
	for~$\tau$ sufficiently negative. 
	The following argument shows that this estimate holds for~$\Delta$ and~$N_+$ individually: We reformulate the constraint \equ ~\eqref{eqn_constraintforproofs} into
	\begin{align}
		(3\Sigma_+^2+{\bparamk}\tilde\Sigma)\tilde A={}&\tilde\Sigma N_+^2-3\Delta^2\\
			={}&(\tilde\Sigma N_+-3\rfactorstiff^2 N_++3\rfactorstiff^2 N_+-3\rfactorstiff\Delta)N_++3(\rfactorstiff N_+-\Delta)\Delta.
	\end{align}
	and divide the \rhs\ by $3(\rfactorstiff N_+-\Delta)$ to obtain
	\begin{equation}
		(\frac{\tilde\Sigma-3\rfactorstiff^2}{3(\rfactorstiff N_+-\Delta)}N_++\rfactorstiff)N_++\Delta=(\rfactorstiff+f_3)N_++\Delta,
	\end{equation}
	with an asymptotically vanishing function~$f_3$.
	Dividing the \lhs\ by the same expression yields exponential decay to order $\O(e^{(4+4\slimit-2\eps)\tau})$ due to the decay of~$\tilde A$, Lemma~\ref{lemm_decaytildea}. Consequently,
	\begin{equation}
		\absval{N_+}\ge e^{\eps\tau},\qquad \absval{\Delta}\ge e^{\eps\tau}.
	\end{equation}
	Inserting these slow decay rates into the definitions of~$\tilde N$ and~$q$, equations~\eqref{eqn_definitiontilden} and~\eqref{eqn_qwithaandn},
	we even find
	\begin{equation}
		\tilde N\ge e^{\eps\tau},\qquad 2-q\ge e^{\eps\tau}.
	\end{equation}

	\smallskip

	In the rest of the proof we aim at constructing a contradiction to the slow decay of~$N_+$. The arguments are similar to those used in the proof of~\cite[Lemma~6.11]{radermacher_sccogonbianchibperfectfluidsvacuum}, but several of the convergence rates and limiting values differ, as~$\gamma=2$ and the limit point is~$(\slimit,\stildelimit,0,0,0)$ in stiff fluid solutions. We give several intermediate steps but refer to the proof in the non-stiff setting for the explanations.
	
	Taylor expansion of the square root applied to the constraint equation~\eqref{eqn_constraintforproofs} reveals that
	\begin{equation}
	\label{eqn_proofoppositesignsdelta}
		\Delta={-}\frac{\sqrt3}{3}\tilde\Sigma^{\nicefrac12}N_++\O(e^{(4+4\slimit-2\eps)\tau})
	\end{equation}
	due to the assumption on the sign of~$\Delta N_+$,
	and combining this with equations~\eqref{eqn_definitiontilden} and~\eqref{eqn_qwithaandn} yields
	\begin{equation}
		\Delta N_+={-}\frac{\sqrt3}{2}(2-q)\tilde\Sigma^{\nicefrac12} +\O(e^{(4+4\slimit-2\eps)\tau}).
	\end{equation}
	
	Next, we discuss the convergence behaviour of~$\Sigma_+$ and~$\tilde\Sigma$. Making use of the function~$\integraltwominusq$ from Lemma~\ref{lemm_functionintegraltwominusq}, we find that
	\begin{align}
		\Sigma_++1
		={}&(\slimit+1)e^{-\integraltwominusq(\tau)}+\O(e^{(4+4\slimit-2\eps)\tau}),\\
		\tilde\Sigma^{\nicefrac12}-\sqrt3
		={}&(\sqrt{\stildelimit}-\sqrt3) e^{-\integraltwominusq(\tau)}+\O(e^{(4+4\slimit-2\eps)\tau}).
	\end{align}
	We can eliminate~$\integraltwominusq$ using a suitable linear combination of these two expressions and find, recalling the relation between~$\slimit$ and~$\stildelimit$, that
	\begin{align}
	\label{eqn_proofoppositesignscombination}
		\O(e^{(4+4\slimit-2\eps)\tau})={}&
		(\sqrt{\stildelimit}-\sqrt3)(\Sigma_++1)-(\slimit+1)(\tilde\Sigma^{\nicefrac12}-\sqrt3)\\
		={}&(\sqrt{\stildelimit}-\sqrt3)\Sigma_+-(\slimit+1)\tilde\Sigma^{\nicefrac12}+\sqrt{\stildelimit}+\sqrt3\slimit\\
		={}&\frac{\sqrt3}3(\slimit-2)\Sigma_+-(\slimit+1)\tilde\Sigma^{\nicefrac12}+\frac{\sqrt3}3(1+4\slimit).
	\end{align}

	With this, we return to the evolution of~$N_+$. Due to its slow decay found above in combination with \equs ~\eqref{eqn_proofoppositesignsdelta} and~\eqref{eqn_proofoppositesignscombination}, we see
	\begin{align}
		N_+'={}&(q+2\Sigma_+)N_++6\Delta\\
			={}&(q+2\Sigma_+-2\sqrt3\tilde\Sigma^{\nicefrac12})N_++\O(e^{(4+4\slimit-2\eps)\tau})\\
			={}&(q+2\Sigma_+-\frac{2(\slimit-2)}{\slimit+1}\Sigma_+-\frac{2(1+4\slimit)}{\slimit+1}+\O(e^{(4+4\slimit-2\eps)\tau}))N_++\O(e^{(4+4\slimit-2\eps)\tau})\\
			={}&(q-2+\frac{6}{\slimit+1}(\Sigma_+-\slimit)+\O(e^{(4+4\slimit-3\eps)\tau}))N_+,
	\end{align}
	and conclude, using the properties of~$\integraltwominusq$ from Lemma~\ref{lemm_functionintegraltwominusq}, that there is a function~$f_4$ which is integrable on $({-}\infty,0)$ and satisfies
	\begin{equation}
		N_+'=(\frac{6}{\slimit+1}(\Sigma_+-\slimit)+f_4)N_+.
	\end{equation}
	We set
	\begin{equation}
		F_4(\tau)\coloneqq\int_{{-}\infty}^\tau f_4(s)ds
	\end{equation}
	and compute that
	\begin{equation}
		\frac{d}{d\tau}(e^{-F_4}N_+)=\frac{6}{\slimit+1}(\Sigma_+-\slimit)e^{-F_4}N_+.
	\end{equation}
	As
	\begin{equation}
		\Sigma_+'=(q-2)\Sigma_+-2\tilde N=(q-2)(\Sigma_++1)+2\tilde A
	\end{equation}
	due to equation~\eqref{eqn_qwithaandn}, the sign and slow decay of~$q-2$ together with the fast one of~$\tilde A$ and the fact that~$\Sigma_+$ is bounded away from~${-}1$ for sufficiently negative times yields that~$\Sigma_+'\le0$ asymptotically. Hence~$\Sigma_+\le\slimit$, which implies that the function $\absval{e^{-F_4}N_+}$ increases as $\tau\rightarrow{-}\infty$. Therefore, $N_+$ cannot converge to~$0$, a contradiction.
\end{proof}

The statements proven so far put us in a position to show Proposition~\ref{prop_convergencemain_unstablepart}.
\begin{proof}[Proof of Proposition~\ref{prop_convergencemain_unstablepart}]
	As we consider convergence to a point in~$\stiffalphalimitminus$ or~$\stiffalphalimitzero$, we see that~$\stildelimit>0$ has to hold.
	We have shown in Lemma~\ref{lemm_deltanplussigns_stiff} that there are three possible cases regarding the sign of~$\Delta N_+$. The case~$\Delta N_+<0$ asymptotically is excluded by Lemma~\ref{lemm_deltanplusoppositesign_stiff}. The two remaining cases are discussed in Lemma~\ref{lemm_deltanplussamesign_stiff} and Lemma~\ref{lemm_deltanpluszero_stiff}.
\end{proof}

In the previous lemma, we were able to show that solutions converging to~$\stiffalphalimitminus$ or~$\stiffalphalimitzero$ cannot satisfy~$\Delta N_+<0$ asymptotically as~$\tau\rightarrow{-}\infty$. In Proposition~\ref{prop_convergencemain_unstablepart}, we have collected the asymptotic convergence rates of solutions in this setting and seen that the exponential convergence rates coincide with the eigenvalues~$4+4\slimit$ or~$2+2\slimit+2\sqrt{3\stildelimit}$. We now turn our attention to solutions converging towards~$\stiffalphalimitplus$, \ie with a limit point satisfying~$(1+\slimit)^2>3\tilde\slimit$. In this situation, the case~$\Delta N_+<0$ asymptotically can occur. The asymptotic behaviour is governed by the decay rates of~$\Delta$ and~$N_+$, which tend to zero exponentially to order~$2+2\slimit-2\sqrt{3\stildelimit}$.
The following statement should be compared to the vacuum version of~\cite[Lemma~6.12]{radermacher_sccogonbianchibperfectfluidsvacuum}.
\begin{lemm}
\label{lemm_deltanplusrightoftaub2oppositesign_stiff}
	\considerconvoffaxisD.
	Assume that $\Delta N_+(\tau)<0$ for all $\tau\le\tau_0$.
	Then
	\begin{align}
		\Sigma_+={}&\slimit+\O(e^{(4+4\slimit-4\sqrt{3\stildelimit}-\eps)\tau}),\\
		\tilde\Sigma={}&\stildelimit+\O(e^{(4+4\slimit-4\sqrt{3\stildelimit}-\eps)\tau}),\\
		\Delta={}&\O(e^{(2+2\slimit-2\sqrt{3\stildelimit}-\eps)\tau}),\\
		N_+={}&\O(e^{(2+2\slimit-2\sqrt{3\stildelimit}-\eps)\tau}),\\
		\tilde N={}&\O(e^{(4+4\slimit-4\sqrt{3\stildelimit}-\eps)\tau}),\\
		q={}&2+\O(e^{(4+4\slimit-4\sqrt{3\stildelimit}-\eps)\tau}),
	\end{align}
	as $\tau\rightarrow{-}\infty$, for every $\eps>0$.

	Further, there is for each $\eps>0$ a $\tau_\eps>{-}\infty$ such that $\tau\le \tau_\eps$ implies
	\begin{align}
		e^{(2+2\slimit-2\sqrt{3\stildelimit}+\eps)\tau}\le {}&\absval{\Delta},\\
		e^{(2+2\slimit-2\sqrt{3\stildelimit}+\eps)\tau}\le {}&\absval{N_+}.
	\end{align}
\end{lemm}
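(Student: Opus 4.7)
The strategy closely parallels Lemma~\ref{lemm_deltanplussamesign_stiff}, but is in some sense simpler because this time the slow mode dominates and there is no need to split into $\tilde A=0$ and $\tilde A>0$ subcases.

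The plan is to start from Lemma~\ref{lemm_decaydeltanplus_stiff_notaxis}, which in the opposite-sign case bounds $\absval{\Delta-\rfactorstiff N_+}$ both from below and above by essentially $e^{(2+2\slimit-2\sqrt{3\stildelimit})\tau}$, where $\rfactorstiff=\sqrt{\stildelimit/3}>0$ (note $\stildelimit>0$ by assumption). Because $\Delta N_+<0$ and $\rfactorstiff>0$, the quantities $\Delta$ and $-\rfactorstiff N_+$ have the same sign, so $\absval{\Delta-\rfactorstiff N_+}=\absval{\Delta}+\rfactorstiff\absval{N_+}$, and the upper bound transfers directly to both $\absval{\Delta}$ and $\absval{N_+}$ at the slow rate $2+2\slimit-2\sqrt{3\stildelimit}-\eps$.

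With these upper bounds and the decay $\tilde A=\O(e^{(4+4\slimit-\eps)\tau})$ from Lemma~\ref{lemm_decaytildea} (which is strictly faster since $\stildelimit>0$), I would then read off $\tilde N=\frac13(N_+^2-\bparamk\tilde A)$ and $q=2-2\tilde A-2\tilde N$ to get the rate $4+4\slimit-4\sqrt{3\stildelimit}-\eps$ for $\tilde N$ and for $q-2$. Plugging into the evolution equations~\eqref{eqns_evolutionbianchib} gives $\Sigma_+'=(q-2)\Sigma_+-2\tilde N=\O(e^{(4+4\slimit-4\sqrt{3\stildelimit}-\eps)\tau})$ and similarly $\tilde\Sigma'=2(q-2)\tilde\Sigma-4\Sigma_+\tilde A-4\Delta N_+$, the last term being $\O(e^{(4+4\slimit-4\sqrt{3\stildelimit}-2\eps)\tau})$. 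Integrating from $-\infty$ then yields the stated convergence rates for $\Sigma_+$ and $\tilde\Sigma$.

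The main obstacle is the individual lower bounds for $\absval{\Delta}$ and $\absval{N_+}$: the lower bound on $\absval{\Delta-\rfactorstiff N_+}=\absval{\Delta}+\rfactorstiff\absval{N_+}$ alone says only that one of them is large. The plan is to imitate the final paragraph of the proof of Lemma~\ref{lemm_deltanplussamesign_stiff}: using the constraint equation in the form \eqref{eqn_constraintforproofs} and dividing by $(\Delta-\rfactorstiff N_+)^2$, one obtains
\begin{equation}
\tilde\Sigma\Bigl(\tfrac{N_+}{\Delta-\rfactorstiff N_+}\Bigr)^2-3\Bigl(\tfrac{\Delta}{\Delta-\rfactorstiff N_+}\Bigr)^2=\tfrac{(3\Sigma_+^2+\bparamk\tilde\Sigma)\tilde A}{(\Delta-\rfactorstiff N_+)^2}.
\end{equation}
The right-hand side tends to $0$ because $\tilde A$ decays at rate $4+4\slimit$ while the denominator only at rate $4+4\slimit-4\sqrt{3\stildelimit}$ (and $\stildelimit>0$). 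Writing $a\coloneqq\Delta/(\Delta-\rfactorstiff N_+)$, $b\coloneqq N_+/(\Delta-\rfactorstiff N_+)$, one has the algebraic identity $a-\rfactorstiff b=1$, and passing to the limit gives $\tilde\Sigma b^2-3a^2\to0$, hence $3\rfactorstiff^2 b^2-3a^2=-3(a-\rfactorstiff b)(a+\rfactorstiff b)\to0$, so $a+\rfactorstiff b\to0$. Combined with $a-\rfactorstiff b=1$, this forces $a\to 1/2$ and $b\to -1/(2\rfactorstiff)$, so both ratios are bounded away from zero for sufficiently negative $\tau$. Multiplying back by the lower bound on $\absval{\Delta-\rfactorstiff N_+}$ from Lemma~\ref{lemm_decaydeltanplus_stiff_notaxis} delivers the desired individual lower bounds at rate $2+2\slimit-2\sqrt{3\stildelimit}+\eps$.
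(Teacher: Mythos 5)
Your proposal is correct and follows essentially the same route as the paper: upper bounds from Lemma~\ref{lemm_decaydeltanplus_stiff_notaxis} (using that $\Delta$ and $-\rfactorstiff N_+$ share a sign), the faster decay of $\tilde A$ from Lemma~\ref{lemm_decaytildea} feeding into $\tilde N$, $q$, $\Sigma_+'$, $\tilde\Sigma'$, and the lower bounds obtained by dividing the constraint~\eqref{eqn_constraintforproofs} by $(\Delta-\rfactorstiff N_+)^2$ and using $\stildelimit>0$. Your explicit computation of the limits $a\to1/2$, $b\to-1/(2\rfactorstiff)$ merely spells out the final step the paper leaves terse.
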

\begin{rema}
	For the same reason as in Lemma~\ref{lemm_deltanplussamesign_stiff}, we exclude the case~$\stildelimit=0$. This is treated individually in Lemma~\ref{lemm_decaystiff_axis}.
\end{rema}

\begin{proof}
	The upper bounds for~$\Delta$ and~$N_+$ follow from Lemma~\ref{lemm_decaydeltanplus_stiff_notaxis}, where we determined that
	\begin{equation}
		\Delta-\rfactorstiff N_+=\O(e^{(2+2\slimit-2\sqrt{3\stildelimit}-\eps)\tau}),
	\end{equation}
	with $\rfactorstiff$ as in~\eqref{eqn_definitionrfactor_stiff} and therefore positive.
	The decay for~$\tilde A$ has been determined in Lemma~\ref{lemm_decaytildea} independently of the sign of~$\Delta N_+$. As this decay is faster than that of~$\Delta^2$ and~$N_+^2$, the asymptotic behaviour is dominated by the latter two variables.
	More precisely
	\begin{align}
		\tilde N={}&\frac13(N_+^2-{\bparamk}\tilde A)=\O(e^{(4+4\slimit-4\sqrt{3\stildelimit}-\eps)\tau}),\\
		q-2={}&{-}2\tilde A-2\tilde N=\O(e^{(4+4\slimit-4\sqrt{3\stildelimit}-\eps)\tau}),
	\end{align}
	using equations~\eqref{eqn_definitiontilden} and~\eqref{eqn_qwithaandn}.
	With this, the evolution \equ s~\eqref{eqns_evolutionbianchib} for~$\Sigma_+$ and~$\tilde\Sigma$ yields
		\begin{equation}
		\Sigma_+'=\O(e^{(4+4\slimit-4\sqrt{3\stildelimit}-\eps)\tau}),\qquad \tilde\Sigma'=\O(e^{(4+4\slimit-4\sqrt{3\stildelimit}-\eps)\tau}),
	\end{equation}
	which gives the convergence rates for~$\Sigma_+$ and~$\tilde\Sigma$.

	\smallskip

	To prove that the lower bounds for~$\Delta$ and~$N_+$ hold, we proceed similarly as in proof of Lemma~\ref{lemm_deltanplussamesign_stiff}. From the decay properties shown so far, we conclude that the~\rhs\ of
	\begin{equation}
		\frac13\tilde\Sigma(\frac{N_+}{\Delta-\rfactorstiff N_+})^2-(\frac{\Delta}{\Delta-\rfactorstiff N_+})^2=\frac{(\Sigma_+^2+\frac {\bparamk}3\tilde\Sigma)\tilde A}{({\Delta-\rfactorstiff N_+})^2},
	\end{equation}
	converges to~$0$ as $\tau\rightarrow{-}\infty$. Hence, the same is true for the \lhs, and we conclude the lower bounds using the fact that~$\tilde\Sigma$ is bounded away from zero due to~$\stildelimit>0$.
\end{proof}

Convergence towards a limit point in~$\stiffalphalimit$ situated on the~$\Sigma_+$-axis, \ie the case~$\stildelimit=0$, has to be treated individually. Lemma~\ref{lemm_deltanplussigns_stiff} and Lemma~\ref{lemm_decaydeltanplus_stiff_notaxis} do not apply, and we cannot exclude that~$\Delta$ and~$N_+$ change sign infinitely often as~$\tau\rightarrow{-}\infty$. Nonetheless, we can determine asymptotic convergence rates for the individual variables.
\begin{lemm}
\label{lemm_decaystiff_axis}
	\considerconvtoaxisD.
	Then
	\begin{align}
		\Sigma_+={}&\slimit+\O(e^{(4+4\slimit-\eps)\tau}),\\
		\tilde\Sigma={}&\O(e^{(6+6\slimit-2\eps)\tau}),\\
		\Delta={}&\O(e^{(4+4\slimit-\eps)\tau}),\\
		N_+={}&\O(e^{(2+2\slimit-\eps)\tau}),\\
		\tilde N={}&\O(e^{(4+4\slimit-\eps)\tau}),\\
		q={}&2+\O(e^{(4+4\slimit-\eps)\tau}),
	\end{align}
	as $\tau\rightarrow{-}\infty$, for every $\eps>0$. Further, if~$\tilde A>0$, then~$\slimit=0$.
\end{lemm}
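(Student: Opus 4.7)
The plan is to adapt the arguments of Lemma~\ref{lemm_deltanplussamesign_stiff} and Lemma~\ref{lemm_deltanpluszero_stiff} to the degenerate case $\stildelimit=0$. At $\stildelimit=0$ the two eigenvalues $2(1+\slimit\pm\sqrt{3\stildelimit})$ of the linearisation coalesce into a double eigenvalue $2(1+\slimit)$ with a nontrivial Jordan block, so the linear combinations $\Delta\pm\rfactorstiff N_+$ underlying Lemma~\ref{lemm_decaydeltanplus_stiff_notaxis} are no longer available; extracting the decay rate for $N_+$ is the main technical obstacle.

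First, $\tilde A=\O(e^{(4+4\slimit-\eps)\tau})$ follows immediately from Lemma~\ref{lemm_decaytildea}. The coefficient matrix of the coupled $(N_+,\Delta)$ system converges to a Jordan block with eigenvalue $2+2\slimit$, because the off-diagonal coefficient $2(\tilde\Sigma-\tilde N)$ tends to zero. A diagonalising change of variable $\widetilde N_+\coloneqq N_+-a(\tau)\Delta$, where $a$ is chosen so that the coefficient of $\Delta$ in the evolution of $\widetilde N_+$ vanishes (consequently satisfying a Riccati-type equation with $a(\tau)\sim 6\tau$ as $\tau\rightarrow-\infty$), reduces the system to scalar equations for $\widetilde N_+$ and $\Delta$ whose logarithmic derivatives converge to $2+2\slimit$. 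Lemma~\ref{lemm_decaylemmageneral} then yields $\widetilde N_+,\Delta=\O(e^{(2+2\slimit-\eps)\tau})$, and reconstructing $N_+=\widetilde N_++a\Delta$ and absorbing the polynomial factor from $a\sim 6\tau$ gives $N_+=\O(e^{(2+2\slimit-\eps)\tau})$ for every $\eps>0$.

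With these rates in hand, the remaining variables follow easily: definition~\eqref{eqn_definitiontilden} yields $\tilde N=\O(e^{(4+4\slimit-\eps)\tau})$; equation~\eqref{eqn_qwithaandn} yields $q-2=\O(e^{(4+4\slimit-\eps)\tau})$; integrating the evolution equation for $\Sigma_+$ gives $\Sigma_+=\slimit+\O(e^{(4+4\slimit-\eps)\tau})$; and a first integration of $\tilde\Sigma'=2(q-2)\tilde\Sigma-4\Sigma_+\tilde A-4\Delta N_+$ using $\tilde\Sigma(-\infty)=0$ produces the preliminary bound $\tilde\Sigma=\O(e^{(4+4\slimit-\eps)\tau})$.

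The dichotomy $\tilde A>0\Rightarrow\slimit=0$ follows by contradiction. If $\tilde A>0$, then Lemma~\ref{lemm_decaylemmageneralimproved} applied to $\tilde A'=2(q+2\Sigma_+)\tilde A$ upgrades the bound to the two-sided comparison $\tilde A\asymp e^{(4+4\slimit)\tau}$. From constraint~\eqref{eqn_constraintgeneralone} we have $\tilde\Sigma\tilde N\ge\Sigma_+^2\tilde A$, so $\tilde\Sigma\tilde N/\tilde A\ge\Sigma_+^2\rightarrow\slimit^2$ eventually. On the other hand, the preliminary rates give $\tilde\Sigma\tilde N=\O(e^{(8+8\slimit-3\eps)\tau})$, while $\tilde A\asymp e^{(4+4\slimit)\tau}$, so $\tilde\Sigma\tilde N/\tilde A=\O(e^{(4+4\slimit-3\eps)\tau})\rightarrow0$ for $\slimit>-1$ and $\eps$ small. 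For $\slimit\neq0$ this is impossible; hence $\slimit=0$. Finally, the $\tilde\Sigma$ rate is refined by bootstrap: the source $-4\Sigma_+\tilde A$ in $\tilde\Sigma'$ either vanishes (case $\tilde A=0$) or, by the dichotomy, is $\O(e^{(8-2\eps)\tau})$ (case $\tilde A>0$, $\slimit=0$), negligible for the claim in either case; the dominant source $-4\Delta N_+$ together with successive improvements of $\Delta$ via the constraint $\Delta^2\le\tilde\Sigma\tilde N+\Sigma_+^2\tilde A$ yields, after finitely many iterations, the claimed $\tilde\Sigma=\O(e^{(6+6\slimit-2\eps)\tau})$, and the $\Delta$ bound follows directly from the constraint.
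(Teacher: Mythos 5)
Your architecture for the later steps --- deducing $\tilde N$, $q$, $\Sigma_+$ and a preliminary bound on $\tilde\Sigma$ from the decay of $N_+$, forcing $\slimit=0$ when $\tilde A>0$ by playing the sign of $\Delta^2=\tilde\Sigma\tilde N-\Sigma_+^2\tilde A$ against the lower bound on $\tilde A$ from Lemma~\ref{lemm_decaytildea}, and then bootstrapping $\Delta$ and $\tilde\Sigma$ through the constraint --- is essentially the paper's argument and is sound. The gap is in the one step that carries all the weight, the decay of $N_+$, where your Jordan-block normalisation is circular. The function $a$ in $\widetilde N_+=N_+-a\Delta$ must solve $a'=6+a(2-q)-2a^2(\tilde\Sigma-\tilde N)$, and both the claimed asymptotics $a\sim6\tau$ and the convergence of the coefficient $(q+2\Sigma_+)-2a(\tilde\Sigma-\tilde N)$ of $\widetilde N_+$ to $2+2\slimit$ require quantitative smallness of $\tau\tilde\Sigma$, $\tau^2\tilde\Sigma$ and (via $\tilde N$) of $\tau N_+^2$. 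At this stage of the proof you only know $\tilde\Sigma\rightarrow0$ and $N_+\rightarrow0$ with no rate at all: the rate for $\tilde\Sigma$ is derived \emph{later} from the rate for $N_+$, and the rate for $N_+$ is exactly what you are proving. If, say, $\tilde\Sigma$ decayed only like $\absval{\tau}^{-1}$, the quadratic term $2a^2\tilde\Sigma\sim72\tau^2\tilde\Sigma$ would dominate the constant $6$ and the Riccati solution would not behave as claimed. Two further problems: on the axis one cannot exclude that $\Delta$ and $N_+$ change sign infinitely often (this is precisely why Lemma~\ref{lemm_deltanplussigns_stiff} and Lemma~\ref{lemm_decaydeltanplus_stiff_notaxis} assume $\stildelimit>0$), so Lemma~\ref{lemm_decaylemmageneral}, stated for positive functions, does not apply to $\widetilde N_+$ or $\Delta$ as such; and your transformed system is only triangular, not scalar, since the $\Delta$ equation retains $2(\tilde\Sigma-\tilde N)\widetilde N_+$ as a source.

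The paper avoids all of this by eliminating $\Delta$ through the constraint rather than diagonalising the linearisation. For $\tilde A=0$ the constraint gives $\Delta^2=\tfrac13\tilde\Sigma N_+^2$ exactly, so $N_+'=(q+2\Sigma_+\pm2\sqrt{3\tilde\Sigma})N_+$ becomes genuinely scalar with coefficient converging to $2+2\slimit$, and Lemma~\ref{lemm_decaylemmageneral} applies. For $\tilde A>0$ it sets $X=N_+\tilde A^{-1/2}$, for which the $(q+2\Sigma_+)$ terms cancel exactly, leaving $X'=6\Delta\tilde A^{-1/2}$; the constraint bound $\absval{\Delta}\le\sqrt{\tilde\Sigma/3}\,\absval{N_+}+\sqrt{\absval{\bparamk}\tilde\Sigma\tilde A/3}$ of Lemma~\ref{lemm_estimatedeltaconstraint} then gives $\absval{(X^2+1)'}\le C\sqrt{\tilde\Sigma}(X^2+1)$, and the crucial point is that only the \emph{qualitative} fact $\tilde\Sigma\rightarrow0$ is needed to conclude $X^2+1=\O(e^{-\eps\tau})$ for every $\eps>0$, hence $N_+=\O(e^{(2+2\slimit-\eps)\tau})$. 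To repair your proof, replace the Riccati construction by this constraint-based elimination.
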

\begin{rema}
\label{rema_limitpointsaxis}
	 For Bianchi class~B solutions ($\tilde A>0)$, we find that the only possible limit point on the~$\Sigma_+$-axis is the origin. 
	 
	 For either Bianchi class, we see that~$\tilde\Sigma$ and~$\Delta$ converge faster to their limit value than they do for limit points off the axis, compare Lemma~\ref{lemm_deltanplussamesign_stiff}, Lemma~\ref{lemm_deltanpluszero_stiff} and Lemma~\ref{lemm_deltanplusrightoftaub2oppositesign_stiff}. Note that the decay of these two variables can be improved even further using the constraint equation~\eqref{eqn_constraintgeneralone}, as~$\Sigma_+^2\tilde A=\O(e^{(12+12\slimit-3\eps)\tau})$.
\end{rema}
\pagebreak
In the proof, we make use of the following estimate for~$\Delta$:
\begin{lemm}
\label{lemm_estimatedeltaconstraint}
	Consider a point~$(\Sigma_+,\tilde\Sigma,\Delta,\tilde A,N_+)\in\RR^5$ satisfying the constraint equation~\eqref{eqn_constraintgeneralone}, with~$\tilde N$ defined as in~\eqref{eqn_definitiontilden}. Then
	\begin{equation}
		0\le \absval\Delta \le \sqrt{\frac{\tilde\Sigma}3}\absval{N_+}+\sqrt{\frac{\absval\bparamk}{3}}\sqrt{\tilde\Sigma \tilde A}.
	\end{equation}
\end{lemm}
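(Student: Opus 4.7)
The plan is to solve the constraint equation explicitly for $\Delta^2$ and then apply an elementary estimate of the form $\sqrt{a+b}\le\sqrt{a}+\sqrt{b}$ for $a,b\ge 0$. Starting from equation~\eqref{eqn_constraintgeneralone} and inserting the definition~\eqref{eqn_definitiontilden} of $\tilde N$, one obtains
\begin{equation}
    \Delta^2 = \tilde\Sigma \tilde N - \Sigma_+^2 \tilde A = \frac{\tilde\Sigma}{3} N_+^2 - \frac{\bparamk \tilde\Sigma \tilde A}{3} - \Sigma_+^2 \tilde A.
\end{equation}
Noting that in the relevant (physical) state space $\tilde\Sigma\ge 0$ and $\tilde A\ge 0$ (otherwise the square roots on the right-hand side of the claimed inequality are not real), the last term is non-positive and can be dropped, and $-\bparamk\le \absval{\bparamk}$ applied to the middle term yields
\begin{equation}
    \Delta^2 \le \frac{\tilde\Sigma}{3} N_+^2 + \frac{\absval{\bparamk}}{3} \tilde\Sigma \tilde A.
\end{equation}

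Both summands on the right are non-negative, so taking square roots and applying the subadditivity inequality $\sqrt{a+b}\le \sqrt{a}+\sqrt{b}$ gives
\begin{equation}
    \absval{\Delta} \le \sqrt{\frac{\tilde\Sigma}{3}}\absval{N_+} + \sqrt{\frac{\absval{\bparamk}}{3}}\sqrt{\tilde\Sigma\tilde A},
\end{equation}
which is the claimed upper bound; the lower bound $\absval{\Delta}\ge 0$ is tautological. There is no significant obstacle here: the statement is a purely algebraic rearrangement of the Hamiltonian constraint combined with a triangle-type inequality for square roots, and the only thing that requires care is the sign of $\bparamk$, which is handled uniformly by passing to $\absval{\bparamk}$.
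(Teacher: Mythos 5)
Your proof is correct and follows essentially the same route as the paper's: drop the non-positive term $-\Sigma_+^2\tilde A$ from the constraint, bound $-\bparamk$ by $\absval{\bparamk}$, and conclude by subadditivity of the square root. The remark about needing $\tilde\Sigma,\tilde A\ge0$ for the estimate to make sense is a fair point that the paper leaves implicit.
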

\begin{proof}
	From the constraint equation~\eqref{eqn_constraintgeneralone}, it follows that
	\begin{equation}
		0\le \Delta^2\le\tilde\Sigma\tilde N\le\frac13\tilde\Sigma N_+^2+\frac{\absval{\bparamk}}3\tilde\Sigma\tilde A.
	\end{equation}
	Taking the square root concludes the proof.
\end{proof}
\begin{proof}[Proof of Lemma~\ref{lemm_decaystiff_axis}]
	We start by proving the decay for~$N_+$, and to do so we distinguish between the two cases~$\tilde A>0$ and~$\tilde A=0$. We begin with the latter. In this case, we find from the definition of~$q$, equation~\eqref{eqn_qwithaandn}, and the constraint equation~\eqref{eqn_constraintgeneralone} that
	\begin{align}
		q={}&2-2\tilde N=2-\frac23 N_+^2,\\
		\Delta^2={}&\frac13\tilde\Sigma N_+^2.
	\end{align}
	Inserting this in the evolution equation~\eqref{eqns_evolutionbianchib} for~$N_+$ reveals
	\begin{align}
		N_+'={}&(2-\frac23 N_+^2+2\Sigma_+)N_+ + 2\sqrt{3\tilde\Sigma}\absval{N_+}\\
		={}&(2-\frac23N_+^2+2\Sigma_+\pm 2\sqrt{3\tilde\Sigma})N_+,
	\end{align}
	where the~$\pm$ accounts for the different possible signs of~$\Delta N_+$. 
	For either sign, the term in the bracket converges to~$2+2\slimit$, and applying Lemma~\ref{lemm_decaylemmageneral} gives the decay of~$N_+$ in the statement.
	
	If on the other hand~$\tilde A>0$, then its decay is given by Lemma~\ref{lemm_decaytildea}. We set 
	\begin{equation}
		X\coloneqq\frac{N_+}{\tilde A^{1/2}}
	\end{equation}
	and compute from the evolutions equations~\eqref{eqns_evolutionbianchib} that
	\begin{equation}
		X'=\frac{6\Delta}{\tilde A^{1/2}}
	\end{equation}
	which, by Lemma~\ref{lemm_estimatedeltaconstraint}, we can estimate with
	\begin{equation}
		\absval{X'}\le 6\left[ \sqrt{\frac{\tilde\Sigma}{3}}\frac{\absval{N_+}}{\tilde A^{1/2}} +\sqrt{\frac{\absval{\bparamk}}3}\sqrt{\tilde\Sigma}\right]
		=2\sqrt{3\tilde\Sigma}\left[\absval{X} +\sqrt{\absval{\bparamk}}\right].
	\end{equation}
	Consequently, setting
	\begin{equation}
		Y\coloneqq X^2+1,
	\end{equation}
	we find that
	\begin{equation}
		\absval{Y'}\le C \sqrt{\tilde\Sigma}\absval Y,
	\end{equation}
	where~$C>0$ is a constant. As~$\tilde\Sigma$ converges to~zero by assumption, this implies that
	\begin{equation}
		Y\le C e^{{-}\eps\tau}
	\end{equation}
	for sufficiently negative times~$\tau$. Applying the decay of~$\tilde A$, we conclude that~$N_+=\O(e^{(2+2\slimit-\eps)\tau})$.

	Having found the decay for~$N_+$ to hold independently of whether~$\tilde A$ vanishes or not, we continue with the remaining variables. We treat both cases simultaneously.
	The decay of~$\tilde N$ follows from its definition, equation~\eqref{eqn_definitiontilden}, together with the decay of~$\tilde A$ from Lemma~\ref{lemm_decaytildea}. 
	The decay for~$q$ is a direct consequence of equation~\eqref{eqn_qwithaandn}. 
	Knowing that~$\Sigma_+$ and~$\tilde\Sigma$ are at least bounded, see Remark~\ref{rema_statespacecompact}, the constraint equation~\eqref{eqn_constraintgeneralone} then implies that~$\Delta=\O(e^{(2+2\slimit-\eps/2)\tau})$.
	
	From the decay of the evolution equations~\eqref{eqns_evolutionbianchib} for~$\Sigma_+$ and~$\tilde\Sigma$,
	\begin{align}
		\Sigma_+'={}&(q-2)\Sigma_+-2\tilde N=\O(e^{(4+4\slimit-\eps)\tau}),\\
		\tilde\Sigma'={}&2(q-2)\tilde\Sigma-4\Sigma_+\tilde A-4\Delta N_+=\O(e^{(4+4\slimit-2\eps)\tau}),
	\end{align}
	combined with the assumption on the limit point, it follows that~$\Sigma_+=\slimit+\O(e^{(4+4\slimit-\eps)\tau})$ and~$\tilde\Sigma=\O(e^{(4+4\slimit-\eps)\tau})$.
	
	To conclude the proof, it remains to show that~$\slimit=0$ in case~$\tilde A>0$, and that~$\tilde\Sigma$ and~$\Delta$ have the improved decay given in the statement. We start with the case~$\tilde A>0$: 
	From the constraint equation~\eqref{eqn_constraintgeneralone}, the decay estimates we have found so far and the lower bound on the decay of~$\tilde A$ from Lemma~\ref{lemm_decaytildea}, we find that
	\begin{equation}
		0\le \Delta ^2 = \tilde\Sigma \tilde N -\Sigma_+^2\tilde A \le{-}\slimit^2 e^{(4+4\slimit+\eps)\tau} + \O(e^{(8+8\slimit-2\eps)\tau}).
	\end{equation}
	Due to the sign, this is only possible for~$\slimit=0$. From the same estimate, we conclude immediately that~$\Delta$ has the claimed decay. 
	In case~$\tilde A=0$, the constraint equation~\eqref{eqn_constraintgeneralone} reads
	\begin{equation}
		\Delta^2=\tilde\Sigma\tilde N=\O(e^{(8+8\slimit-2\eps)\tau})
	\end{equation}
	right away, which proves the decay estimates for~$\Delta$ also in this case.	
	Independently of the sign of~$\tilde A$, the evolution equation~\eqref{eqns_evolutionbianchib} for~$\tilde\Sigma$ then reads
	\begin{equation}
		\tilde\Sigma'=2(q-2)\tilde\Sigma-4\Sigma_+\tilde A-4\Delta N_+=\O(e^{(6+6\slimit-2\eps)\tau}),
	\end{equation}
	which yields the decay for~$\tilde\Sigma$.
\end{proof}
Up to now, all the decay rates we have obtained in the different settings include an~$\eps>0$. We now make use of Lemma~\ref{lemm_decaylemmageneralimproved} to obtain decay rates independent of~$\eps$. In the following lemma and proposition, statements similar to Lemma~6.13 and Proposition~6.14 in~\cite{radermacher_sccogonbianchibperfectfluidsvacuum} are achieved.
\begin{lemm}
\label{lemm_improveddecaytildeA}
	\considerconvtoD. Then either $\tilde A=0$ along the whole solution, or there are constants $c_{\tilde A},C_{\tilde A}>0$ and~$\tau_0$ such that $\tau\le\tau_0$ implies
	\begin{equation}
		c_{\tilde A} e^{(4+4\slimit)\tau}\le \tilde A(\tau)\le C_{\tilde A} e^{(4+4\slimit)\tau}.
	\end{equation}
\end{lemm}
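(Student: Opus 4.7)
The plan is to apply Lemma~\ref{lemm_decaylemmageneralimproved} to the evolution equation for $\tilde A$,
\begin{equation}
	\tilde A' = 2(q+2\Sigma_+)\tilde A.
\end{equation}
Setting $\zeta(\tau)\coloneqq 2(q+2\Sigma_+)(\tau)$ and $\iota\coloneqq 4+4\slimit$, the convergence assumption gives $\zeta\to\iota$ as $\tau\to{-}\infty$. Since the ODE is linear in $\tilde A$, either $\tilde A\equiv 0$ along the whole solution, which yields the statement trivially, or $\tilde A>0$ everywhere, which places us in the setting of Lemma~\ref{lemm_decaylemmageneralimproved} once the correct decay rate for $\zeta-\iota$ is established.

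The remaining task is therefore to verify the hypothesis
\begin{equation}
	\zeta-\iota = 2(q-2) + 4(\Sigma_+-\slimit) = \O(e^{\xi\tau})
\end{equation}
for some $\xi>0$, which I would do by a case analysis over the location of the limit point and the asymptotic sign of $\Delta N_+$. If $\stildelimit>0$, then Lemma~\ref{lemm_deltanplussigns_stiff} separates three regimes. In the regimes $\Delta=N_+=0$ along the whole solution or $\Delta N_+>0$ asymptotically, Lemma~\ref{lemm_deltanpluszero_stiff} resp.\ Lemma~\ref{lemm_deltanplussamesign_stiff} gives both $q-2$ and $\Sigma_+-\slimit$ decaying like $\O(e^{(4+4\slimit-\eps)\tau})$ for every $\eps>0$; since $\slimit\in({-}1,1)$, picking $\eps$ small enough produces $\xi>0$. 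In the regime $\Delta N_+<0$ asymptotically, Lemma~\ref{lemm_deltanplusoppositesign_stiff} forces $(1+\slimit)^2>3\stildelimit$, hence $1+\slimit>\sqrt{3\stildelimit}$, and Lemma~\ref{lemm_deltanplusrightoftaub2oppositesign_stiff} provides the decay $\O(e^{(4+4\slimit-4\sqrt{3\stildelimit}-\eps)\tau})$, again with strictly positive exponent for sufficiently small $\eps$. If $\stildelimit=0$ and $\tilde A>0$, Lemma~\ref{lemm_decaystiff_axis} forces $\slimit=0$ and supplies $q-2,\Sigma_+-\slimit = \O(e^{(4-\eps)\tau})$.

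The main obstacle is to ensure the strict positivity of $\xi$ in every case; this is where the earlier exclusion result, Lemma~\ref{lemm_deltanplusoppositesign_stiff}, plays a crucial role, since in the asymptotic regime $\Delta N_+<0$ the naive exponent $4+4\slimit-4\sqrt{3\stildelimit}$ would vanish or change sign on $\stiffalphalimitzero\cup\stiffalphalimitminus$. Once a common $\xi>0$ has been extracted, an application of Lemma~\ref{lemm_decaylemmageneralimproved} to $\tilde A$ with $\iota=4+4\slimit$ produces the constants $c_{\tilde A},C_{\tilde A}>0$ and the threshold $\tau_0$ for which the two-sided bound holds, concluding the proof.
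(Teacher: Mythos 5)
Your proposal is correct and follows essentially the same route as the paper's proof: apply Lemma~\ref{lemm_decaylemmageneralimproved} to $\tilde A'=2(q+2\Sigma_+)\tilde A$ after establishing exponential convergence of the coefficient to $4+4\slimit$ via the same case split over $\stildelimit=0$ versus $\stildelimit>0$ and the asymptotic sign of $\Delta N_+$, citing Lemmata~\ref{lemm_decaystiff_axis}, \ref{lemm_deltanplussigns_stiff}, \ref{lemm_deltanpluszero_stiff}, \ref{lemm_deltanplussamesign_stiff} and~\ref{lemm_deltanplusrightoftaub2oppositesign_stiff}. Your explicit check that the exponent $4+4\slimit-4\sqrt{3\stildelimit}$ stays positive in the $\Delta N_+<0$ regime, courtesy of Lemma~\ref{lemm_deltanplusoppositesign_stiff}, is a detail the paper leaves implicit but is a welcome addition.
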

\begin{proof}
	We wish to apply Lemma~\ref{lemm_decaylemmageneralimproved} to the evolution equation~\eqref{eqns_evolutionbianchib} for~$\tilde A$. It is enough to show that
	\begin{equation}
		2(q+\Sigma_+)
	\end{equation}
	converges exponentially to~$4+4\slimit$. 
	
	In case~$\stildelimit=0$, this is shown in Lemma~\ref{lemm_decaystiff_axis}. Let us therefore assume that~$\stildelimit>0$. We know from Lemma~\ref{lemm_deltanplussigns_stiff} that~$\Delta N_+$ has constant sign for sufficiently negative times. In Lemma~\ref{lemm_deltanplussamesign_stiff}, we cover the case~$\Delta N_+>0$. If~$\Delta=0=N_+$ along the whole solution, then Lemma~\ref{lemm_deltanpluszero_stiff} implies the result. The remaining case~$\Delta N_+<0$ is treated in Lemma~\ref{lemm_deltanplusrightoftaub2oppositesign_stiff}.
\end{proof}
With this, we now show improved convergence rates for all variables, in the following sense: So far, we have determined the limiting value and the lowest order of decay. We now determine this lowest order in more detail and even find the second lowest order. 

The statement we show applies to solutions converging to a limit point in~$\stiffalphalimitminus$ or~$\stiffalphalimitzero$ and uses the same method of proof as~\cite[Prop.~6.14]{radermacher_sccogonbianchibperfectfluidsvacuum}. The same approach could also be applied to solutions converging to a limit point in~$\stiffalphalimitplus$, however we do not make use of any such statement in this paper.
\begin{prop}
\label{prop_improvedconvergence}
	\considerconvtoD. If~$(1+\slimit)^2\le3\stildelimit$, if $\tilde A>0$, and if $\Delta$ and $N_+$ do not both vanish identically along the solution, then $3\slimit^2+\bparamk\stildelimit=0$ and there are constants $\alpha>0$, $\betaD\not=0$ and $\betaN\not=0$ satisfying
	\begin{equation}
		\betaD=\sqrt{\stildelimit/3}\betaN
	\end{equation}
	\st
	\begin{align}
	\label{eqn_leftoftaub2additionaldecay_sigmaplus_stiff}
		\Sigma_+={}&\slimit-\frac{\alpha \slimit}{2(1+\slimit)\stildelimit}(\slimit+\slimit^2+\stildelimit)e^{(4+4\slimit)\tau}+\O(e^{(8+8\slimit-\eps)\tau}),\\
	\label{eqn_leftoftaub2additionaldecay_tildesigma_stiff}
		\tilde\Sigma={}&\stildelimit
		-\frac{\alpha}{1+\slimit}(\slimit+\slimit^2+\stildelimit) e^{(4+4\slimit)\tau}+\O(e^{(8+8\slimit-\eps)\tau}),\\
		\Delta={}&\betaD e^{(2+2\slimit+2\sqrt{3\stildelimit})\tau}+\O(e^{(6+6\slimit+2\sqrt{3\stildelimit}-\eps)\tau}),\\
		\tilde A={}&\alpha e^{(4+4\slimit)\tau}+\O(e^{(8+8\slimit-\eps)\tau}),\\
		N_+={}&\betaN e^{(2+2\slimit+2\sqrt{3\stildelimit})\tau}+\O(e^{(6+6\slimit+2\sqrt{3\stildelimit}-\eps)\tau}),\\
		\tilde N={}&{-}\frac {\bparamk}3\alpha e^{(4+4\slimit)\tau}+\O(e^{(8+8\slimit-\eps)\tau}),\\
		q={}&2+2\alpha(\frac {\bparamk}3-1)e^{(4+4\slimit)\tau}+\O(e^{(8+8\slimit-\eps)\tau}),
	\end{align}
	as $\tau\rightarrow{-}\infty$, for every $\eps>0$.
\end{prop}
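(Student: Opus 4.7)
The plan is to bootstrap the preliminary decay rates from Section~\ref{section_asymptoticconvergence} into refined expansions, with $\tilde A$ driving the leading corrections for the Bianchi~B quantities and $\Delta+\rfactorstiff N_+$ driving those for the off-diagonal variables. Under the hypotheses $(1+\slimit)^2\le3\stildelimit$ and $\tilde A>0$, Proposition~\ref{prop_convergencemain_unstablepart} immediately yields $3\slimit^2+\bparamk\stildelimit=0$; in particular $\stildelimit>0$ and $\rfactorstiff=\sqrt{\stildelimit/3}$ is a legitimate positive quantity. Since $\Delta$ and $N_+$ are assumed not both identically zero, the same proposition places us in the regime $\Delta N_+>0$ asymptotically, so the convergence rates of Lemma~\ref{lemm_deltanplussamesign_stiff} and the lower bound of Lemma~\ref{lemm_decaydeltanplus_stiff_notaxis} are available.

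For the improved decay of $\tilde A$, I would set $\psi\coloneqq\tilde A\,e^{-(4+4\slimit)\tau}$, which is bounded above and away from zero by Lemma~\ref{lemm_improveddecaytildeA}, and compute
\[
	(\log\psi)' = 2(q-2)+4(\Sigma_+-\slimit) = \O(e^{(4+4\slimit-\eps)\tau})
\]
using Lemma~\ref{lemm_deltanplussamesign_stiff}. Integrability of the right-hand side near $-\infty$ then implies that $\log\psi$ converges, which yields $\tilde A=\alpha e^{(4+4\slimit)\tau}+\O(e^{(8+8\slimit-\eps)\tau})$ for some $\alpha>0$. Inserting this into $\tilde N=(N_+^2-\bparamk\tilde A)/3$, and noting that $N_+^2=\O(e^{(4+4\slimit+4\sqrt{3\stildelimit}-\eps)\tau})$ is of higher order since $\stildelimit>0$, together with $q=2-2\tilde A-2\tilde N$, produces the refined asymptotics for $\tilde N$ and $q$.

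The same bootstrap applied to $\Phi\coloneqq(\Delta+\rfactorstiff N_+)\,e^{-(2+2\slimit+2\sqrt{3\stildelimit})\tau}$, based on the evolution identity~\eqref{eqn_evolutionrdeltaplusnplus}, shows that $\Phi$ converges to some $\tilde\gamma$ with an error of order $e^{(4+4\slimit-\eps)\tau}$. The lower bound of Lemma~\ref{lemm_decaydeltanplus_stiff_notaxis} forces $\tilde\gamma\neq0$. To separate $\Delta$ from $N_+$, I would factor the constraint~\eqref{eqn_constraintforproofs} as
\[
	(\sqrt{\tilde\Sigma}\,N_+-\sqrt3\,\Delta)(\sqrt{\tilde\Sigma}\,N_+ + \sqrt3\,\Delta) = (3\Sigma_+^2+\bparamk\tilde\Sigma)\,\tilde A.
\]
The second left factor is bounded below of order $e^{(2+2\slimit+2\sqrt{3\stildelimit}+\eps)\tau}$, while the cancellation $3\slimit^2+\bparamk\stildelimit=0$ combined with Lemma~\ref{lemm_deltanplussamesign_stiff} gives $3\Sigma_+^2+\bparamk\tilde\Sigma=\O(e^{(4+4\slimit+4\sqrt{3\stildelimit}-\eps)\tau})$; dividing and using $\sqrt{\tilde\Sigma}=\sqrt{\stildelimit}+\O(e^{(4+4\slimit-\eps)\tau})$ yields $\Delta-\rfactorstiff N_+=\O(e^{(6+6\slimit+2\sqrt{3\stildelimit}-\eps)\tau})$. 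Adding this to the refined form of $\Delta+\rfactorstiff N_+$ produces the claimed individual expansions with $\betaD=\tilde\gamma/2$ and $\betaN=\tilde\gamma/(2\rfactorstiff)$, both nonzero, and the identity $\betaD=\rfactorstiff\betaN$ is automatic.

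Finally, substituting the expansions into the evolution equations~\eqref{eqns_evolutionbianchib} for $\Sigma_+$ and $\tilde\Sigma$, using $\bparamk=-3\slimit^2/\stildelimit$ from the constraint $3\slimit^2+\bparamk\stildelimit=0$, and integrating from $-\infty$ to $\tau$ gives the stated first-order corrections; the contribution of $-4\Delta N_+$ to $\tilde\Sigma'$ has rate $4+4\slimit+4\sqrt{3\stildelimit}\ge 8+8\slimit$ and is absorbed into the error $\O(e^{(8+8\slimit-\eps)\tau})$. The main obstacle in this program is the $\Delta$/$N_+$ separation: both sit in the same exponential growth direction of the linearization, and only the combination of the refined asymptotics of $\Delta+\rfactorstiff N_+$ with the constraint-induced cancellation pins down the individual leading coefficients. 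In $\stiffalphalimitzero$ several exponential rates coincide (there $4+4\slimit=2+2\slimit+2\sqrt{3\stildelimit}$), which calls for some care in verifying that cross-terms stay subleading but does not change the structure of the argument.
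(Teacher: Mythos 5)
Your proposal is correct and follows essentially the same route as the paper's proof: the multiplicative bootstrap on $\tilde A e^{-(4+4\slimit)\tau}$ and on $\Delta+\rfactorstiff N_+$ via equation~\eqref{eqn_evolutionrdeltaplusnplus}, the factorisation of the constraint~\eqref{eqn_constraintforproofs} combined with the cancellation $3\slimit^2+\bparamk\stildelimit=0$ to pin down $\Delta-\rfactorstiff N_+$, and the final integration of the $\Sigma_+$, $\tilde\Sigma$ equations. The only cosmetic difference is that you justify $\tilde\gamma\neq0$ explicitly via the lower bound of Lemma~\ref{lemm_decaydeltanplus_stiff_notaxis}, and note that the absorption of the $N_+^2$ and $\Delta N_+$ terms into $\O(e^{(8+8\slimit-\eps)\tau})$ really rests on $(1+\slimit)^2\le3\stildelimit$ rather than merely $\stildelimit>0$.
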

\begin{proof}
	As a consequence of Lemma~\ref{lemm_deltanplusoppositesign_stiff}, we know that~$\Delta N_+>0$ holds for sufficiently negative times. In this setting, the convergence rates of Lemma~\ref{lemm_deltanplussamesign_stiff} apply, and from the relation
	\begin{equation}
		\tilde A(3\slimit^2+\bparamk\stildelimit)=0
	\end{equation}
	we conclude the relation between the location of the limit point and the parameter~$\bparamk$. Note that our assumptions exclude~$\stildelimit=0$, from which we deduce that~$\bparamk\le0$.
	
	\smallskip
	
	In Lemma~\ref{lemm_improveddecaytildeA}, we have shown that~$e^{{-}(4+4\slimit)\tau}\tilde A$ is bounded for sufficiently negative~$\tau$. From the evolution equations~\eqref{eqns_evolutionbianchib}, we find that
	\begin{equation}
		(e^{-(4+4\slimit)\tau}\tilde A)'=(2(q-2)+4(\Sigma_+-\slimit))e^{-(4+4\slimit)\tau}\tilde A
	\end{equation}
	and the decay rates established in Lemma~\ref{lemm_deltanplussamesign_stiff} give $2(q-2)+4(\Sigma_+-\slimit)=\O(e^{(4+4\slimit-\eps)\tau})$. We therefore find
	\begin{equation}
	\label{eqn_auxiliarybetterdecaytildea_stiff}
		e^{-(4+4\slimit)\tau}\tilde A(\tau)=\lim_{\tau\rightarrow{-}\infty}e^{-(4+4\slimit)\tau}\tilde A(\tau)+\int_{{-}\infty}^\tau\O(e^{(4+4\slimit-\eps)\sigma})d\sigma,
	\end{equation}
	which, setting
	\begin{equation}
		\alpha\coloneqq \lim_{\tau\rightarrow{-}\infty}e^{-(4+4\slimit)\tau}\tilde A(\tau),
	\end{equation}
	is equivalent to
	\begin{equation}
		\tilde A=\alpha e^{(4+4\slimit)\tau}+\O(e^{(8+8\slimit-\eps)\tau}).
	\end{equation}
	Even though the exact value of~$\alpha$ is not known, we know that it is positive due to $\tilde A>0$, Further, this form shows that there are no terms of exponential order between $4+4\slimit$ and $8+8\slimit-\eps$.
	
	Combining the improved decay for~$\tilde A$ with the ones from Lemma~\ref{lemm_deltanplussamesign_stiff} implies that
	\begin{align}
		\tilde N={}&\frac13(N_+^2-{\bparamk}\tilde A)={-}\frac {\bparamk}3\alpha e^{(4+4\slimit)\tau}+\O(e^{(8+8\slimit-\eps)\tau}),\\
		q={}&2(1-\tilde A-\tilde N)=2+2\alpha(\frac {\bparamk}3-1)e^{(4+4\slimit)\tau}+\O(e^{(8+8\slimit-\eps)\tau}),
	\end{align}
	as $(1+\slimit)^2\le3\stildelimit$ by assumption. With this, the evolution equations~\eqref{eqns_evolutionbianchib} for~$\Sigma_+$ and~$\tilde\Sigma$ read
	\begin{align}
	\Sigma_+'={}&(q-2)\Sigma_+-2\tilde N\\
		={}&2\alpha((\frac\bparamk3-1)\slimit+\frac{\bparamk}3) e^{(4+4\slimit)\tau} +\O(e^{(8+8\slimit-\eps)\tau}),\\
	\tilde\Sigma'={}&2(q-2)\tilde\Sigma-4\Delta N_+-4\Sigma_+\tilde A\\
		={}&4\alpha((\frac\bparamk3-1)\stildelimit-\slimit )e^{(4+4\slimit)\tau} +\O(e^{(8+8\slimit-\eps)\tau}).
	\end{align}
	Using the relation between~$\slimit$, $\stildelimit$, and~$\bparamk$, integration yields the convergence rates for~$\Sigma_+$ and~$\tilde\Sigma$.
	
	\smallskip
	
	In order to determine the improved decay for~$\Delta$ and~$N_+$, we recall that in the proof of Lemma~\ref{lemm_decaydeltanplus_stiff_notaxis} we found
	\begin{equation}
	    (\Delta+\rfactorstiff N_+)'
		=(2+2\slimit+2\sqrt{3\stildelimit}+\frac{f_1\Delta+f_2N_+}{\Delta+\rfactorstiff N_+})(\Delta+\rfactorstiff N_+),
	\end{equation}
	compare equation~\eqref{eqn_evolutionrdeltaplusnplus}. Note that~$\rfactorstiff$ is defined in~\eqref{eqn_definitionrfactor_stiff} and positive. A closer look at the computation carried out there reveals that the two asymptotically vanishing functions have the form
	\begin{align}
		f_1={}&2q-4+2\Sigma_+-2\slimit,\\
		f_2={}&\rfactorstiff q-2\rfactorstiff+2\rfactorstiff\Sigma_+ -2\rfactorstiff\slimit+2\tilde\Sigma-2\stildelimit-2\tilde N.
	\end{align}
	Due to the previous results, both functions decay as~$\O(e^{(4+4\slimit-\eps)\tau})$. The quotient containing the functions~$f_1,f_2$ inherits this asymptotic decay, as~$\Delta$ and~$\rfactorstiff N_+$ have the same sign. We therefore find
	\begin{equation}
		(e^{-(2+2\slimit+2\sqrt{3\stildelimit})\tau}(\Delta+\rfactorstiff N_+))'=e^{-(2+2\slimit+2\sqrt{3\stildelimit})\tau}(\Delta+\rfactorstiff N_+)\O(e^{(4+4\slimit-\eps)\tau}),
	\end{equation}
	and from this
	\begin{equation}
		e^{-(2+2\slimit+2\sqrt{3\stildelimit})\tau}(\Delta+\rfactorstiff N_+)=\beta+\O(e^{(4+4\slimit-\eps)\tau}),
	\end{equation}
	for some constant $\beta\not=0$ having the same sign as~$\Delta$ and~$N_+$. Consequently, we obtain
	\begin{equation}
	\label{eqn_improveddecaydeltaplusrnplus}
		(\Delta+\rfactorstiff N_+)=\beta e^{(2+2\slimit+2\sqrt{3\stildelimit})\tau}+\O(e^{(6+6\slimit+2\sqrt{3\stildelimit}-\eps)\tau}).
	\end{equation}
	We conclude from Remark~\ref{rema_auxiliarydecayconstraintexpr_stiff} that
	\begin{equation}
		(\sqrt{\frac{\tilde\Sigma}{3}}N_++\Delta)(\sqrt{\frac{\tilde\Sigma}{3}}N_+-\Delta)=\frac13(3\Sigma_+^2+{\bparamk}\tilde\Sigma)\tilde A=\O(e^{(8+8\slimit+4\sqrt{3\stildelimit}-\eps)\tau}).
	\end{equation}
	Due to estimate~\eqref{eqn_improveddecaydeltaplusrnplus}, the first factor on the \lhs\ is of order $e^{(2+2\slimit+2\sqrt{3\stildelimit})\tau}$ and not faster.
	Therefore, the second factor has to compensate by decaying sufficiently fast, implying
	\begin{equation}
	\label{eqn_relationdeltanplusimproveddecay}
		\Delta=\rfactorstiff N_++\O(e^{(6+6\slimit+2\sqrt{3\stildelimit}-\eps)\tau}).
	\end{equation}
	Combining this with \equ~\eqref{eqn_improveddecaydeltaplusrnplus} yields the decay expressions for~$\Delta$ and~$N_+$ in the statement, with $\betaD=\rfactorstiff\betaN$.
	This concludes the proof.
\end{proof}

\section{Asymptotics towards the upper part of the Jacobs set~\texorpdfstring{$\stiffalphalimit$}{J}}
\label{section_convergenceunstablepart}

In this section, we discuss in more detail the behaviour of solutions converging towards the subset~$\stiffalphalimitminus\subset\stiffalphalimit$. From Proposition~\ref{prop_convergencemain_unstablepart}, we know that for such solutions either~$\tilde A=0$ has to hold, or the limit point~$(\slimit,\stildelimit,0,0,0)$ and the parameter~$\bparamk$ are related via~$3\slimit^2+\bparamk\stildelimit=0$. Note that this is only possible for non-positive values of~$\bparamk$.
This means that a solution converging to the subset~$\stiffalphalimitminus$ is either of Bianchi class~A, or it is of Bianchi type~IV, V or~VI$_{\binvparam}$. Further, the limit point lies on a specific arc in~$\stiffalphalimitminus$.

Here, we investigate the possible Bianchi class~B solutions in more detail and show that they have to be contained in sub\mf s of positive codimension. To do so, we apply concepts and results from the theory of dynamical systems, the same way we did in~\cite{radermacher_sccogonbianchibperfectfluidsvacuum} in Sections~8 and~10. For an overview over the notation and technique, we refer to~\cite[Appendix~B]{radermacher_sccogonbianchibperfectfluidsvacuum}. 
The statement we make use of is~\cite[Thm~B.3]{radermacher_sccogonbianchibperfectfluidsvacuum}, determining properties of the so-called centre-unstable \mf~$\C^u$ which contains the maximal negatively invariant set~$A^-(U)$ of some neighborhood of points in~$\stiffalphalimitminus$. This set~$A^-(U)$ is the set of all points which do not leave~$U$ under the evolution in negative time direction, and in particular contains all solutions converging to a limit point in~$U$.

We start this section with a discussion of the linearised evolution equation in the extended five-dimensional state space, by which we mean the linear approximation of the evolution equations~\eqref{eqns_evolutionbianchib}--\eqref{eqn_definitiontilden}, but without restricting them to the constraint equations~\eqref{eqn_constraintgeneralone}--\eqref{eqn_constraintgeneraltwo}.
We have given this vector field explicitly, for points contained in the Jacobs set~$\stiffalphalimit$, in Appendix~\ref{appendix_linearisedevolution}, and also stated its eigenvalues. For points in~$\stiffalphalimitminus$, exactly four out of the five eigenvalues are non-negative, counting zero twice due to being a double eigenvalue.

According to~\cite[Thm~B.3]{radermacher_sccogonbianchibperfectfluidsvacuum}, the centre-unstable \mf\ of the linearised evolution is tangent to the subspace spanned by the eigenvectors to non-negative eigenvalues. Therefore, we find that solutions converging to a limit point in~$\stiffalphalimitminus$ as~$\tau\rightarrow{-}\infty$ are contained in a certain four-dimensional \mf\ in the extended state space~$\RR^5$ for sufficiently negative times.

Compared to the asymptotic behaviour in non-stiff Bianchi fluid which has been discussed in~\cite{radermacher_sccogonbianchibperfectfluidsvacuum}, we find the following connection: In non-stiff fluids, part of the possible limit points are contained in the equilibrium arc~$(\slimit,1-\slimit^2,0,0,0)$, $\slimit\in[{-}1,1]$, called the Kasner parabola~$\kasnerparabola$, which lies in the boundary of the Jacobs set~$\stiffalphalimit$. On the arc of~$\kasnerparabola$ where~$\slimit\in({-}1,1/2)$, one out of the five eigenvalues of the linearised evolution equation is negative, the number of zero eigenvalues equals the dimension of the equilibrium set (dimension one), all others are positive. This is the same setting we encounter for the set~$\stiffalphalimitminus$ in stiff fluids. The investigation of the Kasner subarc in question, carried out in~\cite[Sect.~8]{radermacher_sccogonbianchibperfectfluidsvacuum}, reveals that possible Bianchi class~B non-stiff fluid solutions converging to this arc are contained in a  countable union of~$C^1$ sub\mf s of positive codimension. The statement we show in this section mirrors that result.
\begin{prop}
\label{prop_centreunstablemf_stiff}
	Consider the evolution equations~\eqref{eqns_evolutionbianchib}--\eqref{eqn_definitiontilden} in the extended state space, \ie without assuming the constraint equations~\eqref{eqn_constraintgeneralone}--\eqref{eqn_constraintgeneraltwo}. If~$K$ is a compact subset of~$\stiffalphalimitminus$, then there is a neighborhood~$U$ of~$K$ and a four-dimensional~$C^1$ sub\mf~$\centreunstable$ with the following properties:
	\begin{itemize}
		\item $\centreunstable$ contains the set~$K$ and in each point in~$K$ is tangent to those eigenvectors of the linearised evolution equations in the extended five-dimensional space, given in Appendix~\ref{appendix_linearisedevolution}, which correspond to eigenvalues with non-negative real parts.
		\item Points in~$U$ are either contained in~$\centreunstable$ or their evolution under equations~\eqref{eqns_evolutionbianchib}--\eqref{eqn_definitiontilden} leaves~$U$ as~$\tau\rightarrow{-}\infty$.
	\end{itemize}
\end{prop}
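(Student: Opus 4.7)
The plan is to invoke Theorem~B.3 of~\cite{radermacher_sccogonbianchibperfectfluidsvacuum}, which provides existence of a centre-unstable $C^1$ submanifold for a compact subset of an equilibrium set of a smooth dynamical system, with dimension equal to the number of non-negative eigenvalues of the linearisation (counted with multiplicity) and with tangency and maximality properties precisely matching the two bullets of the proposition. The strategy is thus to verify the hypotheses of that theorem for the extended evolution equations~\eqref{eqns_evolutionbianchib}--\eqref{eqn_definitiontilden} and read off the conclusion.

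First I would note that every point of $\stiffalphalimit$ is an equilibrium of the extended system: this follows directly from Definition~\ref{defi_setD} upon inspecting the right-hand sides of~\eqref{eqns_evolutionbianchib} together with~\eqref{eqn_definitiontilden} (and requires no appeal to the constraint equations). Next I would recall the spectrum of the linearisation computed in Appendix~\ref{appendix_linearisedevolution} and stated in the introduction, namely
\begin{equation}
    0,\qquad 2(1+\Sigma_+\pm\sqrt{3\tilde\Sigma}),\qquad 4(1+\Sigma_+),\qquad 0,
\end{equation}
and observe that on $\stiffalphalimitminus$, where $(1+\Sigma_+)^2<3\tilde\Sigma$, exactly one eigenvalue, $2(1+\Sigma_+-\sqrt{3\tilde\Sigma})$, is strictly negative, while the remaining four are non-negative. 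The double zero eigenvalue reflects the two-dimensional equilibrium structure of $\stiffalphalimit$, and the other two non-negative eigenvalues $2(1+\Sigma_++\sqrt{3\tilde\Sigma})$ and $4(1+\Sigma_+)$ are strictly positive on $\stiffalphalimit\setminus\{\taubone\}$.

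Since $K$ is a compact subset of the open (in $\stiffalphalimit$) subset $\stiffalphalimitminus$, the negative eigenvalue stays uniformly bounded away from zero on $K$, yielding the spectral gap required by Theorem~B.3. The theorem then produces a neighborhood $U$ of $K$ and a $C^1$ submanifold $\centreunstable$ whose dimension equals the number of non-negative eigenvalues, namely four, which is tangent at each point of $K$ to the direct sum of centre and unstable eigenspaces of the linearisation, and which contains the maximal negatively invariant subset $A^-(U)$. The first bullet of the proposition is then immediate, and the second bullet is exactly the statement that any point of $U$ whose backward orbit under~\eqref{eqns_evolutionbianchib}--\eqref{eqn_definitiontilden} remains in $U$ must lie in $A^-(U)\subset \centreunstable$.

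The main point requiring care is to make sure the theorem is applied to the unconstrained (five-dimensional) system, so that $\centreunstable$ is a submanifold of $\RR^5$ rather than of the constraint hypersurface; the linearisation, the splitting into centre, stable and unstable eigenspaces, and the neighborhood $U$ must all be understood in $\RR^5$. Since the constraint equations~\eqref{eqn_constraintgeneralone}--\eqref{eqn_constraintgeneraltwo} are preserved by the evolution, any physically relevant solution with $\alpha$-limit point in $K$ automatically lies in the intersection of $\centreunstable$ with the constraint set, but this intersection is not what the theorem constructs and need not be addressed here. Beyond this bookkeeping, no further obstacle is expected, and the proof reduces to one sentence citing Theorem~B.3.
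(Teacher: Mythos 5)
Your proposal is correct and follows essentially the same route as the paper: both verify that $\stiffalphalimitminus$ consists of equilibria of the extended system, count the four non-negative eigenvalues of the linearisation from Appendix~\ref{appendix_linearisedevolution}, and invoke Theorem~B.3 of~\cite{radermacher_sccogonbianchibperfectfluidsvacuum} (using smoothness of the polynomial vector field to get $C^1$, in fact any finite $C^r$) to obtain $U$ and the four-dimensional centre-unstable manifold $\centreunstable\supset A^-(U)$ with the stated tangency. Your additional remarks on the uniform spectral gap over the compact set $K$ and on keeping the construction in the unconstrained $\RR^5$ are consistent with, and slightly more explicit than, the paper's own proof.
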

\begin{rema}
	We see in the proof that we could change the regularity of the manifold~$\centreunstable$ to~$C^r$, for some~$r<\infty$. Further, as all eigenvalues on~$\stiffalphalimitminus$ are real, the \mf~$\centreunstable$ is tangent to those eigenvectors which correspond to non-negative eigenvalues.
\end{rema}
\begin{rema}
	This statement in particular applies to every solution to the evolution \equ s~\eqref{eqns_evolutionbianchib}--\eqref{eqn_evolutionomega} which converges to a point $(\slimit,\stildelimit,0,0,0)\in K$ as $\tau\rightarrow{-}\infty$, for a given compact set~$K\subset\stiffalphalimitminus$: There is a time~$\tau_0$ such that the solution is contained in the neighborhood~$U$ for all times~$\tau\le\tau_0$. Therefore, the solution has to be contained in the sub\mf~$\centreunstable$ for~$\tau\le\tau_0$.
\end{rema}
\begin{proof}
	The set~$\stiffalphalimitminus$ is a two-dimensional manifold consisting of equilibrium points of the evolution equations~\eqref{eqns_evolutionbianchib}--\eqref{eqn_definitiontilden}.
	We apply~\cite[Thm~B.3]{radermacher_sccogonbianchibperfectfluidsvacuum} to find a centre-unstable \mf~$\C^u$ near~$K\subset\stiffalphalimitminus$ as well as a neighborhood~$U$ of~$K$ such that every negatively invariant set~$A^-(U)$ is contained in~$\C^u$. Without loss of generality, we restrict the \mf~$\C^u$ to the open set~$U$.
	
	In every point~$m=(\Sigma_+,\tilde\Sigma,0,0,0)\in K$, the \mf~$\C^u$ is tangent to the invariant subspaces~$E_m^s\oplus E_m^c$, see Definition~B.2 in~\cite{radermacher_sccogonbianchibperfectfluidsvacuum} and the adjacent text. In our case, these spaces are spanned by the eigenvectors to eigenvalues
	\begin{equation}
		0 \qquad 2(1+\Sigma_++\sqrt{3\tilde\Sigma}) \qquad 4(1+\Sigma_+) \qquad 0.
	\end{equation}
	Note that the eigenspace associated with the eigenvalue zero is two-dimensional, see Appendix~\ref{appendix_linearisedevolution}.
	
	The evolution equations~\eqref{eqns_evolutionbianchib} are polynomial and therefore~$C^\infty$.
	When applying~\cite[Thm~B.3]{radermacher_sccogonbianchibperfectfluidsvacuum}, we can therefore choose every finite~$r$, for example~$r=1$.
	This concludes the proof.
\end{proof}

In the previous statement, we considered the evolution in the extended state space and found that solutions converging to a point in~$\stiffalphalimitminus$ have to be contained in the \mf~$\centreunstable$ which has positive codimension in~$\RR^5$. Now, we wish to transfer the result to the set of points such that the constraint equations~\eqref{eqn_constraintgeneralone}--\eqref{eqn_constraintgeneraltwo} are satisfied. In particular, we are interested in the intersection between this constraint surface and the manifold~$\centreunstable$ we determined in Proposition~\ref{prop_centreunstablemf_stiff}, as we want to find the codimension of this intersection set in the set determined by the constraint equations.

For points on the arc in~$\stiffalphalimitminus$ we are interested in, the constraint surface is singular, see Remark~\ref{rema_singularconstraintequ}. Information on the normal and tangent directions of the constraint surface and the \mf~$\centreunstable$ in these points therefore does not suffice to conclude statements on the codimension. We additionally make use of the convergence rates we determined in Proposition~\ref{prop_improvedconvergence}.

\begin{theo}
\label{theo_unstablepartfamilymanifolds}
	Let~$\bparamk\not=0$ and consider the set of solutions to equations~\eqref{eqns_evolutionbianchib}--\eqref{eqn_evolutionomega} converging to~$(\slimit,\stildelimit,0,0,0)\in\stiffalphalimitminus$ as~$\tau\rightarrow{-}\infty$. If~$\tilde A>0$, then~$\bparamk<0$ and
	\begin{equation}
		3\slimit^2+\bparamk\stildelimit=0,
	\end{equation}
	and the solution is contained in one of the following subsets:
	\begin{itemize}
		\item The invariant set satisfying~$\tilde A>0$, $\Delta=0=N_+$, $3\Sigma_+^2+\bparamk\tilde\Sigma=0$.
		\item A countable union of $C^1$ sub\mf s satisfying $\tilde A>0$, and~$\Delta$, $N_+$ not both vanishing identically. The sub\mf s are of dimension at most three, and in the set describing Bianchi type~VI$_{\binvparam}$ solutions to the evolution equations~\eqref{eqns_evolutionbianchib}--\eqref{eqn_evolutionomega}, \ie in the set
		\begin{equation}
			B(VI_{\binvparam}) ={}\{\eqref{eqn_constraintgeneralone}-\eqref{eqn_constraintgeneraltwo}\text{ hold},\,{\bparamk}=\nicefrac1{\binvparam}<0,\, \tilde A>0\},
		\end{equation}
		have codimension at least~one.
	\end{itemize}
\end{theo}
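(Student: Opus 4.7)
I would first extract the algebraic constraint on the limit point. Proposition~\ref{prop_convergencemain_unstablepart} applies since a limit in $\stiffalphalimitminus$ means $(1+\slimit)^2 < 3\stildelimit$, forcing $\stildelimit>0$. Combined with $\tilde A>0$ it gives $3\slimit^2+\bparamk\stildelimit=0$; the hypothesis $\bparamk\ne 0$ together with $\stildelimit>0$ then rules out $\slimit=0$, so $\bparamk=-3\slimit^2/\stildelimit<0$. I then split on whether $(\Delta,N_+)\equiv(0,0)$ along the solution: in that case \eqref{eqn_definitiontilden} yields $\tilde N=-\bparamk\tilde A/3$ and \eqref{eqn_constraintgeneralone} collapses to $\tilde A(3\Sigma_+^2+\bparamk\tilde\Sigma)=0$; since $\tilde A>0$ this places the whole trajectory in the first invariant set of the statement.

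For the remaining solutions I would invoke Proposition~\ref{prop_centreunstablemf_stiff}. Exhaust $\stiffalphalimitminus$ by an increasing sequence of compact subsets $K_1\subset K_2\subset\cdots$, and for each $K_n$ take the neighborhood $U_n$ and the four-dimensional $C^1$ sub\mf\ $\centreunstable_n\subset\RR^5$ provided by the proposition. Any such solution has its limit point in some $K_n$, so the remark following Proposition~\ref{prop_centreunstablemf_stiff} puts it into $\centreunstable_n$ for all sufficiently negative $\tau$. To descend from $\RR^5$ to $B(VI_\binvparam)$, intersect $\centreunstable_n$ with the constraint hypersurface~\eqref{eqn_constraintgeneralone}; inside $B(VI_\binvparam)$ we have $\tilde A>0$ and $\bparamk<0$, so by Remark~\ref{rema_singularconstraintequ} the gradient~\eqref{eqn_gradientconstraintequ} is non-zero and the constraint defines a smooth four-manifold there. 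The whole trajectory lies in $\centreunstable_n\cap\{\text{constraint}\}$, and when this intersection is transverse it is a three-dimensional $C^1$ sub\mf\ of $\RR^5$, i.e.\ of codimension one in $B(VI_\binvparam)$; the countable union over $n$ then produces the family claimed.

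The main obstacle is transversality. At the limit point itself, $\centreunstable_n$ is tangent to $\stiffalphalimitminus$ and the constraint gradient vanishes simultaneously, so one cannot argue directly there. I expect to resolve this using the sharp leading-order expansions of Proposition~\ref{prop_improvedconvergence}: the coefficients $\alpha$ and $\betaN$ are non-zero, so the solution enters $\{\tilde A>0\}$ along a specific direction determined by the unstable eigenvectors in Appendix~\ref{appendix_linearisedevolution}, and comparing this direction with the asymptotic form of~\eqref{eqn_gradientconstraintequ} should yield transversality for all sufficiently negative $\tau$. Any residual non-transversal locus can be covered by refining the countable family, using the $C^1$-regularity of $\centreunstable_n$ and the implicit function theorem in the spirit of the non-stiff argument \cite[Sect.~8]{radermacher_sccogonbianchibperfectfluidsvacuum}.
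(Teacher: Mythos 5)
Your overall architecture matches the paper's proof: the same extraction of $3\slimit^2+\bparamk\stildelimit=0$ and $\bparamk<0$, the same split into the invariant set $\Delta=0=N_+$ versus the rest, the same use of Proposition~\ref{prop_centreunstablemf_stiff} with a compact exhaustion, and the same recognition that the vanishing of the constraint gradient at the limit point (Remark~\ref{rema_singularconstraintequ}) is the obstacle to transversality. However, your proposed resolution of that obstacle contains a genuine gap. You suggest comparing ``the direction along which the solution enters $\{\tilde A>0\}$'' with the asymptotic form of the gradient~\eqref{eqn_gradientconstraintequ}. That comparison is doomed to give zero: the solution lies in the constraint surface, so its direction of approach is asymptotically annihilated by the constraint gradient. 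Concretely, the rescaled gradient $e^{-(4+4\slimit)\tau}\alpha^{-1}\nabla(\text{constraint})$ converges to $({-}2\slimit,{-}\bparamk/3,0,0,0)$, and a direct computation (using $3\slimit^2+\bparamk\stildelimit=0$) shows this vector is orthogonal to the eigenvectors of \emph{all three} positive eigenvalues $2+2\slimit\pm2\sqrt{3\stildelimit}$ and $4+4\slimit$ --- in particular to the $4+4\slimit$-eigendirection along which the trajectory actually approaches the limit point by Proposition~\ref{prop_improvedconvergence}. So no transversality can be read off from the trajectory's own direction, and the ``residual non-transversal locus'' you propose to sweep up by refining the countable family is in fact everything.

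The step you are missing is that transversality of $\centreunstable$ with the constraint surface comes from the \emph{zero-eigenvalue} directions, not the unstable ones. The limiting normal $({-}2\slimit,{-}\bparamk/3,0,0,0)$ is non-zero precisely because $\bparamk\not=0$ forces $\slimit\not=0$, and it lies in the span of $(1,0,0,0,0)$ and $(0,1,0,0,0)$, i.e.\ in the two-dimensional kernel of the linearisation, which is tangent to $\centreunstable$ along $K_m$. Hence the normal to the constraint surface has a non-vanishing component in $T\centreunstable$, so $T\centreunstable\not\subset T(\text{constraint surface})$ near $K_m$, and the two four-dimensional \mf s meet transversally in a \mf\ of dimension at most three. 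This is exactly where the hypothesis $\bparamk\not=0$ enters, as the paper points out in the remark following the theorem. A second, smaller inaccuracy: the containment in $\centreunstable_n\cap\{\text{constraint}\}$ holds only for $\tau\le\tau_0$, not for the whole trajectory; the paper recovers the full solution by pushing the intersection \mf\ forward under the flow at integer times, which is what produces the \emph{countable} union in the statement.
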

\begin{proof}
	We have discussed convergence towards~$\stiffalphalimitminus$ in Proposition~\ref{lemm_deltanplussamesign_stiff}. The relation between~$\slimit$, $\stildelimit$ and~$\bparamk$ follows directly from this statement, as we assume~$\tilde A>0$. The case~$\bparamk=0$ is excluded by assumption, and for~$\bparamk>0$, the only way to satisfy this relation is with~$\slimit=0=\stildelimit$. However, the point~$(0,0,0,0,0)$ is not contained in~$\stiffalphalimitminus$.
	
	The set~$\Delta=0=N_+$ is invariant under the evolution equations~\eqref{eqns_evolutionbianchib}--\eqref{eqn_evolutionomega}. Every solution satisfying this relation at one time does so at all times. This is the setting of the first case, and the relation between~$\Sigma_+$ and~$\tilde\Sigma$ follows directly from the constraint equation~\eqref{eqn_constraintgeneralone} and~$\tilde A>0$.
	
	Let us turn to the second case. We fix a countable family of compact subarcs~$K_m$, $m\in\NN$, exhausting the arc~$\{(\slimit,\stildelimit,0,0,0)\in\stiffalphalimitminus\, | \,3\slimit^2+\bparamk\stildelimit=0\} $. 
	For a given solution converging to a limit point on this arc, we fix a compact set~$K_m$ containing the limit point.
	To this~$K_m$, we then apply the previous proposition and find that the solution is contained in the four-dimensional \mf~$\centreunstable$ for sufficiently negative times. At the same time, the solution has to satisfy the constraint equation~\eqref{eqn_constraintgeneralone}. 
	To conclude, we now have to show that the intersection between the constraint surface and the sub\mf~$\centreunstable$ is either empty or tranverse, if we additionally assume that~$\tilde A>0$ and~$\Delta$, $N_+$ do not both vanish identically. We do this by comparing the spanning directions of~$\centreunstable$ to the normal direction of the constraint surface, \ie the gradient of the constraint equation~\eqref{eqn_constraintgeneralone}.
	
	We consider solutions with~$\tilde A>0$, $\Delta$ and~$N_+$ not both vanishing identically, and~$\bparamk<0$. Due to Remark~\ref{rema_singularconstraintequ}, the gradient of the constraint equation~\eqref{eqn_gradientconstraintequ} is non-zero along such solutions and therefore the set of points defined by the constraint equation~\eqref{eqn_constraintgeneralone} is smooth close to such a solution. In the limit point however, the relation between~$\slimit$, $\stildelimit$ and~$\bparamk$ implies that the gradient vanishes. 
	In this point, we therefore cannot conclude transversality from computing the scalar product between the (vanishing) gradient and the vectors spanning the sub\mf~$M$. Instead, we apply the improved decay properties determined in Lemma~\ref{lemm_improveddecaytildeA} to the gradient of the constraint equation~\eqref{eqn_gradientconstraintequ} and find that close to a limit point~$(\slimit,\stildelimit,0,0,0)\in\stiffalphalimitminus$ with~$3\slimit^2+\bparamk\stildelimit=0$, the gradient has the form
	\begin{equation}
		({-}2\slimit ,{-}\frac\bparamk3 ,0,0,0)\alpha e^{(4+4\slimit)\tau} + \O(e^{(\min(8+8\slimit,2+2\slimit+2\sqrt{3\stildelimit})-\eps)\tau}).
	\end{equation}
	The last term in this expression is understood as a vector in~$\RR^5$ such that every single component has the decay given, which is faster than the decay of the first vector. We can therefore normalise the gradient of the constraint equation by multiplying it with~$e^{{-}(4+4\slimit)\tau}/\alpha$ and obtain a well-defined non-vanishing gradient direction even up to the limit point~$(\slimit,\stildelimit,0,0,0)$. Here, we make use of the fact that~$\bparamk\not=0$ and thus~$\slimit\not=0$.
	
	Consider now a small neighborhood of the subarc~$K_m$, and intersect it with the set defined by the constraints~\eqref{eqn_constraintgeneralone}--\eqref{eqn_constraintgeneraltwo}. By the previous discussion, the normal direction to the constraint surface close to~$K_m$ is a small perturbation of the normalised gradient direction we determined above, thus contained in a small cone around the vector~$({-}2\slimit ,{-}\bparamk/3 ,0,0,0)$. We are interested in points where there is a non-empty intersection with the \mf~$M$. The vectors spanning~$M$ in a point close to~$K_m$ are small perturbations of the vectors spanning~$M$ in points contained in~$K_m$. 
	By the previous proposition, these are the eigenvectors to non-negative eigenvalues of the linearised evolution equations, see Appendix~\ref{appendix_linearisedevolution} for their explicit form.
	Computing the  scalar product of these eigenvectors with the vector~$({-}2\slimit ,{-}\bparamk/3 ,0,0,0)$ reveals that the gradient direction is \ogon\ to the eigenvectors to eigenvalues~$2+2\slimit\pm2\sqrt{3\stildelimit}$ and~$4+4\slimit$. No transversality can be concluded for slightly perturbed direction vectors. For the eigenspace associated with the eigenvalue~$0$ however, we see that any vector contained in a sufficiently small cone around the vector~$({-}2\slimit ,{-}\bparamk/3 ,0,0,0)$ is contained in suitable slight perturbation of the space spanned by the vectors~$(1,0,0,0,0)$ and~$(0,1,0,0,0)$.
	We therefore conclude the following:
	In a sufficiently small neighborhood of the arc~$K_m$, the constraint surface and the \mf~$M$ intersect transversally in a sub\mf\ of dimension at most three. As the set~$B(VI_\binvparam)$ is of dimension four, this implies a positive codimension.
	
	We now apply the flow corresponding to the evolution equations and integer times to the intersection \mf\ obtained for~$K_m$. As the flow is a diffeomorphism coming from a polynomial evolution equation, the resulting set is a countable union of $C^1$ sub\mf s of dimension at most three and codimension at least~one. Taking the union over all subarcs~$K_m$ then yields a countable union of sub\mf s with the requested properties which contains all solutions listed in the second case of the statement. This concludes the proof.
\end{proof}
\begin{rema}
	In the statement, we excluded the case~$\bparamk=0$. It becomes obvious in the proof why this is necessary when using the method we chose: As a consequence of the relation \begin{equation}
		3\slimit^2+\bparamk\stildelimit=0,
	\end{equation}
	allowing for~$\bparamk=0$ would imply that~$\slimit=0$ is no longer excluded. In this case the rescaled gradient direction
	\begin{equation}
		({-}2\slimit ,{-}\frac\bparamk3 ,0,0,0),
	\end{equation}
	while non-vanishing for all solution where~$\bparamk\not=0$ and~$\slimit\not=0$, becomes the null vector, for which computing the scalar product to check \ogon ity becomes meaningless.
	
	For a solution of Bianchi types~IV or~V, which implies~$\bparamk=0$, we however do know that the limit point has to satisfy~$\slimit=0$.
\end{rema}

\section{Asymptotics towards the special arc in the Jacobs set~\texorpdfstring{$\stiffalphalimit$}{J}}
\label{section_convergencespecialarc}

In this section, we investigate the behaviour of solutions converging towards a limit point on the arc~$\stiffalphalimitzero\subset\stiffalphalimit$. 
Making use of Proposition~\ref{prop_convergencemain_unstablepart}, we find that Bianchi type~VII$_{\binvparam}$ solutions are excluded altogether, and in the other Bianchi types of class~B, solutions converge to one of at most two specific points on~$\stiffalphalimitzero$, determined by the Bianchi type and the parameter~$\bparamk$.

On the arc~$\stiffalphalimitzero$, not two but three of the eigenvalues to the linearised evolution equations turn zero, see Appendix~\ref{appendix_linearisedevolution}. This is one more than the dimension of the Jacobs set~$\stiffalphalimit$ and two more than the dimension of the arc itself, and prohibits us from applying dynamical system theory the way we did in Section~\ref{section_convergenceunstablepart}.

In the setting of non-stiff Bianchi perfect fluids investigated in~\cite{radermacher_sccogonbianchibperfectfluidsvacuum}, a similar situation occurs for the point Taub~2: There as well, one of the eigenvalues to the linearised evolution equations on the Kasner parabola~$\kasnerparabola$ can have either sign, and it vanishes exactly in the point Taub~2. Due to the lower dimension in the non-stiff setting, one point on the Kasner parabola~$\kasnerparabola$ versus an arc in the Jacobs set~$\stiffalphalimit$, a direct method could be applied to determine the set containing all solutions converging to a limit point with additional zero eigenvalue. Such an approach has not been found for the stiff fluid setting and the arc~$\stiffalphalimitzero$. Nonetheless, we collect the statements we achieve without such a method.
\begin{prop}
\label{prop_specialarc}
	Consider the set of solutions to equations~\eqref{eqns_evolutionbianchib}--\eqref{eqn_evolutionomega} converging to~$(\slimit,\stildelimit,0,0,0)\in\stiffalphalimitzero$ as~$\tau\rightarrow{-}\infty$. If~$\tilde A>0$, then~$\bparamk\le0$ and the limit point has to satisfy
	\begin{equation}
		3\slimit^2+\bparamk\stildelimit=0.
	\end{equation}
\end{prop}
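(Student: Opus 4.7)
The proof is essentially a direct corollary of Proposition~\ref{prop_convergencemain_unstablepart}, so my plan is quite short.

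The plan is to first observe that any limit point $(\slimit,\stildelimit,0,0,0)\in\stiffalphalimitzero$ satisfies, by definition of~$\stiffalphalimitzero$, the equality $(1+\slimit)^2=3\stildelimit$, and in particular the inequality $(1+\slimit)^2\le 3\stildelimit$. This is exactly the hypothesis of Proposition~\ref{prop_convergencemain_unstablepart}, so I would apply that proposition to the solution under consideration and immediately conclude that
\begin{equation}
    \tilde A\,(3\slimit^2+\bparamk\stildelimit)=0
\end{equation}
holds along the whole solution. Since we assume $\tilde A>0$ at some (and hence, by invariance of $\{\tilde A=0\}$ under \eqref{eqns_evolutionbianchib}, along the whole) solution, the second factor has to vanish, giving the claimed relation $3\slimit^2+\bparamk\stildelimit=0$.

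For the sign of~$\bparamk$, I would argue that on $\stiffalphalimitzero$ one necessarily has $\stildelimit>0$: the equation $(1+\slimit)^2=3\stildelimit$ together with $\stildelimit=0$ forces $\slimit={-}1$, but then $\slimit^2+\stildelimit=1$ contradicts the strict inequality $\Sigma_+^2+\tilde\Sigma<1$ in Definition~\ref{defi_setD} (geometrically, both Taub points lie on the boundary of $\stiffalphalimit$ and are therefore excluded from~$\stiffalphalimitzero$, as already visible in Figure~\ref{figure_subsetsD}). With $\stildelimit>0$, the relation $3\slimit^2+\bparamk\stildelimit=0$ can be solved for $\bparamk={-}3\slimit^2/\stildelimit\le 0$, as claimed.

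There is no real obstacle here: the hard analytic work, namely establishing the algebraic relation between $\slimit$, $\stildelimit$, and~$\bparamk$ along any converging Bianchi class~B solution, has already been done in Section~\ref{section_asymptoticconvergence} via the constraint equation~\eqref{eqn_constraintforproofs} and the decay estimates culminating in Lemma~\ref{lemm_deltanplussamesign_stiff} and Lemma~\ref{lemm_deltanpluszero_stiff}. The present statement merely specialises that result to limit points sitting on~$\stiffalphalimitzero$ and extracts the sign information from $\stildelimit>0$.
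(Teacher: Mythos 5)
Your proof is correct and follows essentially the same route as the paper: both reduce the statement to Proposition~\ref{prop_convergencemain_unstablepart} (applicable since $(1+\slimit)^2=3\stildelimit$ on $\stiffalphalimitzero$) and then read off $3\slimit^2+\bparamk\stildelimit=0$ from $\tilde A>0$. Your sign argument (noting $\stildelimit>0$ on $\stiffalphalimitzero$ and solving for $\bparamk$) is just the contrapositive of the paper's (if $\bparamk>0$ the relation would force $\slimit=\stildelimit=0$, a point not in $\stiffalphalimitzero$), so the two are logically the same observation.
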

\begin{proof}
	For a solution of class~B converging to a point in~$\stiffalphalimitzero$, the relation between~$\bparamk$ and the location of the limit point has been proven in Proposition~\ref{prop_convergencemain_unstablepart}. If~$\bparamk>0$, this relation holds only for~$\slimit=0=\stildelimit$, but this point is not contained in~$\stiffalphalimitzero$. Therefore, in this case no solution converges to~$\stiffalphalimitzero$. 
\end{proof}

\begin{rema}
\label{rema_atmosttwopointsstiffzero}
	Consider solutions to equations~\eqref{eqns_evolutionbianchib}--\eqref{eqn_evolutionomega} and assume that~$\tilde A>0$, \ie restrict to Bianchi class~B. We have found in the previous proposition that all solutions converging to the arc~$\stiffalphalimitzero$ as~$\tau\rightarrow{-}\infty$ have a limit point contained in~$\stiffalphalimitzero\cap\{3\Sigma_+^2+\bparamk\tilde\Sigma=0\}$. We discuss this intersection set depending on the different Bianchi types:	
	\begin{itemize}
		\item Bianchi type~IV and~V: As~$\bparamk=0$, the limit point is the intersection between the arc~$\stiffalphalimitzero$ and the~$\tilde\Sigma$-axis, hence the point
		\begin{equation}
			(0,1/3,0,0,0).
		\end{equation}
		\item Bianchi type~VI$_{\binvparam}$: As~$\bparamk<0$, the set containing all possible limit points is the intersection set of the two parabolas~$\stiffalphalimitzero$ and~$\{3\Sigma_+^2+\bparamk\tilde\Sigma=0\}$ contained in the~$\Sigma_+\tilde\Sigma$-plane. The following has to hold for points satisfying both equations: If~$\bparamk\not={-}9$, then
		\begin{equation}
		\label{eqn_intersectionpoint}
			\Sigma_+=\frac{\pm\sqrt{{-}\bparamk}}{3\mp\sqrt{-\bparamk}},\qquad\tilde\Sigma=\frac3{(3\mp\sqrt{-\bparamk})^2},
		\end{equation}
		otherwise
		\begin{equation}
			(\Sigma_+,\tilde\Sigma)=({-}\frac12,\frac1{12}).
		\end{equation}
		Note that in case~$\bparamk={-}9$, we have excluded 'exceptional' spacetimes, see Remark~\ref{rema_excludeexceptionalBianchi}.
		
		For~$\bparamk={-}1$, the intersection point~\eqref{eqn_intersectionpoint} with the upper choice of sign is the point Taub~2. Using a straight-forward geometrical argument, we therefore conclude that the set~$\stiffalphalimitzero\cap\{3\Sigma_+^2+\bparamk\tilde\Sigma=0\}$ consists of one or two points, depending on the value of~$\bparamk$:
		\begin{itemize}
			\item For~$\bparamk\in({-}\infty,{-}1]$, the upper choice of sign in~\eqref{eqn_intersectionpoint} yields an intersection point located in the complement of~$\stiffalphalimit$. The intersection point with lower choice of sign is located on~$\stiffalphalimitzero$ and has a negative~$\Sigma_+$-value.
			\item For~$\bparamk\in({-}1,0)$, both sign choices in~\eqref{eqn_intersectionpoint} yield intersection points located on~$\stiffalphalimitzero$. The upper choice of sign yields a positive, the lower choice a negative~$\Sigma_+$-value.
		\end{itemize}
	\item Bianchi type~VII$_{\binvparam}$: As~$\bparamk>0$, convergence to~$\stiffalphalimitzero$ is excluded by Proposition~\ref{prop_specialarc}.
	\end{itemize}
\end{rema}

\section{Proof of the main theorem}
\label{section_proofmaintheorem}

We conclude the paper with a proof of the main theorem stated in the introduction, Theorem~\ref{theo_fullmeasurelimitset}. We do so by combining the findings from the previous sections where we have discussed in detail the decay behaviour of the individual variables as well as the asymptotic properties of solutions upon convergence to the different subsets of the Jacobs set~$\stiffalphalimit$.
\begin{proof}[Proof of Theorem~\ref{theo_fullmeasurelimitset}]
	We have seen in Proposition~\ref{prop_convergencetoD} that every solution to the evolution equations~\eqref{eqns_evolutionbianchib}--\eqref{eqn_evolutionomega} converges to a limit point contained in the Jacobs set~$\stiffalphalimit$ as~$\tau\rightarrow{-}\infty$. We prove the theorem by showing that all solutions which have a convergence behaviour different from the one given in the statement are either impossible or have to be contained in countable unions of smooth sub\mf s of positive codimension.
	
	For solutions with~$\tilde A>0$ converging to a limit point~$(\slimit,\stildelimit,0,0,0)\in\stiffalphalimitminus$, Proposition~\ref{prop_convergencemain_unstablepart} implies that~$3\slimit^2+\bparamk\stildelimit=0$ has to hold. In the case~$\bparamk=0$ this implies that~$\slimit=0$.
	The remaining case~$\bparamk\not=0$ has been investigated in Theorem~\ref{theo_unstablepartfamilymanifolds}: Solutions have to be of Bianchi type~VI$_{\binvparam}$, and are contained in a countable union of~$C^1$ sub\mf s of positive codimension.
	
	In Proposition~\ref{prop_specialarc} and Remark~\ref{rema_atmosttwopointsstiffzero}, we have discussed convergence to limit points in~$\stiffalphalimitzero$ and found that Bianchi type~VII$_{\binvparam}$ solutions are excluded. Bianchi type~IV and~V solutions converge to the limit point~$(0,1/3,0,0,0)$. For Bianchi type~VI$_{\binvparam}$ solutions, the limit point has to coincide with one of the at most two elements of the intersection~$\stiffalphalimitzero\cap\{3\Sigma_+^2+\bparamk\tilde\Sigma=0\}$.

	To conclude the proof, we collect the results in the different Bianchi types. 
	For Bianchi type~VII$_{\binvparam}$ solutions, convergence to~$\stiffalphalimitminus$ and~$\stiffalphalimitzero$ is excluded, all solutions must converge to~$\stiffalphalimitplus$.
	In Bianchi type~VI$_{\binvparam}$, solutions which do not converge to~$\stiffalphalimitplus$ can converge to one of the points on~$\stiffalphalimitzero$ determined above, or they converge to~$\stiffalphalimitminus$ and consequently are contained in the countable union of~$C^1$ sub\mf s with positive codimension given in Theorem~\ref{theo_unstablepartfamilymanifolds}. The complement of this union is of full measure and a countable intersection of open and dense sets.
	In Bianchi type~IV, solutions not converging to~$\stiffalphalimitplus$ have a limit point in~$\stiffalphalimitminus$ or~$\stiffalphalimitzero$ satisfying~$\Sigma_+=0$.
	In the case of Bianchi~V, $\Sigma_+=0$ holds along the solution and thus up to the limit point. As all limit points of solutions to the evolution equations~\eqref{eqns_evolutionbianchib}--\eqref{eqn_evolutionomega} are contained in~$\stiffalphalimit$, this concludes the proof.	
\end{proof}

\appendix

\section{The linearised evolution equations}
\label{appendix_linearisedevolution}

In this section, we discuss the linear approximation of the evolution equations~\eqref{eqns_evolutionbianchib}, with~$q$ and~$\tilde N$ defined as in equations~\eqref{eqn_qwithaandn} and~\eqref{eqn_definitiontilden}. Throughout the paper, we refer to this approximation as the linearised evolution equations in the extended five-dimensional state space, because we neglect the constraint equations~\eqref{eqn_constraintgeneralone}--\eqref{eqn_constraintgeneraltwo}.
For points~$(\Sigma_+,\tilde\Sigma,0,0,0)$ contained in the Jacobs set~$\stiffalphalimit$, this is the linear map~$\RR^5\rightarrow\RR^5$ given by the matrix
\begin{equation}
	\begin{pmatrix}
	0 & 0 & 0 & 2(\frac{\bparamk}3-1)\Sigma_++\frac23\bparamk & 0\\
	0 & 0 & 0 & 4(\frac{\bparamk}3-1)\tilde\Sigma-4\Sigma_+ & 0 \\
	0 & 0 & 2\Sigma_++2 & 0 & 2\tilde\Sigma \\
	0 & 0 & 0 & 4+4\Sigma_+ & 0 \\
	0 & 0 & 6 & 0 & 2+2\Sigma_+
	\end{pmatrix}.
\end{equation}
Note that this coincides with the linearised evolution of a non-stiff fluid in Kasner points, given in~\cite[App~A.1]{radermacher_sccogonbianchibperfectfluidsvacuum}, when setting~$\gamma=2$.

As there appear to be typos in the eigenvalues in~\cite[Sect.~4.4]{hewittwainwright_dynamicalsystemsapproachbianchiorthogonalB},
we give here the corrected eigenvalues and state the corresponding eigenvectors.\begin{itemize}
	\item The eigenvalue~$0$ is a double eigenvalue, with a two-dimensional eigenspace tangential to the Jacobs set~$\stiffalphalimit$ spanned by
	\begin{equation}
		(1\,,\,0\,,\,0\,,\,0\,,\,0) \qquad \text{and} \qquad (0\,,\,1\,,\,0\,,\,0\,,\,0).
	\end{equation}
	\item The eigenspace to eigenvalue $2(1+\Sigma_++\sqrt{3\tilde\Sigma})$ lies in the $\Delta N_+$-plane and is spanned by
	\begin{equation}
		(0\,,\,0\,,\,+\frac13\sqrt{3\tilde\Sigma}\,,\,0\,,\,1).
	\end{equation}
	\item The eigenspace to eigenvalue $2(1+\Sigma_+-\sqrt{3\tilde\Sigma})$ lies in the $\Delta N_+$-plane and is spanned by
	\begin{equation}
		(0\,,\,0\,,\,{-}\frac13\sqrt{3\tilde\Sigma}\,,\,0\,,\,1).
	\end{equation}
	\item The eigenspace to eigenvalue $4(1+\Sigma_+)$ is spanned by
	\begin{equation}
		((\frac\bparamk6-\frac12)\Sigma_++\frac{\bparamk}{6}\,,\,{-}\Sigma_++(\frac\bparamk3-1)\tilde\Sigma \,,\,0\,,\,\Sigma_++1\,,\,0).
	\end{equation}
\end{itemize}

\section{The evolution equations for perfect fluid \ogon\ Bianchi class~B models}
\label{appendix_nonstiffevolution}

The setting which we discuss in this paper is a special case of a perfect fluid \ogon\ Bianchi class~B solution. Expansion-normalised variables for the general setting have been introduced in~\cite{hewittwainwright_dynamicalsystemsapproachbianchiorthogonalB} and their asymptotic behaviour towards the initial singularity is discussed in~\cite{radermacher_sccogonbianchibperfectfluidsvacuum}. In addition to the Bianchi parameter~$\bparamk$ which determines the Bianchi type, a parameter~$\gamma\in[0,2]$ is given which specifies the perfect fluid. Compared to the evolution equations~\eqref{eqns_evolutionbianchib}--\eqref{eqn_evolutionomega} which we have given in the introduction, the following two relations differ in the general perfect fluid case, being the only ones which include this parameter~$\gamma$: The decelaration parameter satisfies
\begin{equation}
\label{eqn_definitionqgeneral}
	q=\frac32(2-\gamma)(\Sigma_+^2+\tilde\Sigma)+\frac12(3\gamma-2)(1-\tilde A-\tilde N),
\end{equation}
and the density parameter~$\Omega$ evolves according to
\begin{equation}
\label{eqn_evolutionomegageneral}
	\Omega'=(2q-(3\gamma-2))\Omega.
\end{equation}
Restricting to the case of a stiff fluid is achieved by setting~$\gamma$ to the extremal value~$2$ and assuming~$\Omega>0$ in the general set of evolution equations, see for example~\cite[(5)--(11)]{radermacher_sccogonbianchibperfectfluidsvacuum}.

\bibliographystyle{amsalpha}
\bibliography{researchbib}
\vfill

\end{document}